\theoremstyle{plain}
\newtheorem{theorem}{Theorem}[section]
\newtheorem{lemma}[theorem]{Lemma}
\theoremstyle{definition}
\newtheorem{assumption}[theorem]{Assumption}
\theoremstyle{remark}
\def\eqref#1{equation~\ref{#1}}
\def\Eqref#1{Equation~\ref{#1}}
\numberwithin{equation}{section}
\def\1{\bm{1}}
\DeclareMathAlphabet{\mathsfit}{\encodingdefault}{\sfdefault}{m}{sl}
\SetMathAlphabet{\mathsfit}{bold}{\encodingdefault}{\sfdefault}{bx}{n}
\newcommand{\E}{\mathbb{E}}
\newcommand{\dtv}{\mathrm{d}_{\mathrm{TV}}}
\newenvironment{prevproof}[2]{\noindent {\em {Proof of {#1}~\ref{#2}:}}}{$\Box$\vskip \belowdisplayskip}
\newtheorem{fact}{Fact}
\theoremstyle{plain}
\theoremstyle{remark}
\newtcolorbox{takeawaybox}{
 colback=blue!3!white,          %
 colframe=blue!20!gray!30,      %
 arc=8pt,                       %
 boxrule=0.2pt,                 %
 coltitle=blue!70!black,        %
 fonttitle=\sffamily\bfseries,  %
 boxsep=8pt,                    %
 left=10pt,                     %
 right=10pt,                    %
 top=2pt,                       %
 bottom=2pt,                    %
 drop fuzzy shadow=blue!15!gray!50,  %
 before skip=15pt,              %
 after skip=15pt,               %
 toprule=0pt                    %
}
\NewDocumentCommand\emojilogo{}{
\hspace{-0.35em}\raisebox{-0.25em}{\includegraphics[scale=0.02]{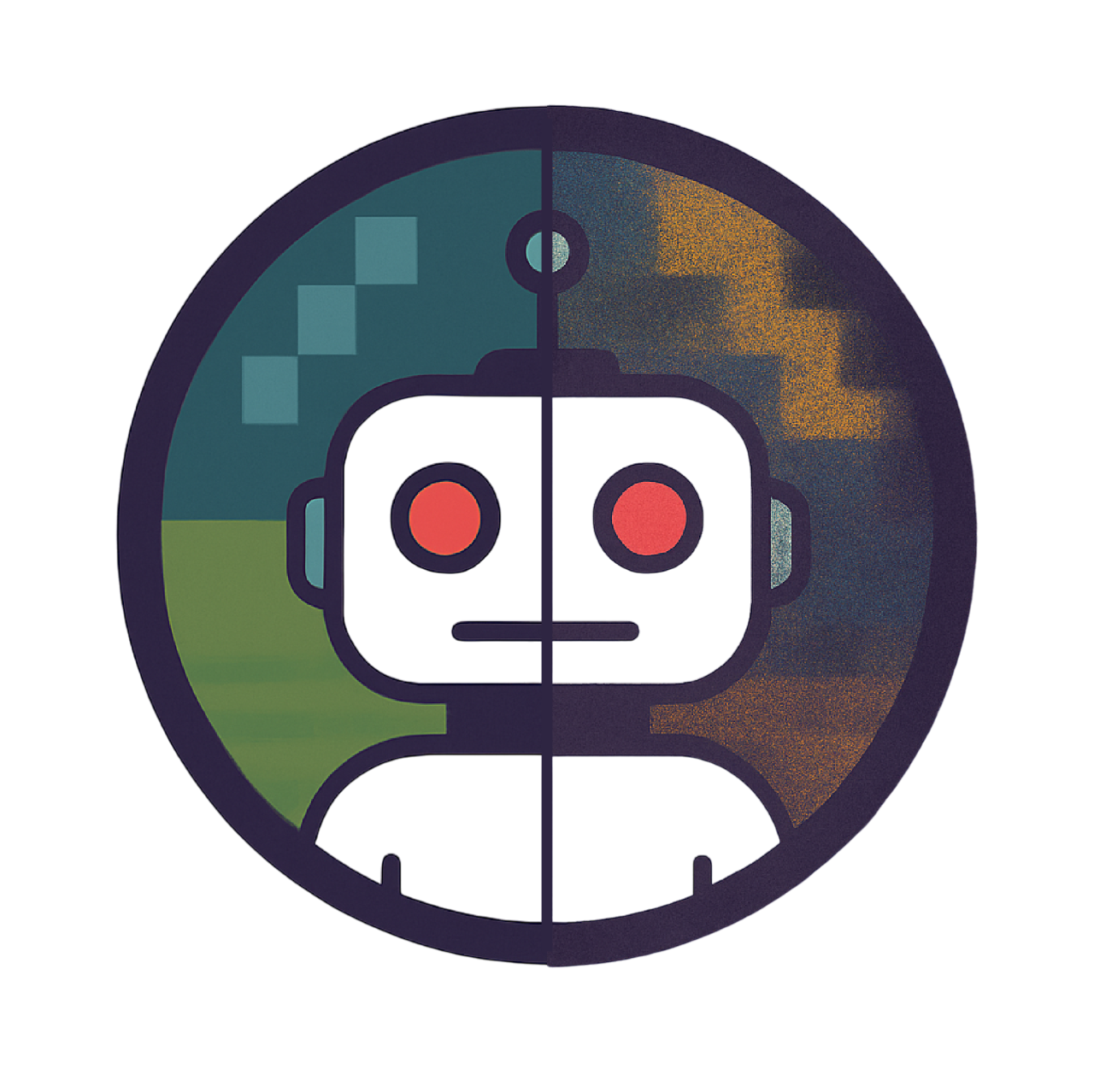}\hspace{-0.35em}}
}
\NewDocumentCommand\smallemojilogo{}{
\hspace{-0.35em}\raisebox{-0.25em}{\includegraphics[scale=0.01]{ambient_logo.png}\hspace{-0.35em}}
}
\title{Ambient Diffusion \emojilogo mni: \\ Training Good Models with Bad Data}
\author{%
  Giannis Daras \thanks{Equal contribution.} \\
  Massachusetts Institute of Technology \\
  \texttt{gdaras@mit.edu} \\
  \And 
  Adrian Rodriguez-Munoz $^{*}$ \\
  Massachusetts Institute of Technology \\
  \texttt{adrianrm@mit.edu} \\
  \And
  Adam Klivans \\
  The University of Texas at Austin \\
  \texttt{klivans@utexas.edu} 
  \And
  Antonio Torralba \\
  Massachusetts Institute of Technology \\
  \texttt{torralba@mit.edu} \\
  \And
  Constantinos Daskalakis \\
  Massachusetts Institute of Technology \\
  \texttt{costis@csail.mit.edu} \\
}
\begin{document}

\maketitle

\begin{abstract}
    We show how to use low-quality, synthetic, and out-of-distribution images to improve the quality of a diffusion model. Typically, diffusion models are trained on curated datasets that emerge from highly filtered data pools from the Web and other sources. We show that there is immense value in the lower-quality images that are often discarded. We present Ambient Diffusion Omni, a simple, principled framework to train diffusion models that can extract signal from all available images during training. Our framework exploits two properties of natural images -- spectral power law decay and locality. We first validate our framework by successfully training diffusion models with images synthetically corrupted by Gaussian blur, JPEG compression, and motion blur. We then use our framework to achieve state-of-the-art ImageNet FID and we show significant improvements in both image quality and diversity for text-to-image generative modeling. The core insight is that noise dampens the initial skew between the desired high-quality distribution and the mixed distribution we actually observe. We provide rigorous theoretical justification for our approach by analyzing the trade-off between learning from biased data versus limited unbiased data across diffusion times.
\end{abstract}

\section{Introduction}
Large-scale, high-quality training datasets have been a primary driver of recent progress in generative modeling. These datasets are typically assembled by filtering massive collections of images sourced from the web or proprietary databases~\cite{gadre2023datacomp,li2024datacomplm,schuhmann2022laion, somepalli2022diffusion, somepalli2023understanding}. The filtering process—which determines which data is retained—is crucial to the quality of the resulting models~\citep{dai2023emu, goyal2024scaling, gadre2023datacomp,  jiang2024adaptive, goyal2024scaling}. However, filtering strategies are often heuristic and inefficient, discarding large amounts of data~\citep{penedo2024fineweb, li2024datacomplm, gadre2023datacomp, dai2023emu}. We demonstrate that the data typically rejected as low-quality holds significant, underutilized value.

Extracting meaningful information from degraded data requires algorithms that explicitly model the degradation process. In generative modeling, there is growing interest in approaches that learn to generate directly from degraded inputs ~\citep{daras2023ambient,daras2024consistent,daras2025how,daras2023consistent,bora2018ambientgan,lu2025sfbd,kelkar2023ambientflow, rozet2024learning, bai2024expectation, aali2023solving, aali2025ambient, shah2025does, zhang2025restoration, liu2025adg, tewari2023diffusion, linimaging}. A key limitation of existing methods is their reliance on knowing the exact form of the degradation. In real-world scenarios, image degradations—such as motion blur, sensor artifacts, poor lighting, and low resolution—are often complex and lack a well-defined analytical description, making this assumption unrealistic.
Even within the same dataset, from ImageNet to internet scale text-to-image datasets, there are samples of heterogeneous qualities~\citep{hendrycks2019benchmarking}, as shown in Figures \ref{fig:top_and_bottom_imagenet}, \ref{fig:top_and_bottom_cc12m}, \ref{fig:top_and_bottom_jdb}, \ref{fig:top_and_bottom_sa1b}. Given access to this mixed-bag of datapoints, we would like to sample from a tilted continuous measure of high-quality images, without sacrificing the diversity present in the training points. 
\begin{figure}[ht]
    \centering
    \includegraphics[width=\linewidth]{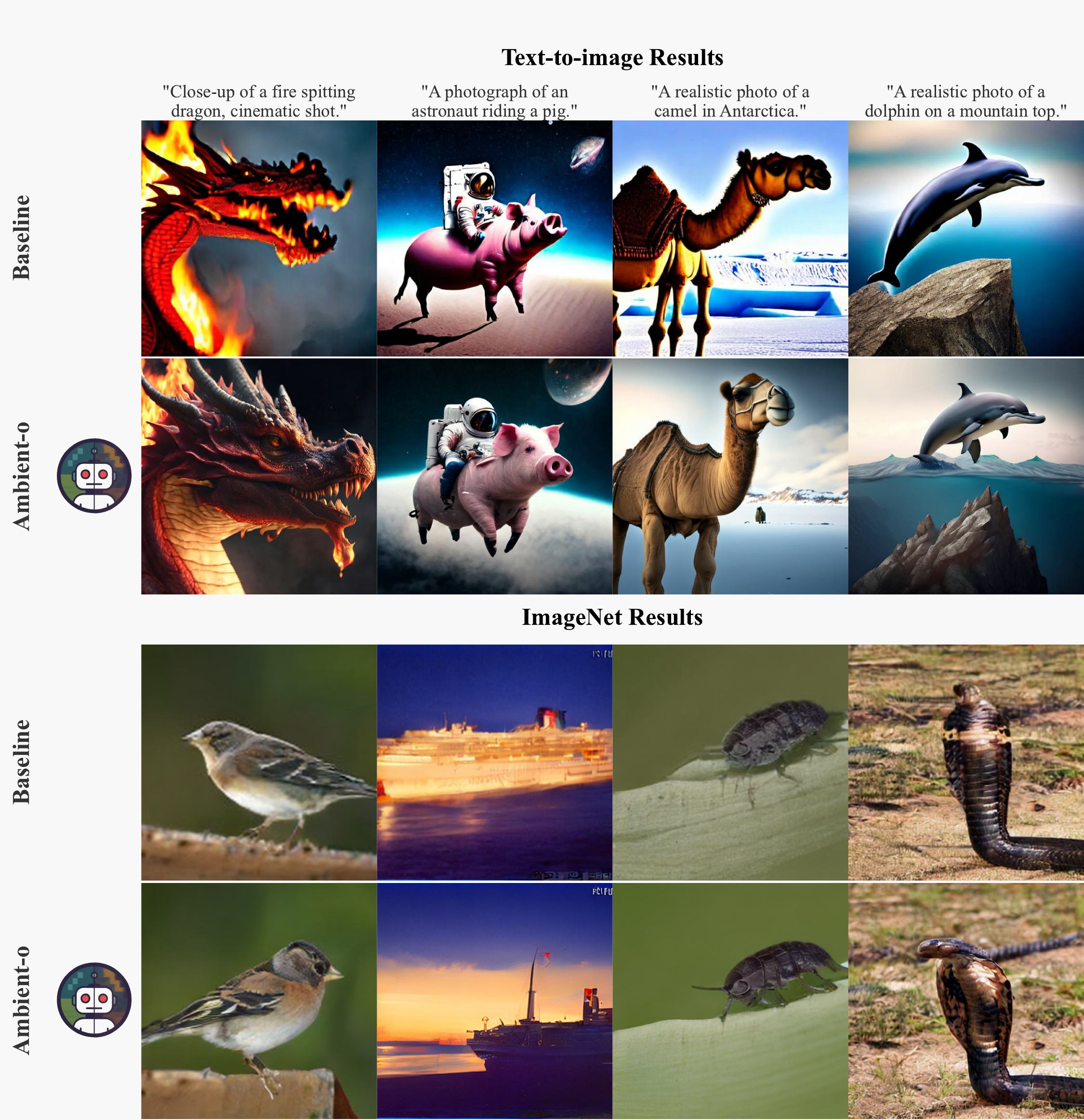}
    \caption{\textbf{Effect of using Ambient-o for (a) training a text-to-image model (Micro-Diffusion~\citep{Sehwag2024MicroDiT}) and (b) a class-conditional model for ImageNet (EDM-2~\citep{Karras2024edm2})}. All generations are initialized with the same noise. The baseline models are trained using all the data equally. Ambient-o changes the way the data is used during the diffusion process based on its quality. This leads to significant visual improvements without sacrificing diversity, as would happen with a filtering approach (see Fig. \ref{fig:diversity}).}
    \label{fig:fig1}
\end{figure}

The training objective of diffusion models naturally decomposes sampling from a target distribution into a sequence of supervised learning tasks~\citep{ddpm,ncsn,ncsnv3, daras2023soft, delbracio2023inversion, chen2022sampling, restoration_degradation}. Due to the power-law structure of natural image spectra~\citep{torralba2024foundations}, high diffusion times focus on generating globally coherent, semantically meaningful content~\citep{dieleman2024spectral}, while low diffusion times emphasize learning high-frequency details.

Our first key theoretical insight is that low-quality samples can still be valuable for training in the high-noise regime. As noise increases, the diffusion process contracts distributional differences (see Theorem~\ref{theorem:smoothing brings closer}), reducing the mismatch between the high-quality target distribution and the available mixed-quality data. At the same time, incorporating low-quality data increases the sample size, reducing the variance of the learned estimator. Our analysis formalizes this bias–variance trade-off and motivates a principled algorithm for training denoisers at high diffusion times using noisy, heterogeneous data.

For low diffusion times, our algorithm leverages a second key property of natural images: locality. We show a direct relationship between diffusion time and the optimal receptive field size for denoising. Specifically, small image crops suffice at lower noise levels. This allows us to borrow high-frequency details from out-of-distribution or synthetic images, as long as the marginal distributions of the crops match those of the target data.

We introduce Ambient Diffusion Omni (Ambient-o), a simple and principled framework for training diffusion models using arbitrarily corrupted and out-of-distribution data. Rather than filtering samples based on binary `good' or `bad' labels, Ambient-o retains all data and modulates the training process according to each sample's utility. This enables the model to generate diverse outputs without compromising image quality.
Empirically, Ambient-o advances the state of the art in unconditional generation on ImageNet and enhances diversity in text-conditional generation without sacrificing fidelity. Theoretically, it achieves improved bounds for distribution learning by optimally balancing the bias–variance trade-off: low-quality samples introduce bias, but their inclusion reduces variance through increased sample size.

We will release all our code and trained models in the following URL: \href{https://github.com/giannisdaras/ambient-omni}{https://github.com/giannisdaras/ambient-omni}.

\section{Background and Related Work}
\paragraph{Diffusion Modeling.}
Diffusion models transform the problem of sampling from $p_0$ into the problem of learning \textit{denoisers} for smoothed versions of $p_0$ defined as $p_t = p_0 \circledast \mathcal N(0, \sigma^2(t) {\rm I})$. We typically denote with $X_0\sim p_0$ the R.V. distributed according to the distribution of interest and $X_t = X_0 + \sigma(t) Z$, the R.V. distributed according to $p_t$. The target is to estimate the set of optimal $l_2$ denoisers, i.e., the set of the conditional expectations: $\{ \E[X_0 | X_t=\cdot]\}_{t=1}^{T}$. Typically, this can be achieved through supervised learning by minimizing the following loss (or a re-parametrization of it):
\begin{gather}
    J(\theta) = \mathbb E_{t \in \mathcal U[0, T]} \mathbb E_{x_0, x_t | t} \left[ \left|\left| h_{\theta}(x_t, t) - x_0 \right|\right|^2\right],
    \label{eq:x0_pred_objective}
 \end{gather}
that is optimized over a function family $\mathcal H = \{h_{\theta} : \theta \in \Theta\}$ parametrized by network parameters $\theta$. 
For sufficiently expressive families, the minimizer is indeed: $h_{\theta^*}(x, t) = \mathbb E[X_0 | X_t=x]$.

\paragraph{Learning from noisy data.}
The diffusion modeling framework described above assumes access to samples from the distribution of interest $p_0$. An interesting variation of this problem is to learn to sample from $p_0$ given access to samples from a tilted measure $\tilde p_0$ and a known degradation model. In Ambient Diffusion~\citep{daras2023ambient}, the goal is to sample from $p_0$ given pairs $(Ax_0, A)$ for a matrix $A: \mathbb R^{m\times n}, m<n$, that is distributed according to a known density $p(A)$. The techniques in this work were later generalized to accommodate additive Gaussian Noise~\citep{daras2023consistent,daras2024consistent, aali2023solving} in the measurements. More recently there have been efforts to further broaden the family of degradation models considered through Expectation-Maximization approaches that involve multiple training runs~\citep{rozet2024learning, bai2024expectation}. 

Recent work from \citep{daras2024consistent} has shown that, at least for the Gaussian corruption model, leveraging the low-quality data can tremendously increase the performance of the trained generative models. In particular, the authors consider the setting where we have access to a few samples from $p_0$, let's denote them $\mathcal D_0 \{ x_0^{(i)}\}_{i=1}^{N_1}$ and many samples from $p_{t_n}$, let's denote them $\mathcal D_{t_n} \{ x_{t_n}^{(i)}\}_{i=1}^{N_2}$, where $p_{tn} = p_0 \circledast \mathcal N(0, \sigma^2(t_n){\rm I})$ is a smoothed version of $p_0$ at a known noise level $t_n$. The clean samples are used to learn denoisers for all noise levels $t \in [0, T]$ while the noisy samples are used to learn denoisers only for $t \geq t_n$, using the training objective:
\begin{gather}
    J_{\mathrm{ambient}}(\theta) = \mathbb E_{t \in \mathcal U(t_n, T]} \sum_{i=1}^{N_2} \mathbb E_{x_t|x_{t_n}^{(i)}} \left[ \left|\left| \alpha(t)h_{\theta}(x_t, t) + (1 - \alpha(t))x_t - x_{t_n}^{(i)}\right|\right|^2\right],
    \label{eq:ambient_objective}
\end{gather}
with $\alpha(t) = \frac{\sigma^2(t) - \sigma^2(t_n)}{\sigma^2(t)}$. Note that the objective of \eqref{eq:ambient_objective} only requires samples from $p_{t_n}$ (instead of $p_0$) and can be used to train for all times $t \geq t_{n}$. This algorithm uses $N_1 + N_2$ datapoints to learn denoisers for $t > t_n$ and only $N_1$ datapoints to learn denoisers for $t \leq t_n$. The authors show that even for $N_1 << N_2$, the model performs similarly to the setting of training with $(N_1 + N_2)$ clean datapoints. The main limitation of this method and its related works is that the degradation process needs to be known. However, in many applications, we have data from heterogeneous sources and various qualities, but there is no analytic form or any prior on the corruption model.

\paragraph{Data filtering.} One of the most crude, but widely used, approaches for dealing with heterogeneous data sources is to remove the low-quality data and train only the high-quality subset \cite{li2024datacomplm, gadre2023datacomp, engstrom2025optimizing}. While this yields better results than naively training on the entire distribution, it leads to a decrease in diversity and relies on heuristics for optimizing the filtering. An alternative strategy is to train on the entire distribution and then fine-tune on high-quality data \cite{dai2023emu, Sehwag2024MicroDiT}. This approach better trades the quality-diversity trade-off but still incurs a loss of diversity and is hard to calibrate.

\paragraph{Training with synthetic data.} A lot of recent works have shown that synthetic data can improve the generative capabilities of diffusion models when mixed properly with real data from the distribution of interest~\citep{ferbach2024self, alemohammad2023self, alemohammad2024self}. In this work, we show that it helps significantly to view synthetic data as corrupted versions of the samples from the real distribution and incorporate this perspective into the training objective.

\section{Method}

We propose a new framework that extends beyond \citep{daras2024consistent} to enable training generative models directly from arbitrarily corrupted and out-of-distribution data, without requiring prior knowledge of the degradation process. We begin by formalizing the setting of interest.

\paragraph{Problem Setting.} We are given a dataset $\mathcal{D} = \{w_0^{(i)}\}_{i=1}^{N}$ consisting of $N$ datapoints. Each point in $\mathcal{D}$ is drawn from a mixture distribution $\tilde p_0$, which mixes $p_0$ (the distribution of interest) and an alternative distribution $q_0$ that may contain various forms of degradation or out-of-distribution content.
We assume access to two labeled subsets, $S_G, S_B$, where points in $S_G$ are known to come from the clean distribution $p_0$, and points in $S_B$ from the corrupted distribution $q_0$. While this assumption simplifies the initial exposition, we relax it in Section~\ref{sec:sets_formation}.
We focus on the practically relevant regime where $|S_G| \ll |\mathcal{D}|$—i.e., access to high-quality data is severely limited. The objective is to learn a generative model that (approximately) samples from the clean distribution $p_0$, leveraging both clean and corrupted samples in its training.

We now describe how degraded and out-of-distribution samples can be effectively leveraged during training in both the high-noise and low-noise regimes of the diffusion process.

\subsection{Learning in the high-noise regime (leveraging low-quality data)}
\label{sec:min_classifier}
\textbf{Addition of gaussian noise contracts distribution distances.}
The first key idea of our method is that, at high diffusion times $t$, the noised target distribution $p_t$ and the noised corrupted distribution $\tilde p_t$ become increasingly similar (Theorem~\ref{theorem:smoothing brings closer}), effectively attenuating the discrepancy introduced by corruption.
This effect is illustrated in Figure~\ref{fig:example-classifier-probs} (top), where we compare a clean image and its degraded counterpart (in this case, corrupted by Gaussian blur). As the diffusion time $t$ increases, the noised versions of both samples become visually indistinguishable.
Consequently, samples from $\tilde{p}_0$ can be leveraged to learn (the score of) $p_t$, for $t>t_n^{\min}$.
We formalize this intuition in Section~\ref{section:theory}, and we also quantify that for large $t$ there are statistical efficiency benefits for using a large sample from $\tilde p_0$ versus a small sample from $p_0$ .

\begin{figure}[t]
    \centering
    \includegraphics[width=\linewidth]{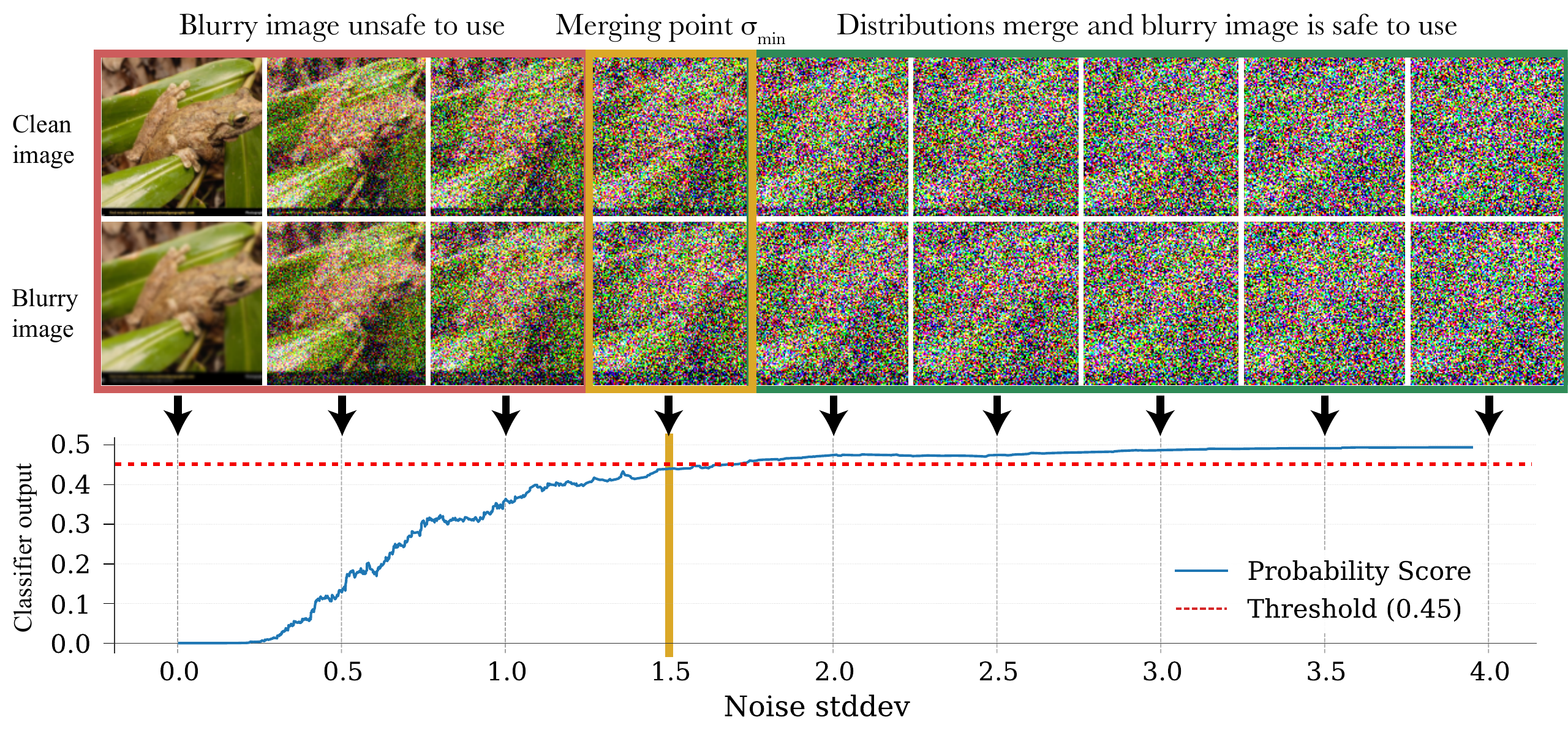}
    \caption{\textbf{A time-dependent classifier trained to distinguish noisy clean and blurry images} (blur kernel standard deviation $\sigma_B=0.6$). At low noise the classifier is able to perfectly identify the blurry images, and outputs a probability close to 0. As the noise increases and the information in the image is destroyed, the clean and blurry distributions converge and the classifier outputs a prediction close to $0.5$. The red line plots the threshold (selected at $\tau = 0.45$), which is crossed at $\sigma_t = 1.64$.}
    \label{fig:example-classifier-probs}
    \vspace{-1.25em}
\end{figure}

\textbf{Heuristic selection of the noise level.} From the discussion so far, it follows that to use samples from $\tilde p_0$, we need to assign them to a noise level $t_n^{\min}$. One can select this noise level empirically, i.e. we can ablate this parameter by training different models and selecting the one that maximizes the generative performance. However, this approach requires multiple trainings, which can be costly. Instead, we can find the desired noise level in a principled way as detailed below.

\textbf{Training a classifier under additive Gaussian noise.} 
To identify the appropriate noise level, we train a time-conditional classifier to distinguish between the noised distributions $p_t$ and $q_t$ across various diffusion times $t$. We use a single neural network $c^{\text{noise}}_{\theta}(x_t, t)$ that is conditioned on the diffusion time $t$, following the approach of time-aware classifiers used in classifier guidance~\citep{dhariwal2021diffusion}.
The classifier is trained using labeled samples from $S_G$ (clean) and $S_B$ (corrupted) via the following objective:
\begin{gather}
    J_{\mathrm{noise}}(\theta) = \sum_{x_0 \in S_G}\E_{x_t | x_0}\left[ - \log c^{\rm noise}_{\theta}(x_t, t)\right] +  \sum_{y_0 \in S_B}\E_{y_t | y_0}\left[- \log (1 - c^{\rm noise}_{\theta}(y_t, t)) \right]
\end{gather}
\textbf{Annotation.} 
Once the classifier is trained, we use it to determine the minimal level of noise that must be added to the low-quality distribution $q_0$ so that it closely approximates a smoothed version of the high-quality distribution $p_0$. Formally, we compute:
\begin{gather}
t_n^{\min} = \inf\left\{t \in [0, T] : \frac{1}{|S_B|} \sum_{y_0 \in S_B} \E_{y_t \mid y_0} \left[ c^{\rm noise}_{\theta}(y_t, t) \right] > \tau\right\},
\end{gather}
for $\tau = 0.5 - \epsilon$ and for some $\epsilon > 0$. Subsequently, we form the annotated dataset $\mathcal D_{\overline{\mathrm{annot}}} = \{ (w_0^{(i)} + \sigma_{t^{\min}_n} Z^{(i)}, t^{\min}_n)\}_{i=1}^N \cup \{ (x_0, 0) | x_0 \in S_G\}$, where the random variables $Z^{(i)}$ are i.i.d.~standard normals.
In particular, our annotated dataset indicates that we should only use the samples from $\cal D$  for diffusion times $t \geq t^{\min}_n$, for which the distributions have approximately merged and hence it is safe to use them. In fact, the optimal classifier assigns time $t_n$ that corresponds to the first time for which $\dtv(p_t, q_t) \leq \epsilon$.

\paragraph{Sample dependent annotation.} One potential issue with the aforementioned annotation approach is that all the samples in $\cal D$ are treated equally. But, as we noted, the points in $\cal D$ could be drawn from a  distribution $\tilde p_0$ that mixes $p_0$ and $q_0$. In this case, all the samples in $\cal D$ that came from the $p_0$ component, will still get a high annotation time, leading to information loss. Instead, we can opt-in for a sample-wise annotation scheme, where each sample $w_0^{(i)}$ gets assigned a time $t_i^{\min}$ based on:
$
t^{\min}_i = \inf\{t \in [0, T] : \mathbb{E}_{w_t | w_0^{(i)}}\left[ c^{\rm noise}_{\theta}(w_t, t) \right] > \tau\},
$
for $\tau = 0.5 - \epsilon$ and for some $\epsilon > 0$.

\paragraph{From arbitrary corruption to additive Gaussian noise.} The afore-described approach reduces our problem of learning from data with arbitrary corruption to the setting of learning from data corrupted with additive Gaussian noise. The price we pay for this reduction is the information loss due to the extra noise we add to the samples during the annotation stage.
We can now extend the objective function~(\ref{eq:ambient_objective}) to train our diffusion model. Suppose our annotated dataset is comprised of samples $\{(x^{(i)}_{t_i^{\min}},t_i^{\min})\}$. Then our objective becomes:
\begin{gather*}
    J_{\mathrm{ambient-o}}(\theta) = \mathbb E_{t \in \mathcal U[0, T]} \sum_{i: t^{\min}_i < t}\mathbb E_{x_t|x_{t^{\min}_i}^{(i)}} \left[ \left|\left| \alpha(t, t^{\min}_i)h_{\theta}(x_t, t) + (1 - \alpha(t, t^{\min}_i))x_t - x_{t^{\min}_i}^{(i)}\right|\right|^2\right],
    \label{eq:generalized_ambient_objective}
\end{gather*}
where $\alpha(t, t^{\min}_i) = \frac{\sigma^2(t) - \sigma^2(t^{\min}_i)}{\sigma^2(t)}$. 

\paragraph{Learning something from nothing?} The proposed framework comes with limitations worth considering. First, unless the diffusion noise level tends to infinity, the distributions $p_t$ and $q_t$ never fully converge—there is always a bias when treating samples from $q_t$ as if they were from $p_t$. 
Moreover, the method is particularly well-suited to certain types of corruptions but is less effective for others. Because the addition of Gaussian noise suppresses high-frequency components—due to the spectral power law of natural images—our approach is most effective for corruptions that primarily degrade high frequencies (e.g., blur). In contrast, degradations that affect low-frequency content—such as color shifts, contrast reduction, or fog-like occlusions—are more challenging. This limitation is illustrated in Figure~\ref{fig:zipper-and-method}: masked images, for example, require significantly more noise to become usable compared to high-frequency corruptions like blur. In the extreme, the method reduces to a filtering approach, as infinite noise nullifies all information in the corrupted samples.

\begin{figure}[htp]
    \includegraphics[width=\linewidth]{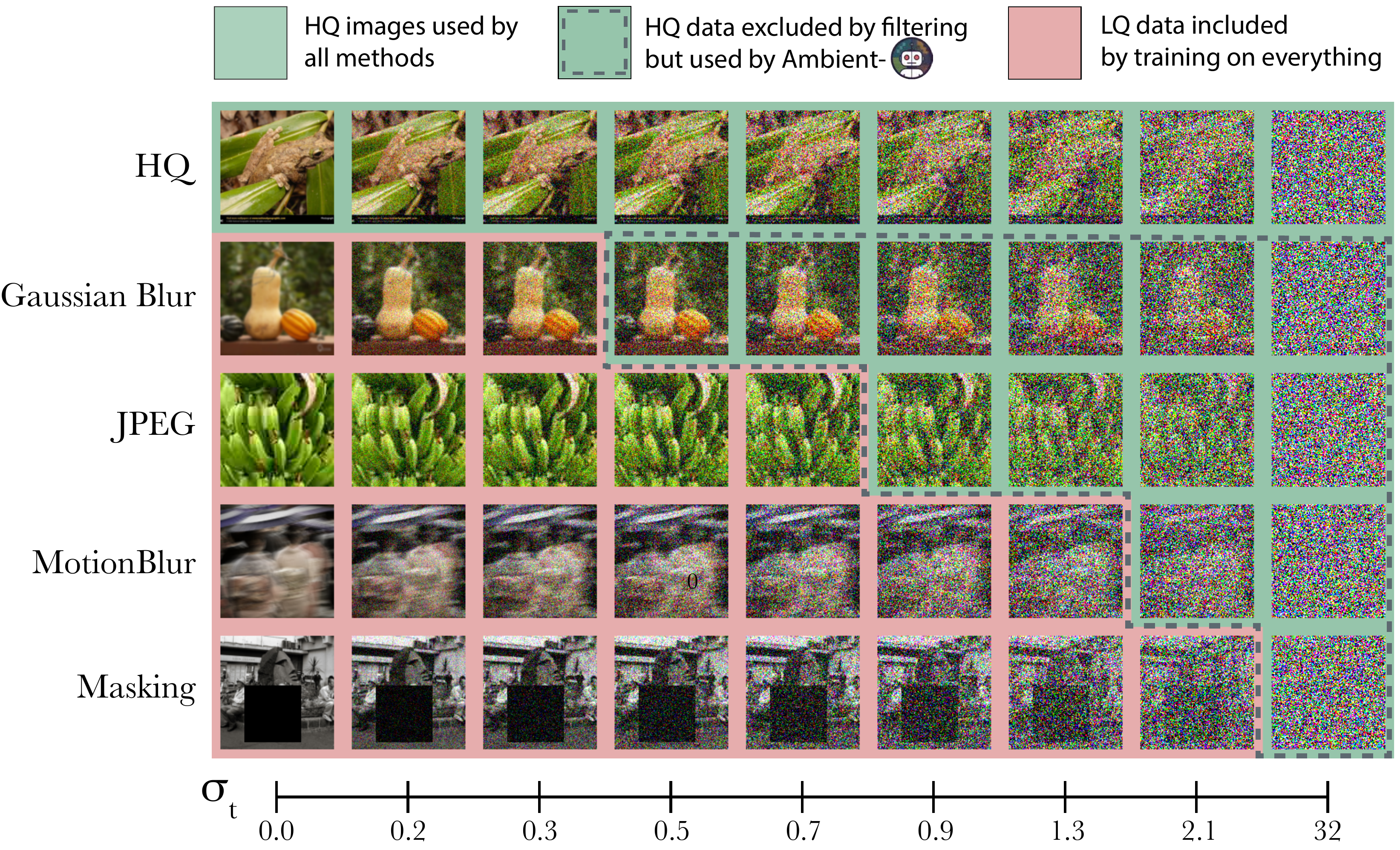}
    \caption{\textbf{Visual summary of our method for using low-quality data at high-noise.} We see how the various corrupted images become indistinguishable from the High Quality (HQ) after a minimum noise level. These noisy versions of Low Quality (LQ) images are actually high-quality data, which filtering approaches discard, but Ambient Omni uses.}
    \label{fig:zipper-and-method}
\end{figure}

\subsection{Learning in the low-noise regime (synthetic and out-of-distribution data)}
\label{sec:max_classifier}

So far, our algorithm implicitly results in varying amounts of training data across diffusion noise levels. At high noise, the model can leverage abundant low-quality data, whereas at low noise levels, it must rely solely on the limited set of high-quality samples. We now extend the algorithm to enable the use of synthetic and out-of-distribution data for learning denoisers at low-noise diffusion times.

To achieve this, we leverage another fundamental property of natural images: \textit{locality}. At low diffusion times, the denoising task can be solved using only a small local region of the image, without requiring full spatial context. 
We validate this hypothesis experimentally in the Experiments Section (Figures \ref{fig:patch-size-imagenet}, \ref{fig:noise-to-context-imagenet}, \ref{fig:patch-size-ffhq}, \ref{fig:noise-to-context-ffhq}), where we show that there is a mapping between diffusion time $t$ and the crop size needed to perform the denoising optimally at this diffusion time. Intuitively, the higher the noise, the more context is required to accurately reconstruct the image. Conversely, for lower noise, the local information within a small neighborhood suffices to achieve effective denoising. We use $\mathrm{crop}(t)$ to denote the minimal crop size needed to perform optimal denoising at time $t$.
If there are two distributions $p_0$ and $\tilde p_0$ that agree on their marginals (i.e. crops), they can be used interchangeably for low-diffusion times. Note that the distributions don't have to agree globally, they only have to agree on a local (patch) level. Formally, let $A(t)$ be a random patch selector of size $\mathrm{crop}(t)$. Let also $p_0, \tilde p_0$ two distributions that satisfy:
\begin{gather}
    A(t) \# p_0 = A(t) \# \tilde p_0 , 
    \label{eq:crops_merged_condition}
\end{gather} 
where $A(t)\#p_0$ denotes the pushforward measure\footnote{Given measure spaces $(X_1, \Sigma_1)$ and $(X_2, \Sigma_2)$, a measurable function $f:X_1\rightarrow X_2$, and a probability measure $p:\Sigma_1 \rightarrow [0,\infty)$, the pushforward measure $f\#p$ is defined as $(f\#p)(B) \coloneqq p(f^{-1}(B)) \ \forall B\in\Sigma_2$.} of $p_0$ under $A(t)$. Then, the cropped portions of the tilted distributions provide equivalent information to the crops of the original distribution for denoising. 

\textbf{Training a crops classifier.} Note that the condition of Equation (\ref{eq:crops_merged_condition}) can be trivially satisfied if $A(t)$ masks all the pixels or even if $A(t)$ just selects a single pixel. We are interested in finding what is the maximum crop size for which this condition is approximately true. Once again, we can use a classifier to solve this task. The input to the classifier, $c^{\rm{crops}}_{\theta}$, is a crop of an image that either arises from $p_0$ or $\tilde p_0$, and the classifier needs to classify between these two cases.

\textbf{Annotation and training using the trained classifier.} Once the classifier is trained, we are now interested in finding the biggest crop size for which the distributions $p_0, \tilde p_0$ cannot be confidently distinguished. Formally,
\begin{gather}
t^{\mathrm{max}}_{n} = \sup\left\{t \in [0, T] : {1 \over |S_B|}\sum_{y_0 \in S_B} \left[ c^{\rm{crops}}_{\theta}(A(t)(y_t)) \right] > \tau\right\},
\end{gather}
for $\tau = 0.5 - \epsilon$ and for some small $\epsilon > 0$\footnote{We subtract an $\epsilon$ to allow for approximate mixing of the two distributions and hence smaller annotation times.}. 
For times $t \leq t^{\mathrm{max}}_{n}$, the out-of-distribution images from $\tilde p_0$ can be used with the regular diffusion objective as images from $p_0$, as for these times the denoiser only looks at crops and at the crop level the distributions have converged.

\textbf{The donut paradox.} Each sample can be used for $t \geq t^{\min}_i$ and for $t \leq t^{\max}_i$, but not for $t \in (t^{\max}_i, t^{\min}_i)$. We call this the \textit{donut paradox} as there is a hole in the middle of the diffusion trajectory for which we have fewer available data. These times do not have enough noise for the distributions to merge globally, but also the required receptive field for denoising is big enough so that there are differences on a crop level. We show an example of this effect in Figure \ref{fig:data-per-noise-level}.

\begin{table}[htbp]
\centering
\caption{ImageNet results with and without classifier-free guidance.}
\renewcommand{\arraystretch}{1.1}
\resizebox{\textwidth}{!}{%
\begin{tabular}{|c|cc|cc|cc|cc|cc|}
\hline
\multirow{3}{*}{\textbf{ImageNet-512}} 
& \multicolumn{4}{c|}{\textbf{Train FID} $\downarrow$} 
& \multicolumn{4}{c|}{\textbf{Test FID} $\downarrow$} 
& \multicolumn{2}{c|}{\textbf{Model Size}} \\
\cline{2-9}
& \multicolumn{2}{c|}{FID} & \multicolumn{2}{c|}{FIDv2} 
& \multicolumn{2}{c|}{FID} & \multicolumn{2}{c|}{FIDv2} 
& \multirow{2}{*}{\textbf{Mparams}} & \multirow{2}{*}{\textbf{NFE}} \\
& \textbf{no CFG} & \textbf{w/ CFG} & \textbf{no CFG} & \textbf{w/ CFG} 
& \textbf{no CFG} & \textbf{w/ CFG} & \textbf{no CFG} & \textbf{w/ CFG} 
& & \\
\hline
EDM2-XS & \textbf{3.57} & 2.91 & \textbf{103.39} & 79.94 & 3.77 & 3.68 & 115.16 & 93.86 & 125 & 63 \\
Ambient-o-XS \smallemojilogo & 3.59 & \textbf{2.89} & 107.26 & \textbf{79.56} & \textbf{3.69} & \textbf{3.58} & \textbf{115.02} & \textbf{92.96} & 125 & 63 \\
\hline
EDM2-XXL & 1.91 (1.93) &    1.81 & 42.84 & 33.09 & 2.88 & $2.73$ & 56.42 & 46.22 & 1523 & 63 \\
Ambient-o-XXL \smallemojilogo & 1.99 & 1.87 & 43.38 & 33.34 & 2.81 & 2.68 & 56.40 & 46.02 & 1523 & 63 \\
Ambient-o-XXL+crops \smallemojilogo & \textbf{1.91} & \textbf{1.80} & \textbf{42.84} & \textbf{32.63} & \textbf{2.78} & \textbf{2.53} & \textbf{56.39} & \textbf{45.78} & 1523 & 63 \\
\hline
\end{tabular}
}
\label{tab:model_comparison}
\end{table}
\begin{figure}[htp]
    \vspace{-1em}
    \centering
    \begin{subfigure}[b]{0.48\linewidth}
        \centering
            \includegraphics[width=\linewidth]{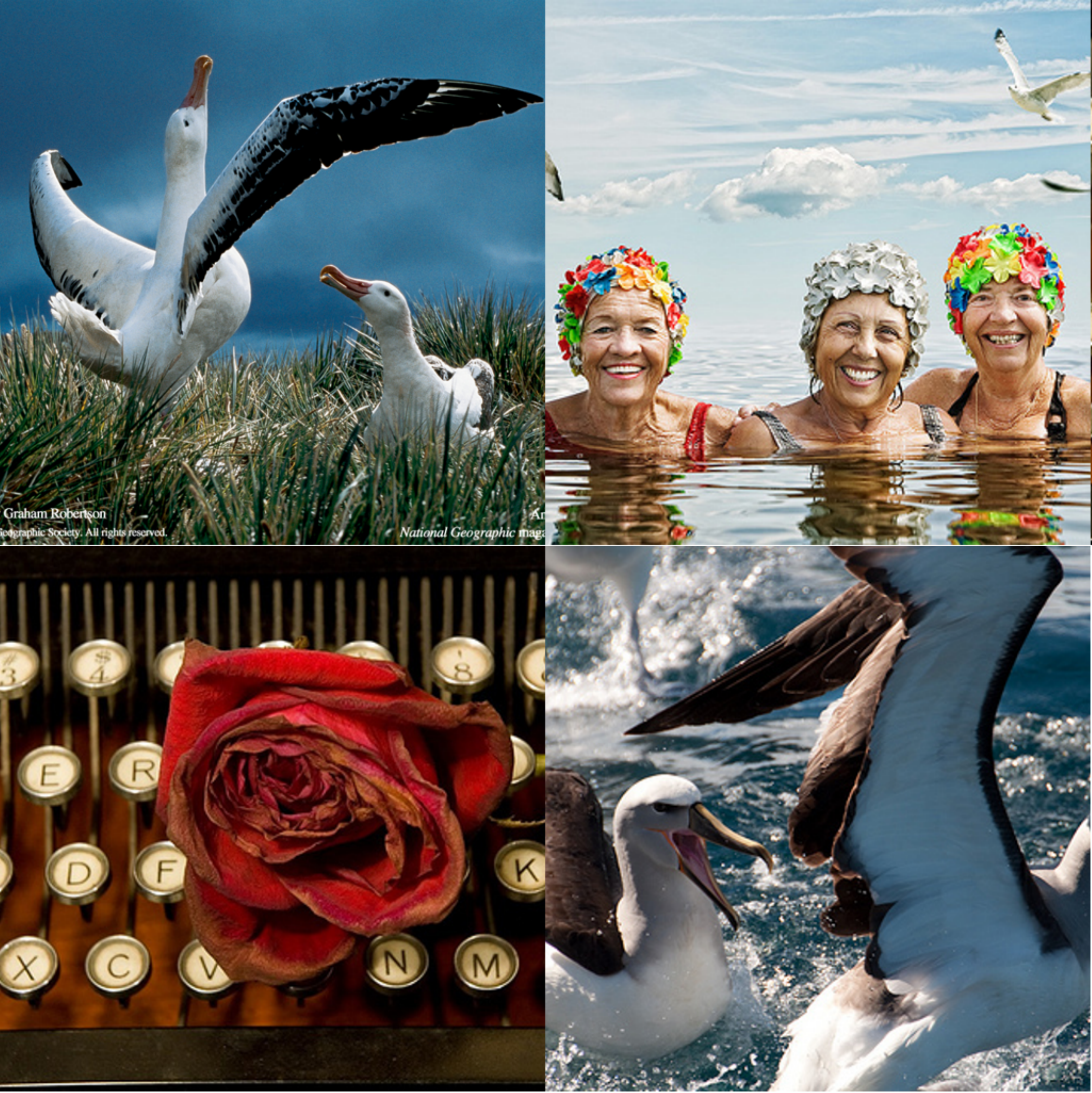}
        \caption*{High quality images}
        \label{fig:top_imagenet}
    \end{subfigure}
    \hfill
    \begin{subfigure}[b]{0.48\linewidth}
        \centering
        \includegraphics[width=\linewidth]{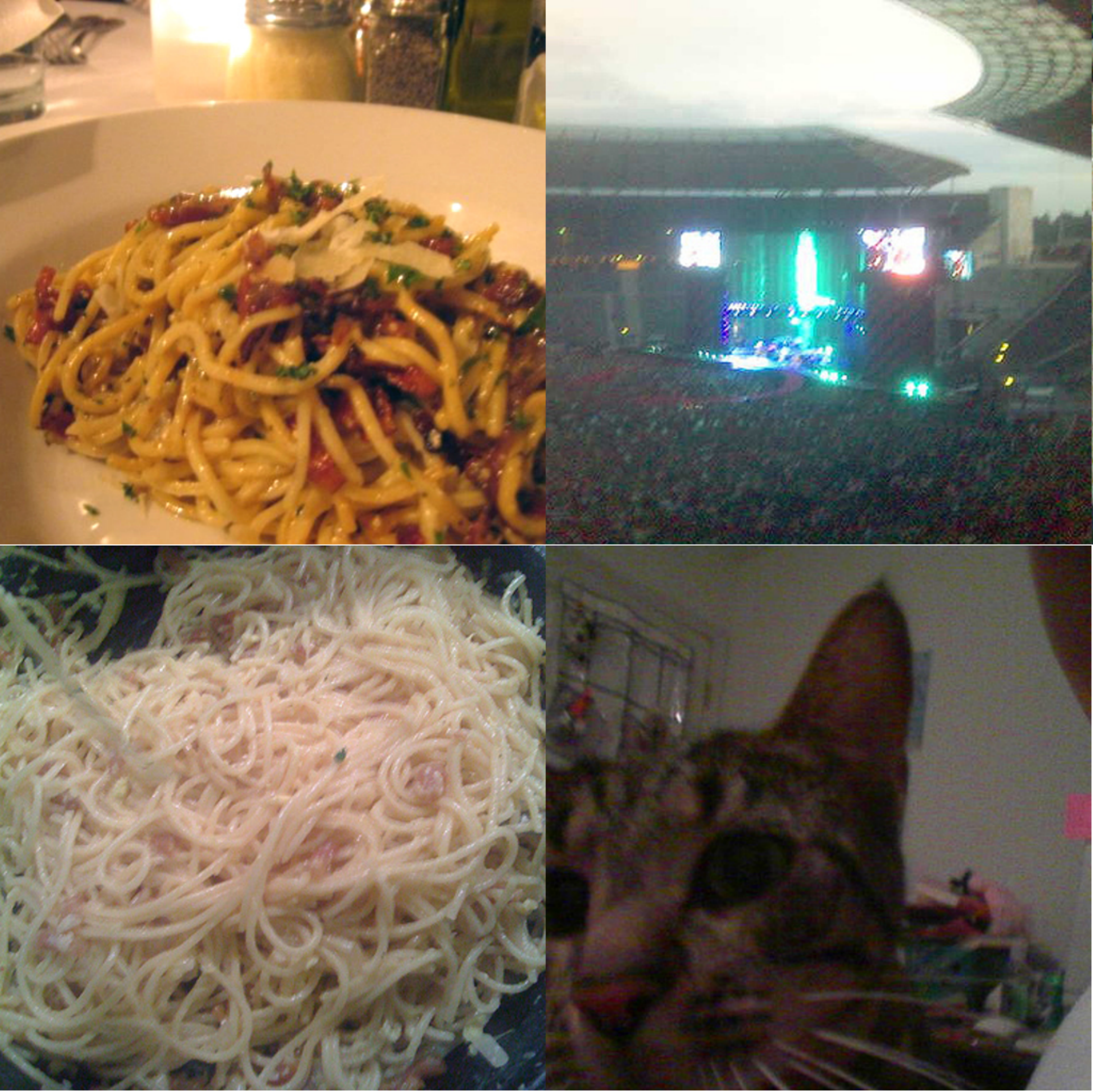}
        \caption*{Low quality images}
        \label{fig:bottom_imagenet}
    \end{subfigure}
    \caption{Results using CLIP to obtain the high-quality and the low-quality sets of ImageNet.}
    \label{fig:top_and_bottom_imagenet}
\end{figure}

\begin{figure}[htp]
    \centering
    \begin{subfigure}[b]{0.48\linewidth}
        \centering
            \includegraphics[width=\linewidth]{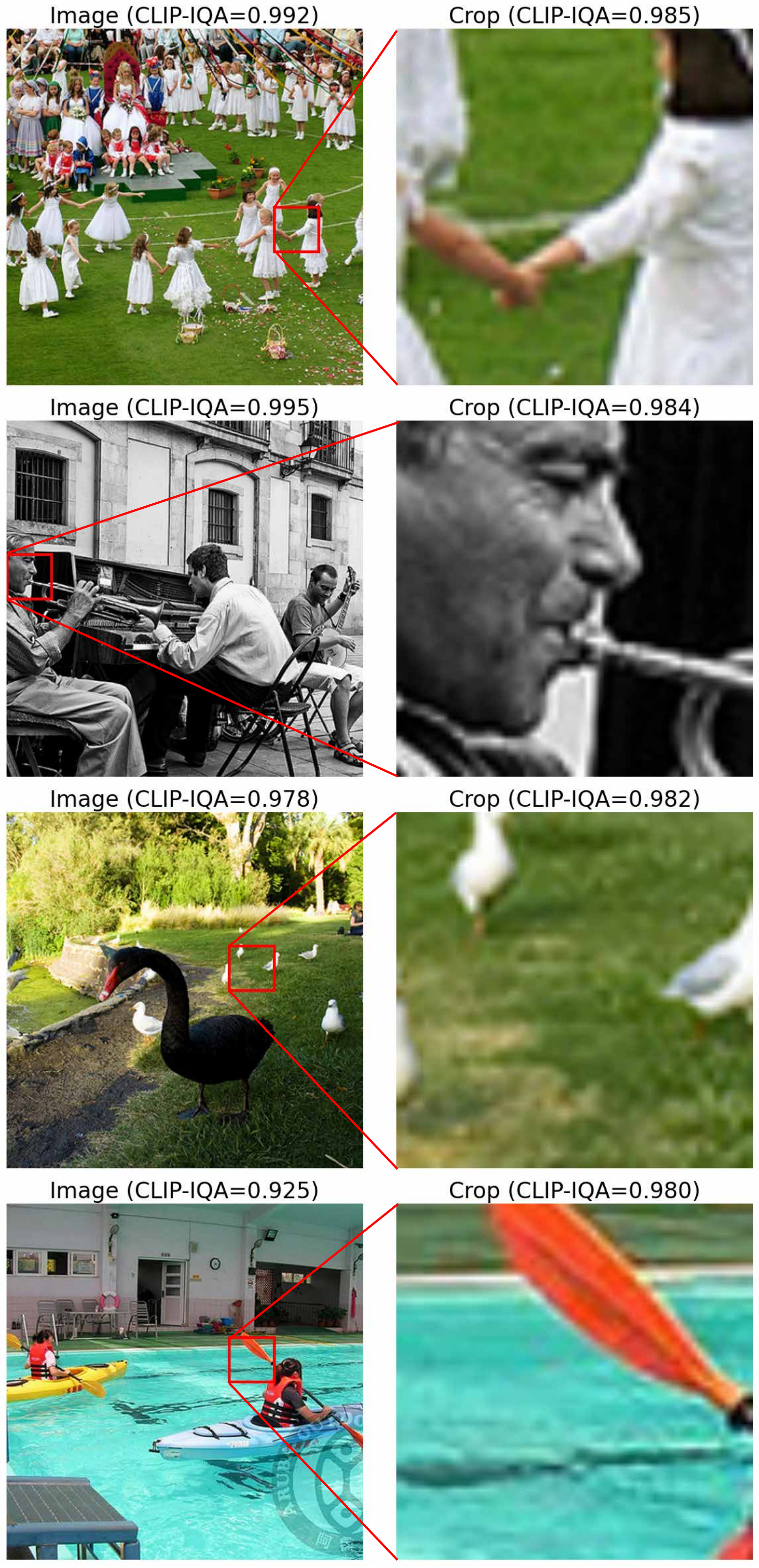}
        \caption*{(a) High quality crops}
        \label{fig:top_imagenet_crops}
    \end{subfigure}
    \hfill
    \begin{subfigure}[b]{0.48\linewidth}
        \centering
        \includegraphics[width=\linewidth]{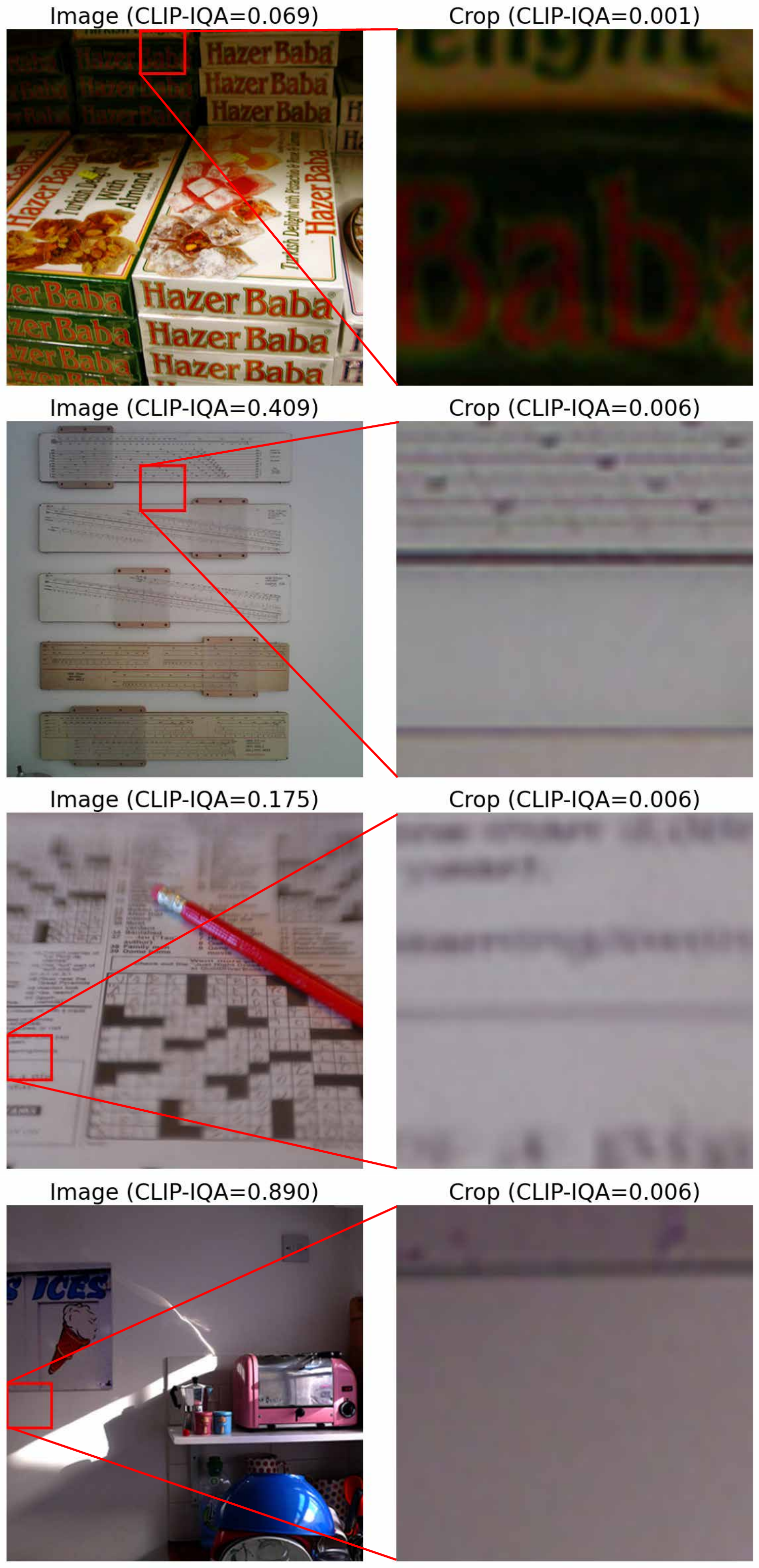}
        \caption*{(b) Low quality crops}
        \label{fig:bottom_imagenet_crops}
    \end{subfigure}
    \caption{Results using CLIP to find (a) high-quality and (b) low-quality crops on ImageNet.}
    \label{fig:top_and_bottom_imagenet_crops}
\end{figure}

\section{Theory}
\label{section:theory}
We study the $1$-d case, but all our claims easily extend to any dimension. We compare two algorithms:

\textbf{Algorithm 1.} Algorithm 1 trains a diffusion model using access to $n_1$ samples from a target density $p_0$, assumed to be supported in $[0,1]$ and be $\lambda_1$-Lipschitz. 

\textbf{Algorithm 2.} Algorithm 2 trains a diffusion model using access to $n_1 + n_2$ samples from a density $\tilde p_0$ that is a mixture of the a target density $p_0$ and another density $q_0$, assumed to be supported in $[0,1]$ and be $\lambda_2$-Lipschitz:
$
    \tilde p_0 =  \frac{n_1}{n_1 + n_2} p_0 + \frac{n_2}{n_1 + n_2} q_0.
$

We want to compare how well these algorithms estimate the distribution $p_t := p_0 \circledast {\cal N}(0,\sigma_t^2)$. We use $\hat p^{(1)}_{t}, \hat p^{(2)}_{t}$ to denote the  estimates obtained for $p_t$ by Algorithms 1 and 2 respectively.

\paragraph{Diffusion modeling is Gaussian kernel density estimation.} We start by making a connection between the optimal solution to the diffusion modeling objective and kernel density estimation. Given a finite dataset $\{W^{(i)}\}_{i=1}^{n}$, the optimal solution to the diffusion modeling objective should match the empirical density at time $t$, which is:
\begin{equation}
    {\hat p}_{t}(x) = \frac{1}{n\sigma_t}\sum_{i=1}^{n} \phi\left( \frac{W^{(i)} - x}{\sigma_t}\right),
    \label{eq:kernel_density_estimation}
\end{equation}
where $\phi(u)={1 \over \sqrt{2 \pi}} e^{-u^2/2}$ is the Gaussian kernel.
We observe that~\eqref{eq:kernel_density_estimation} is identical to a Gaussian kernel density estimate, given samples $\{W^{(i)}\}_{i=1}^{n}$\footnote{This connection has been observed in prior works too, e.g., see \citep{kamb2024analytic, botev2010kernel}.}. 

We establish the following result for Gaussian kernel density estimation.
\begin{theorem}[Gaussian Kernel Density Estimation]\label{theorem:gaussian kernel density estimation}
Let $\{W^{(i)}\}_{i=1}^{n}$ be a set of $n$ independent samples from a $\lambda$-Lipschitz density $p$. Let $\hat{p}$ be the empirical density, $p_\sigma := p \circledast {\cal N}(0,\sigma^2)$ and $\hat{p}_{\sigma}= \hat{p} \circledast {\cal N}(0,\sigma^2)$. Then, with probability at least $1-\delta$ with respect to the sample randomness,
\begin{gather}
\dtv(p_\sigma, {\hat p}_\sigma) \lesssim {1 \over n} + {1 \over \sigma^2 n} + \sqrt{\frac{\log n + \log (1 \vee \lambda) + \log2/\delta}{\sigma^2n}}.
\label{eq:bound}
\end{gather}
\end{theorem}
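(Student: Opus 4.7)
As noted immediately before the theorem, $\hat p_\sigma$ is precisely the Gaussian KDE of $p$ with bandwidth $\sigma$, so the statement reduces to a standard $L^1$ KDE deviation bound. I would prove it in three steps: first, truncate to a compact bulk region using a Gaussian tail bound and the fact that $p$ is supported in $[0,1]$; second, apply Hoeffding pointwise to $\hat p_\sigma(x)-p_\sigma(x)$; and third, upgrade this to a uniform bound over the bulk via an $\epsilon$-net plus Lipschitz extrapolation. The factor $\log(1\vee\lambda)$ arises precisely from choosing the net spacing using the Lipschitz constant of $p_\sigma$.

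\textbf{Execution.} Set $I=[-R,1+R]$ with $R=\sigma\sqrt{2\log n}$. Since $p$ is supported in $[0,1]$, both $p_\sigma$ and $\hat p_\sigma$ are mixtures of Gaussians with centers in $[0,1]$, so the tail estimate $\Pr(|Z|>\sqrt{2\log n})\le 1/n$ gives $\int_{\mathbb R\setminus I}(p_\sigma+\hat p_\sigma)\,dx\lesssim 1/n$, accounting for the first term of the bound. For any fixed $x$, $\hat p_\sigma(x)$ is the empirical mean of $n$ i.i.d.\ variables $\phi_\sigma(x-W^{(i)})\in[0,1/(\sigma\sqrt{2\pi})]$ with expectation $p_\sigma(x)$, so Hoeffding yields $\Pr(|\hat p_\sigma(x)-p_\sigma(x)|>t)\le 2\exp(-4\pi n\sigma^2 t^2)$. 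Next, I place a grid over $I$ of spacing $h=1/(n(1\vee\lambda))$, producing $M\lesssim n(1\vee\lambda)|I|$ grid points with $\log M\lesssim \log n+\log(1\vee\lambda)$; a union bound gives, with probability at least $1-\delta$,
\[
\max_k|\hat p_\sigma(x_k)-p_\sigma(x_k)|\lesssim \epsilon := \sqrt{(\log n+\log(1\vee\lambda)+\log(2/\delta))/(n\sigma^2)}.
\]
To interpolate off-grid, I use that $\hat p_\sigma$ is $O(1/\sigma^2)$-Lipschitz (from $\|\phi_\sigma'\|_\infty\lesssim 1/\sigma^2$) while $p_\sigma=p*\phi_\sigma$ is $\lambda$-Lipschitz (via $p_\sigma'=p'*\phi_\sigma$ and Young), so $\sup_{x\in I}|\hat p_\sigma(x)-p_\sigma(x)|\lesssim\epsilon+(1/\sigma^2+\lambda)h\lesssim\epsilon+1/(\sigma^2 n)$. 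Integrating over $I$ and combining with the Step 1 tail bound yields the three-term inequality of the theorem.

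\textbf{Main obstacle.} The trickiest bookkeeping is the $\sigma$-dependence of the bulk width $|I|=1+2\sigma\sqrt{2\log n}$, which grows with $\sigma$ and, if naively multiplied into $\epsilon$, introduces an extra $\sqrt{\log n}$ factor in the large-$\sigma$ regime. I would handle this by case-splitting on $\sigma$: for small $\sigma$ the bulk is $O(1)$ and the $\epsilon$-net argument above is tight, while for large $\sigma$ both $p_\sigma$ and $\hat p_\sigma$ are close to the Gaussians $\mathcal N(\mu,\sigma^2)$ and $\mathcal N(\bar W,\sigma^2)$, so a direct Pinsker plus sample-mean concentration argument $|\bar W-\mu|\lesssim 1/\sqrt n$ gives a $1/(\sigma\sqrt n)$ bound within the stated form. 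A cleaner alternative that sidesteps the bulk/tail split entirely is the integration-by-parts identity $\hat p_\sigma-p_\sigma=\phi_\sigma'*(F_n-F)$ combined with Young's convolution inequality and the DKW bound $\|F_n-F\|_\infty\le\sqrt{\log(2/\delta)/(2n)}$, which recovers the dominant $\sqrt{\log(1/\delta)/(n\sigma^2)}$ term directly; this route, however, drops the $\log n+\log(1\vee\lambda)$ structure, so matching the bound as stated calls for the $\epsilon$-net-with-case-split approach.
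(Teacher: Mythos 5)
Your core argument --- discretize at resolution $h=1/(n(1\vee\lambda))$, apply Hoeffding at the grid points using the kernel range $1/(\sigma\sqrt{2\pi})$, union bound, and interpolate using that $\hat p_\sigma$ is $O(1/\sigma^2)$-Lipschitz while $p_\sigma$ is $\lambda$-Lipschitz --- is exactly the paper's proof, down to the grid resolution that produces the $\log n+\log(1\vee\lambda)$ term. Where you diverge is the domain of integration. The paper simply writes $\dtv(p_\sigma,\hat p_\sigma)=\tfrac12\int_0^1|p_\sigma-\hat p_\sigma|$ and partitions $[0,1]$; it never confronts the mass of the smoothed densities outside $[0,1]$ or the growth of the effective support with $\sigma$. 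Your truncation to $I=[-R,1+R]$ with $R=\sigma\sqrt{2\log n}$ is the honest version of that step, and you correctly see that it creates a problem: integrating the uniform deviation $\epsilon\asymp\frac{1}{\sigma}\sqrt{(\log n+\log(1\vee\lambda)+\log(2/\delta))/n}$ over a bulk of width $|I|\asymp\sigma\sqrt{\log n}$ loses a factor $\sigma\sqrt{\log n}$, which is fatal for any $\sigma\gtrsim 1$, not merely an extra $\sqrt{\log n}$.

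The genuine gap is in your proposed repair for that regime. Passing through the reference Gaussians $\mathcal N(\mu,\sigma^2)$ and $\mathcal N(\bar W,\sigma^2)$ by the triangle inequality forces you to pay $\dtv(p_\sigma,\mathcal N(\mu,\sigma^2))$, which is not small enough: the coupling bound gives only $O(1/\sigma)$, and even with means matched the second-moment mismatch leaves an error of order $\mathrm{Var}(p)/\sigma^2$. Throughout $1\lesssim\sigma\lesssim\sqrt{n/\log n}$ both exceed the target $\frac{1}{\sigma}\sqrt{\log n/n}$ (at $\sigma=n^{1/4}$ you would get $n^{-1/2}$ where the theorem claims roughly $n^{-3/4}\sqrt{\log n}$). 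The smallness of $\dtv(p_\sigma,\hat p_\sigma)$ at large $\sigma$ comes from cancellation between $p$ and $\hat p$ --- every moment of $\hat p$ is within $O(1/\sqrt n)$ of the corresponding moment of $p$ --- and routing through a fixed reference Gaussian destroys exactly that cancellation. A workable large-$\sigma$ argument would instead expand $(p-\hat p)\circledast\phi_\sigma=\sum_{k\ge1}\frac{(-1)^k m_k}{k!}\phi_\sigma^{(k)}$ in the moment differences $m_k$ and combine $\|\phi_\sigma^{(k)}\|_1\le\sqrt{k!}\,\sigma^{-k}$ with $|m_k|\lesssim\sqrt{\log(k/\delta)/n}$, so that every term carries the $1/\sqrt n$. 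To be fair, the paper's own write-up never closes this case either --- it is hidden inside the unjustified restriction of the TV integral to $[0,1]$ --- so your proof is no less rigorous than the paper's and at least makes the difficulty visible; but as written the case split does not establish the theorem for moderate and large $\sigma$.
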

The proof of this result is given in the Appendix.
\paragraph{Comparing the performance of Algorithms 1 and 2.}

Applying Theorem~\ref{theorem:gaussian kernel density estimation} directly to the $p_0$ density, we immediately get that the estimate $\hat p^{(1)}_t(x)$ obtained by  Algorithm 1 satisfies:
\begin{gather}
    \dtv(p_t, {\hat p^{(1)}}_t) \lesssim {1 \over n_1} + {1 \over \sigma_t^2 n_1} + \sqrt{\frac{\log n_1 + \log (1 \vee \lambda_1) + \log2/\delta}{\sigma_t^2n_1}}. \label{eq:alg1_upper_bound}
\end{gather}

Let us now see what we get by applying Theorem~\ref{theorem:gaussian kernel density estimation} to Algorithm 2, which uses samples from the tilted distribution $\tilde p_0$. Since this distribution is $\left(\frac{n_1}{n_1 + n_2}\lambda_1 + \frac{n_2}{n_1 + n_2} \lambda_2\right)$-Lipschitz, we get that:
\begin{gather*}
    \dtv(\tilde p_t, {\hat p^{(2)}}_t) \lesssim {1 \over (n_1 + n_2)} + {1 \over \sigma_t^2 (n_1 + n_2)} + \sqrt{\frac{\log (n_1 + n_2) + \log (1 \vee \frac{n_1}{n_1 + n_2}\lambda_1 + \frac{n_2}{n_1 + n_2} \lambda_2) + \log2/\delta}{\sigma_t^2(n_1 + n_2)}},
\end{gather*}
where $\tilde p_t := \tilde p_0 \circledast {\cal N}(0,\sigma_t^2)$. 

Further, we have that:
$
     \dtv(p_t, {\hat p^{(2)}}_t)  \leq \dtv(\tilde p_t, p_t) + \dtv(\tilde p_t, {\hat p^{(2)}}_t).\label{eq:intermediate bound}
$
We already have a bound for the second term. To bound the first term, we prove the following theorem. 
\begin{theorem}[Distance contraction under noise]\label{theorem:smoothing brings closer}
    Consider distributions $P$ and $Q$ supported on a subset of $\mathbb{R}^d$ with diameter $D$. Then $$\dtv(P\circledast{\cal N}(0,\sigma^2 {\rm I}), Q\circledast{\cal N}(0,\sigma^2 {\rm I}))\le \dtv(P,Q)\cdot {D \over 2\sigma}.$$
    \label{theorem:tv_contraction}
\end{theorem}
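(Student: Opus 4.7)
The plan is to combine the maximal coupling characterization of total variation with the joint convexity of $\dtv$ under mixtures, reducing the claim to the classical bound on the total variation distance between two translated Gaussians with the same covariance. First, I would invoke the coupling characterization of $\dtv$: there is a joint law $\pi$ on $(X, Y)$ with marginals $P$ and $Q$ such that $\mathbb{P}_\pi(X \neq Y) = \dtv(P, Q)$. Since both $P$ and $Q$ are supported in a common set of diameter $D$, both $X$ and $Y$ lie in this set almost surely, and hence $\|X - Y\| \leq D$ almost surely under $\pi$.

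Next, I would express each convolved measure as a mixture of translated Gaussians indexed by the coupling:
\begin{equation*}
P \circledast \mathcal{N}(0, \sigma^2 {\rm I}) \;=\; \mathbb{E}_\pi\bigl[\mathcal{N}(X, \sigma^2 {\rm I})\bigr], \qquad Q \circledast \mathcal{N}(0, \sigma^2 {\rm I}) \;=\; \mathbb{E}_\pi\bigl[\mathcal{N}(Y, \sigma^2 {\rm I})\bigr].
\end{equation*}
Joint convexity of total variation (a standard property of every $f$-divergence) then gives
\begin{equation*}
\dtv\bigl(P \circledast \mathcal{N}(0, \sigma^2 {\rm I}),\, Q \circledast \mathcal{N}(0, \sigma^2 {\rm I})\bigr) \;\leq\; \mathbb{E}_\pi\bigl[\dtv\bigl(\mathcal{N}(X, \sigma^2 {\rm I}),\, \mathcal{N}(Y, \sigma^2 {\rm I})\bigr)\bigr].
\end{equation*}
The integrand is controlled by the textbook estimate $\dtv(\mathcal{N}(\mu_1, \sigma^2 {\rm I}), \mathcal{N}(\mu_2, \sigma^2 {\rm I})) \leq \|\mu_1 - \mu_2\|/(2\sigma)$, which follows by Pinsker's inequality from the closed form $\mathrm{KL}(\mathcal{N}(\mu_1, \sigma^2 {\rm I}), \mathcal{N}(\mu_2, \sigma^2 {\rm I})) = \|\mu_1 - \mu_2\|^2/(2\sigma^2)$.

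Plugging this into the previous inequality, the integrand vanishes on $\{X = Y\}$ and is bounded by $D/(2\sigma)$ on $\{X \neq Y\}$, so
\begin{equation*}
\dtv\bigl(P \circledast \mathcal{N}(0, \sigma^2 {\rm I}),\, Q \circledast \mathcal{N}(0, \sigma^2 {\rm I})\bigr) \;\leq\; \frac{1}{2\sigma}\,\mathbb{E}_\pi\bigl[\|X - Y\|\bigr] \;\leq\; \frac{D}{2\sigma}\,\mathbb{P}_\pi(X \neq Y) \;=\; \frac{D}{2\sigma}\cdot\dtv(P, Q).
\end{equation*}
There is no serious obstacle here; the only subtlety is ensuring the almost-sure diameter bound on the coupling, which follows immediately from the hypothesis that both supports lie in a set of diameter $D$. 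One could sharpen the Pinsker step via the Bretagnolle--Huber inequality when $\|\mu_1 - \mu_2\|/\sigma$ is large, but the linear bound is all that is needed for the stated conclusion.
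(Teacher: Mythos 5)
Your proposal is correct and rests on the same two ingredients as the paper's proof: the maximal coupling of $P$ and $Q$ realizing $\dtv(P,Q)$ as a disagreement probability, and the bound $\dtv(\mathcal{N}(\mu_1,\sigma^2{\rm I}),\mathcal{N}(\mu_2,\sigma^2{\rm I}))\le \|\mu_1-\mu_2\|/(2\sigma)$ obtained from the Gaussian KL formula and Pinsker. The only difference is how the pointwise Gaussian bound is lifted to the convolutions: you invoke joint convexity of $\dtv$ under mixtures over the coupling, whereas the paper explicitly constructs a two-stage coupling of the noised distributions and verifies its marginals in two separate lemmas --- your route is slightly shorter (and in passing yields the sharper intermediate bound $\frac{1}{2\sigma}\E_\pi[\|X-Y\|]$), but the mathematical content is the same.
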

\vspace{-1em}
Applying this theorem we get that:
$
     \dtv(\tilde p_t, p_t)  \leq \frac{1}{2\sigma_t}\dtv(\tilde p_0, p_0) \label{eq:distance_decomp} 
     \leq \frac{1}{2\sigma_t} \cdot  \frac{n_2}{n_1 + n_2}\dtv(p_0, q_0), 
$
where for the second inequality we used that $
    \dtv(p_0, \tilde p_0)
    \leq \frac{n_2}{n_1 + n_2}\dtv(p_0, q_0)$.

Putting everything together, Algorithm (2) achieves an estimation error:
\begin{gather*}
    \dtv(p_t, {\hat p^{(2)}}_t) \lesssim \\
    {1 \over (n_1 + n_2)} + {1 \over \sigma_t^2 (n_1 + n_2)} + \sqrt{\frac{\log (n_1 + n_2) + \log (1 \vee \frac{n_1}{n_1 + n_2}\lambda_1 + \frac{n_2}{n_1 + n_2} \lambda_2) + \log2/\delta}{\sigma_t^2(n_1 + n_2)}} + \frac{n_2}{\sigma_t(n_1 + n_2)}\dtv(p_0, q_0).
    \label{eq:alg2_upper_bound}
\end{gather*}

Comparing this with the bound obtained in \Eqref{eq:alg1_upper_bound}, we see that if $n_2$ is sufficiently larger than $n_1$ or if $\lambda_2 \leq \lambda_1$, there is a $t^{\min}_n$ such that for any $t \geq t^{\min}_n$, the upper-bound obtained by Algorithm 2 is better than the upper-bound obtained by Algorithm 1. That implies that for high-diffusion times, using biased data might be helpful for learning, as the bias term (final term) decays with the amount of noise. Going back to~\eqref{eq:intermediate bound}, note that the switching point $t \geq t^{\min}_n$ depends on the distance $\dtv(\tilde p_t, p_t)$ that decays as shown in Theorem \ref{theorem:smoothing brings closer}. Once this distance becomes small enough, our computations above suggest that we benefit from biased data. The classifier of Section \ref{sec:min_classifier}, if optimal, exactly tracks the distance $\dtv(\tilde p_t, p_t)$ and, as a result, tracks the switching point.

\section{Experiments}
\label{sec:experiments}

\paragraph{Controlled experiments to show utility from low-quality data.}
\label{sec:controlled-blur}
To verify our method, we first do synthetic experiments on artificially corrupted data. We use EDM~\citep{karras2022elucidating} as our baseline, and we train networks on CIFAR-10 and FFHQ. For the first experiments, we only use the high-noise part of our Ambient-o method (Section \ref{sec:min_classifier}).
We underline that for all of our experiments, we only change the way we use the data, and we keep all the optimization and network hyperparameters as is. 
We compare against using all the data as equal (despite the corruption) and the filtering strategy of only training on the clean samples.  For evaluation, we measure FID \cite{fid} with respect to the full uncorrupted dataset (which is not available during training).

For the blurring experiments, we use a Gaussian kernel with standard deviation $\sigma_B=0.4,0.6,0.8,1.0$, and we corrupt $90\%$ of the data. We show some example corrupted images in Appendix Figure \ref{fig:blur_cifar10}. To perform the annotations for our method, we train a blurry image vs clean image classifier under noise, as explained in Section \ref{sec:min_classifier}. For the experiments in the main paper, we use a balanced dataset for the training of the classifier. We ablate the effect of having fewer training samples for the classifier training in Appendix Section \ref{sec:classifier_training_ablations} where we show that reducing the number of clean samples available for classifier training leads to a small drop in performance. Once equipped with the trained classifier, each sample is annotated on its own based on the amount of noise that is needed to confuse the classifier (sample dependent annotation). We present our results in Table~\ref{table:pixel_diffusion_comparisons_gaussian_blur}. As shown, for all corruption strengths, Ambient Omni, significantly outperforms the two baseline methods. In the one to the last column of Table \ref{table:pixel_diffusion_comparisons_gaussian_blur}, we further show the average annotation of the classifier. As expected, the average assigned noise level increases as the corruption intensifies. 

\paragraph{Ablations.} We ablate the choice of using a fixed annotation vs sample-adaptive annotations in Appendix Table \ref{table:single_annotation_comparisons}. We find that using sample-adaptive annotations achieves improved results. Nevertheless, both annotation methods yield improvements over the training on filtered data and the training on everything baselines. To show that our method works for more corruption types, we perform an equivalent experiment with JPEG compressed data at different compression ratios and we achieve similar results, presented in Appendix Table \ref{table:controlled-jpeg}. 
We ablate the impact of the amount of training data and the number of training iterations on the classifier annotations in Appendix Section \ref{sec:classifier_training_ablations}. We show results for motion blur (Figure \ref{fig:motion_blur_cifar10} and Section \ref{app_sec:additional_cifar10_results}) and for the FFHQ dataset (Table \ref{tab:pixel_diffusion_ffhq_blur}).
\begin{table}[htp]
\centering
\caption{In a controlled experiment with restricted access only to 10\% of the clean dataset, our method of Ambient-o uses corrupted and out-of-distribution data to improve performance.}
\begin{subtable}{0.48\textwidth}
\centering
\caption{Gaussian blurred data at different levels.}
\resizebox{\textwidth}{!}{
\begin{tabular}{l c c c}
\toprule
\textbf{Method} & \textbf{Parameters Values ($\sigma_B$)} & $\bm{\bar \sigma_{t_n}^{\min}}$ & \textbf{FID} \\
\midrule
\textbf{Only Clean (10\%)} & - & - & $8.79$ \\
\midrule
\multirow{4}{*}{\textbf{All data}} 
& 1.0 & \multirow{4}{*}{0} & 45.32 \\
& 0.8 & & 28.26 \\
& 0.6 &  & 11.42 \\
& 0.4 &  & 2.47 \\
\midrule
\multirow{4}{*}{\textbf{Ambient-o \smallemojilogo}} 
& 1.0 & $2.84$ & \textbf{6.16} \\
& 0.8 & $1.93$ & \textbf{6.00} \\
& 0.6 & $1.38$ & \textbf{5.34} \\
& 0.4 & $0.22$ & \textbf{2.44} \\
\bottomrule
\end{tabular}}
\label{table:pixel_diffusion_comparisons_gaussian_blur}
\end{subtable}
\hfill
\begin{subtable}{0.48\textwidth}
\centering
\caption{Additional out-of-distribution data.}
\resizebox{\textwidth}{!}{
\begin{tabular}{l c c c c}
\toprule
\textbf{Source Data} & \textbf{Additional Data} & \textbf{Method} & $\bm{\bar \sigma_{t_n}^{\max}}$ & \textbf{FID} \\
\midrule
\multirow{6}{*}{Dogs (10\%)} & None & -- & -- & $12.08$ \\
\cline{2-5}
 & Cats & Fixed $\sigma$ & $0.2$ & $11.14$ \\
 & Cats & Fixed $\sigma$ & $0.1$ & $\underline{9.85}$ \\
 & Cats & Fixed $\sigma$ & $0.05$ & $10.66$ \\
 & Cats & Fixed $\sigma$ & $0.025$ & $12.07$ \\ 
 \cline{2-5}
 & Cats & Classifier & $0.09$ & $\textbf{8.92}$ \\
 \cline{2-5}
& Procedural & Classifier & 0.042 & \underline{10.98} \\
 \midrule
 \multirow{2}{*}{Cats (10\%)} & None & -- & -- & $5.20$ \\
 \cline{2-5}
 & Dogs & Classifier &$0.13$ & $5.11$ \\
 \cline{2-5} 
 & Wildlife & Classifier &$0.08$ & $\textbf{4.89}$ \\
\end{tabular}}
\label{table:afhq_dogs}
\end{subtable}
\end{table}

\textbf{Controlled experiments to show utility from out-of-distribution images.}
We now want to validate the method developed in Section \ref{sec:max_classifier} for leveraging crops from out-of-distribution data. To start with, we want to find the mapping between diffusion times and the size of the receptive field required for an optimal denoising prediction. To do so, we take a pre-trained denoising diffusion model and measure the denoising loss at a given location as we increase the size of the context. We provide the corresponding plot in the Supplemental Figures \ref{fig:patch-size-ffhq}, \ref{fig:patch-size-imagenet}. The main finding is that while providing more context always leads to a decrease in the average loss, for sufficiently small noise levels, the loss nearly plateaus before the full image context is provided. That implies that the perfect denoiser for a given noise level only needs to look at a localized part of the image.

\begin{wrapfigure}{r}{0.5\linewidth}
    \centering
    \includegraphics[width=\linewidth]{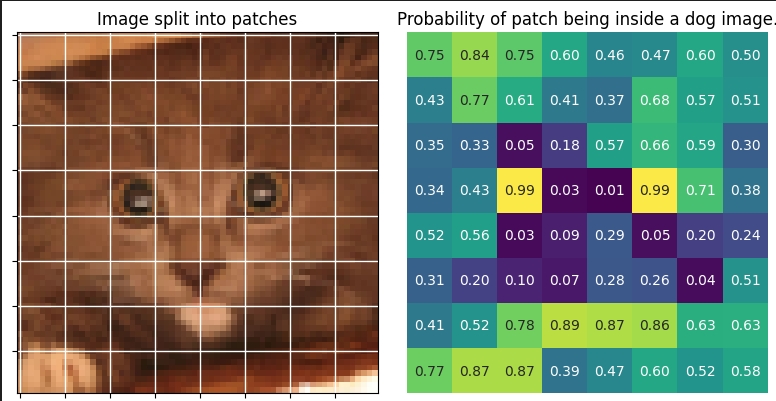}
    \caption{Patch level probabilities for dogness in a cat image.}
    \label{fig:good_cat_main}
\end{wrapfigure}

Equipped with the mapping between diffusion times and crop sizes, we now proceed to a fun experiment. Specifically, we show that it is possible to use images of cats to improve a generative model for dogs (!) and vice-versa. The cats here represent out-of-distribution data that can be used to improve the performance in the distribution of interest (in our toy example, dogs distribution). To perform this experiment, we train a classifier that discriminates between cats and dog images by looking at crops of various sizes (Section \ref{sec:max_classifier}). 
Figure \ref{fig:good_cat_main} shows the predictions of an $8\times8$ crops-classifier for an image of a cat, illustrating that there are a number of crops that are misclassified as crops from a dog image. We report results for this experiment in Table \ref{table:afhq_dogs} and we observe improvements in FID arising from using out-of-distribution data. Beyond natural images, we show that it is even possible to use procedurally generated data from Shaders~\citep{shaders} to (slightly) improve the performance. Figure \ref{fig:good_synthetic_main} shows an example of such an image and the corresponding predictions of a crops classifier. Table \ref{table:afhq_dogs} contains more results and ablations between annotating all the out-of-distribution at a single noise level vs. sample-dependent annotations.

\begin{takeawaybox}
\textbf{Takeaway 1}: It is possible to use \textit{low-quality in-distribution} images and \textit{high-quality out-of-distribution} images to produce \textbf{high-quality in-distribution} images.
\end{takeawaybox}

\textbf{Corruptions of natural datasets -- ImageNet results.}
Up to this point, our corrupted data has been artificially constructed to study our method in a controlled setting. However, it turns out that even in real datasets such as ImageNet, there are images with significant degradations such as heavy blur, low lighting, and low contrast, and also images with fantastic detail, clear lightning, and sharp contrast. 
Here, the high-quality and the low-quality sets are not given and hence we have to estimate them. We opt to use the CLIP-IQA quality metric \citep{clipiqa} to separate ImageNet into high-quality (top $10\%$ CLIP-IQA) and low-quality (bottom $90\%$ CLIP-IQA) sets. Figure \ref{fig:top_and_bottom_imagenet} shows some of the top and bottom quality images according to our metric. Given the high-quality and low-quality sets, we are now back to the previous setting where we can use the developed Ambient-o methodology. We underline that there is a rich literature regarding quality-assessment methods~\citep{wang2003multiscale, wang2004image, mittal2012making, wang2023exploring} and the performance of our method depends on how the high-quality and the low-quality sets are defined, since the rest of the samples are annotated based on which set they are closer to. 

We use Ambient-o to refer to our method that uses low-quality data at high diffusion times (Section \ref{sec:controlled-blur}) and Ambient-o+crops to refer to the extended version of our method that uses crops from potentially low-quality images at low-diffusion times. Perhaps surprisingly, there are images in ImageNet that have lower global quality but high-quality crops that we can use for low-noise. We present results in Table \ref{tab:model_comparison}, where we show the best FID \cite{fid} and $\text{FD}_{\text{DINOv2}}$ obtained by different methods. We show the highest and lowest quality crops, alongside their associated full images, of ImageNet according to CLIP in Figure \ref{fig:top_and_bottom_imagenet_crops}.

As shown in the Table, our method leads to state-of-the-art FID scores, improving over the previous state-of-the-art baseline EDM-2~\citep{Karras2024edm2} at both the low and high parameter count settings. The benefits are more pronounced when we measure test FID as our method memorizes significantly less due to the addition of noise during the annotation stage of our pipeline (Section \ref{sec:min_classifier}). Beyond FID, we provide qualitative results in Figure \ref{fig:fig1} (bottom) and Appendix Figures \ref{fig:xxl_model_generations}, \ref{fig:xxl_model_generations_plus_crops}. We further show that the quality of the generated images measured by CLIP increased compared to the baseline in Appendix Table \ref{tab:xxl_models_clip_based_comparison}. The observed improvements are proof that the ability to learn from data with heterogeneous qualities can be truly impactful for realistic settings beyond synthetic corruptions typically studied in prior work.

\begin{takeawaybox}
\textbf{Takeaway 2}: Real datasets contain heterogeneous samples. Ambient-o explicitly accounts for quality variability during training, leading to improved generation quality.
\end{takeawaybox}

\textbf{Text-to-image results.} For our final set of experiments, we show how Ambient-o can be used to improve the performance of text-to-image diffusion models. We use the code-base of MicroDiffusion \cite{Sehwag2024MicroDiT}, as it is open-data and trainable with modest compute ($\approx$ 2 days on 8-H100 GPUs). \citet{Sehwag2024MicroDiT} use four main datasets to train their model: Conceptual Captions (12M) \citep{cc12m}, Segment Anything (11M) \citep{sa1b}, JourneyDB (4.2M) \citep{jdb}, and DiffusionDB (10.7M) \citep{diffdb}. 
Of these four, DiffusionDB is of significantly lower quality than the others as it contains solely synthetic data from an outdated diffusion model. This presents an opportunity for the use of our method. Can we use this lower-quality data and improve the performance of the trained network?

\begin{figure}[!htp]
    \centering
    \begin{subfigure}{0.49\linewidth}
        \centering
        \includegraphics[width=\linewidth]{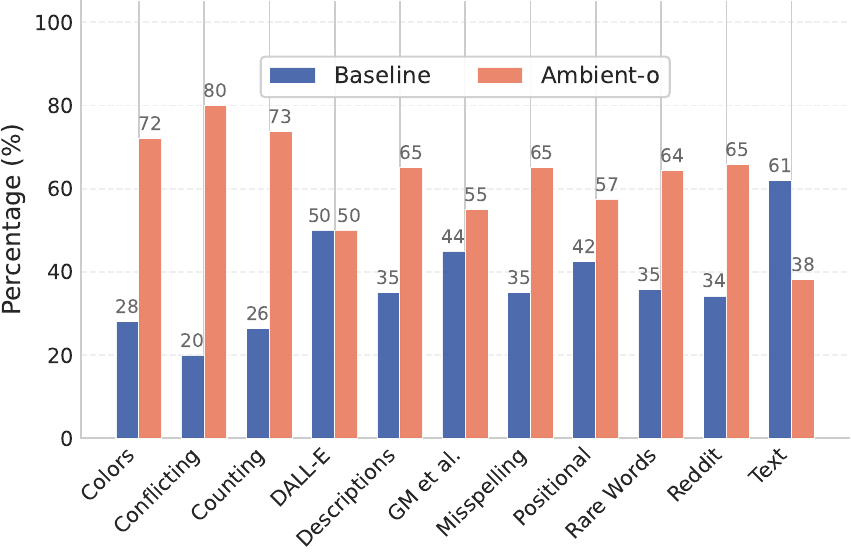}
        \caption{DrawBench}
        \label{fig:drawbench}
    \end{subfigure}
    \hfill
    \begin{subfigure}{0.49\linewidth}
        \centering
        \includegraphics[width=\linewidth]{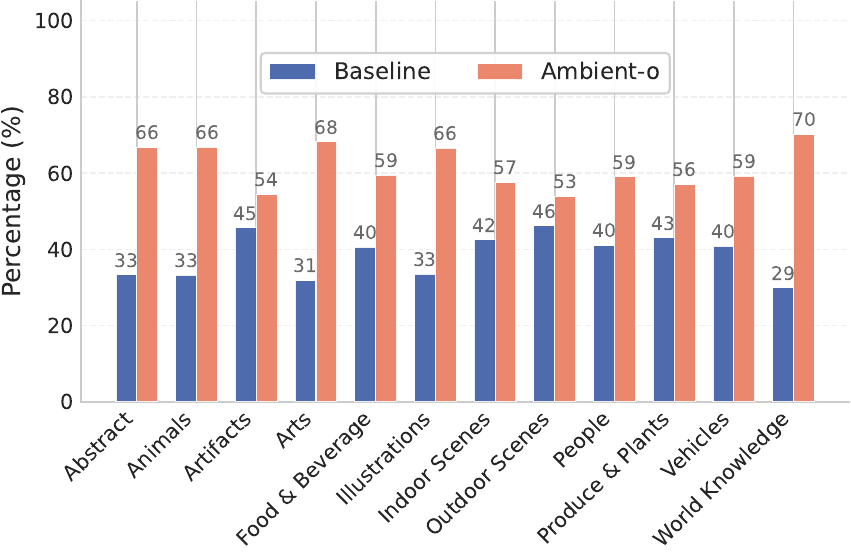}
        \caption{PartiPrompts}
        \label{fig:parti}
    \end{subfigure}
    \caption{Assessing image quality with GPT-4o on DrawBench and PartiPrompts.}
    \label{fig:drawbench-and-parti}
\end{figure}
We set $\sigma_{\min}=2$ for all samples from DiffusionDB and $\sigma_{\min}=0$ for all other datasets and we train a diffusion model with Ambient-o. We note that we did not ablate this hyperparameter and it is quite likely that improved results would be obtained by tuning it or by training a high-quality vs low-quality data classifier for the annotation. Despite that, our trained model achieves a remarkable FID of \textbf{10.61} in COCO, significantly improving the baseline FID of $12.37$ (Table \ref{tab:coco_and_geneval}). We present qualitative results in Figure \ref{fig:fig1} and GPT-4o evaluations on DrawBench and PartiPrompt in Figure \ref{fig:drawbench-and-parti}. Ambient-o and baseline generations for different prompts can be found in Figure \ref{fig:fig1}.
\begin{figure}[!htp]
    \centering
    \begin{subfigure}[b]{\textwidth}
        \centering
        \includegraphics[width=0.49\textwidth]{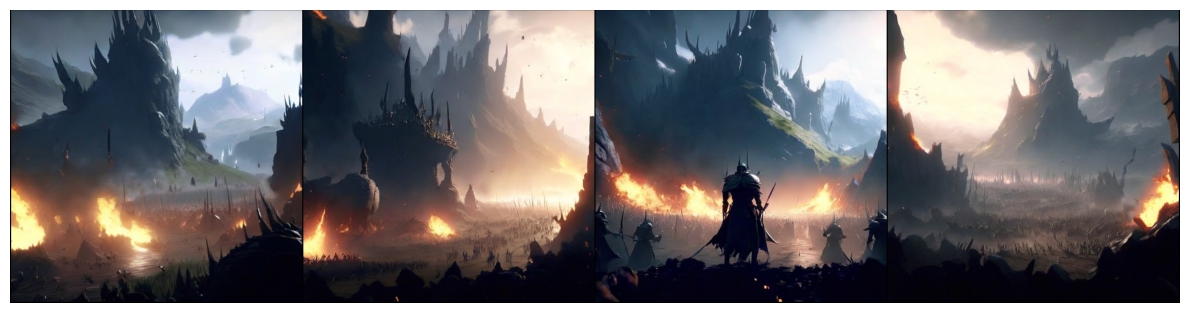}
        \hfill
        \includegraphics[width=0.49\textwidth]{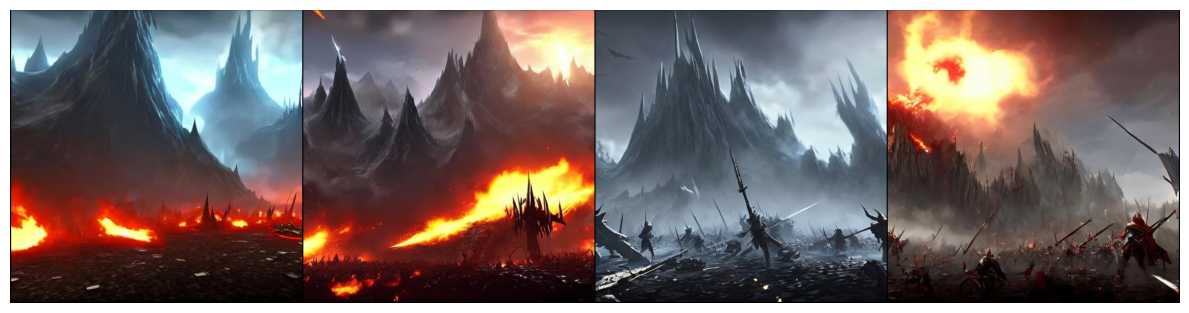}
        \caption{"the great battle of middle earth, unreal engine, trending on artstation, masterpiece"}
        \label{fig:row2}
    \end{subfigure}
    \vspace{0.5cm}
    \begin{subfigure}[b]{\textwidth}
        \centering
        \includegraphics[width=0.49\textwidth]{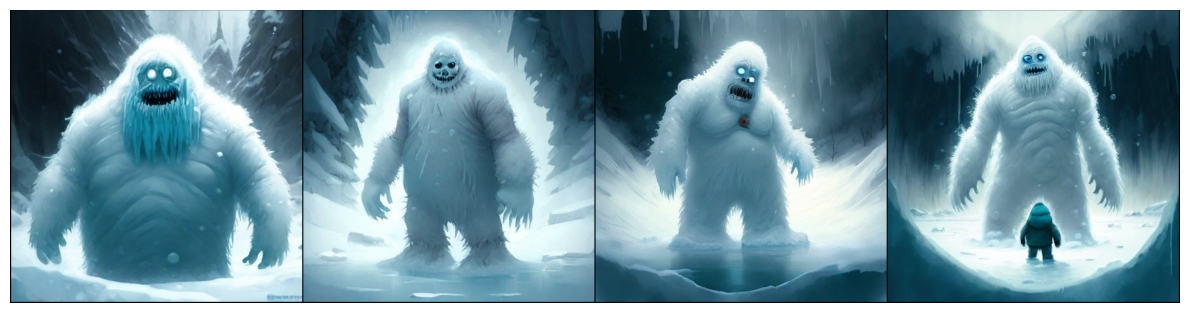}
        \hfill
        \includegraphics[width=0.49\textwidth]{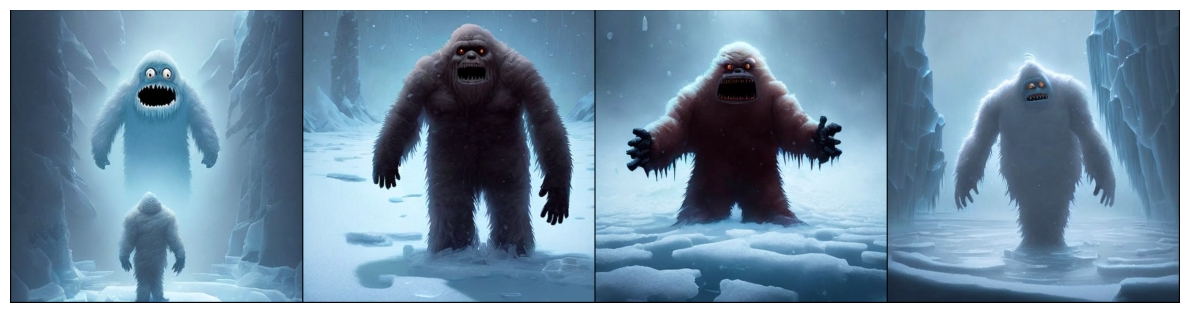}
        \caption{"an abominable snowman trapped in ice by greg rutkowski"}
        \label{fig:row4}
    \end{subfigure}
    \caption{\textbf{Examples of mode collapse}. Left: baseline model finetuned on a high-quality subset. Right: Ambient-o model using all the data. As shown, finetuning decreases output diversity.}
    \label{fig:diversity}
\end{figure}
As an additional ablation, we compared our method with the recipe of doing a final fine-tuning on the highest-quality subset, as done in the works of~\cite{Sehwag2024MicroDiT, dai2023emu}.
Compared to this baseline, our method obtained slightly worse COCO FID ($10.61$ vs $10.27$) but obtained much greater diversity, as seen visually in Figure \ref{fig:diversity} and quantitatively through $>13\%$ increases in DINO Vendi Diversity
on prompts from DiffDB (3.22 vs 3.65.). This corroborates our intuition that data filtration leads to decreased diversity. Ambient-o uses all the data but can strike a fine balance between high-quality and diverse generation.

\begin{takeawaybox}
\textbf{Takeaway 3}: Ambient-o treats synthetic data as corrupted data. This leads to superior visual quality and increased diversity compared to only relying on real samples.
\end{takeawaybox}

\begin{figure}[htp]
    \centering
    \begin{subfigure}[c]{0.44\linewidth}
        \caption{Measuring fidelity and prompt alignment of generated images on COCO dataset.}
        \resizebox{\linewidth}{!}{
        \begin{tabular}{cccc}
            \toprule
            Method  & FID-30K ($\downarrow$) & Clip-FD-30K ($\downarrow$) & Clip-score ($\uparrow$) \\
            \midrule
            Baseline  & 12.37 & 10.07 & 0.345 \\
            Ambient-o \smallemojilogo & \textbf{10.61} & \textbf{9.40} & \textbf{0.348} \\ \bottomrule
        \end{tabular}
        }
    \end{subfigure}
    \hfill
    \begin{subfigure}[c]{0.53\linewidth}
        \caption{Measuring performance on the GenEval benchmark.}
        \resizebox{\linewidth}{!}{
\begin{tabular}{cccccccc}
    \toprule
     &  & \multicolumn{2}{c}{Objects} &  &  &  &  \\ \cline{3-4}
    Method  & Overall & Single & Two & Counting & Colors & Position & \begin{tabular}[c]{@{}c@{}} Color \\ attribution\end{tabular} \\
    \midrule
    Baseline & 0.44 & 0.97 & 0.33 & 0.35 & 0.82 & 0.06 & 0.14 \\
    Ambient-o \smallemojilogo & \textbf{0.47} & 0.97 & \textbf{0.40} & \textbf{0.36} & 0.82 & \textbf{0.11} & 0.14 \\
    \bottomrule
\end{tabular}
        }
    \end{subfigure}
    \caption{Quantitative benefits of Ambient-o on COCO \cite{coco} zero-shot generation and GenEval \cite{geneval}.}
    \label{tab:coco_and_geneval}
\end{figure}

\section{Limitations and Future Work}
 Our work opens several avenues for improvement. On the theoretical side, we aim to establish matching lower bounds to demonstrate that learning from the mixture distribution becomes provably optimal beyond a certain noise threshold. Algorithmically, while our method performs well under high-frequency corruptions, it remains an open question whether more effective training strategies could be used for different types of corruptions (e.g., masking). Moreover, real-world datasets often exhibit patch-wise heterogeneity—for example, facial regions are frequently blurred for privacy, leading to uneven corruption across image crops. We plan to investigate patch-level noise annotations to better capture this structure in future work. Computationally, the full-version of our algorithm requires the training of classifiers for annotations that increases the runtime. This overhead can be avoided by using hand-picked annotation times based on quality proxies as done in our synthetic data experiment. Finally, we believe the true potential of Ambient-o lies in scientific applications, where data often arises from heterogeneous measurement processes.

\section{Conclusion}
Is it possible to get good generative models from bad data? 
Our framework extracts value from low-quality, synthetic, and out-of-distribution sources.
At a time when the ever-growing data demands of GenAI are at odds with the need for quality control, Ambient-o lights a path for both to be achieved simultaneously. 

\section{Acknowledgements}
This research has been supported by NSF Awards CCF-1901292, ONR grants N00014-25-1-2116,  N00014-25-1-2296, a Simons Investigator Award, and the Simons Collaboration on the Theory of Algorithmic Fairness.
The experiments were run on the Vista GPU Cluster through the Center for Generative AI (CGAI) and the Texas Advanced Computing Center (TACC) at UT Austin. Adrián Rodríguez-Muñoz is supported by the La Caixa Fellowship (LCF/BQ/EU22/11930084).

\FloatBarrier

\bibliography{references}
\newpage

\appendix

\section{Theoretical Results}
\label{app_sec:theory}

\subsection{Kernel Estimation}

\begin{assumption}
    The density $p$ is $\lambda$ lipschitz.
    \label{assumption:density}
\end{assumption}

Let $\{X^{(i)}\}_{i=1}^{n}$ a set of $n$ independent samples from a density $p$ that satisfies Assumption \ref{assumption:density}. Let $\hat{p}$ be the empirical density on those samples. 

We are interested in bounding the total variation distance between $p_\sigma := p \circledast {\cal N}(0,\sigma^2)$ and $\hat{p}_{\sigma}= \hat{p} \circledast {\cal N}(0,\sigma^2)$. In particular,
\begin{gather}
    {\hat p}_{\sigma}(x) = \frac{1}{n\sigma}\sum_{i=1}^{n} \phi\left( \frac{X^{(i)} - x}{\sigma}\right),
\end{gather}
where $\phi(u)={1 \over \sqrt{2 \pi}} e^{-u^2/2}$ is the Gaussian kernel. We want to argue that the TV distance between $p_{\sigma}$ and ${\hat p}_{\sigma}$ is small given sufficiently many samples $n$. For simplicity, let's fix the support of $p$ to be $[0, 1]$. We have:

\begin{gather}
    \dtv(p_\sigma, {\hat p}_\sigma) = \frac{1}{2}\int_{0}^{1} | p_{\sigma}(x) - {\hat p}_{\sigma}(x)| \mathrm{d}x 
= \sum_{l=0}^{L-1}\int_{l/L}^{(l+1) / L} | p_\sigma(x) - {\hat p}_\sigma(x)| \mathrm{d}x
\end{gather}

Now let us look at one of the terms of the summation.

\begin{gather}
\int_{l/L}^{(l+1) / L} | p_\sigma(x) - {\hat p}_{\sigma}(x)| \mathrm{d}x =\int_{l/L}^{(l+1) / L} | p_\sigma(x) - p_\sigma(l/L) + p_\sigma(l/L) - {\hat p}_\sigma(x)| \mathrm{d}x \\
\leq \int_{l/L}^{(l+1) / L} | p_\sigma(x) - p_\sigma(l/L)|\mathrm{d}x  + \int_{l/L}^{(l+1) / L}|p_\sigma(l/L) - {\hat p}_\sigma(x)| \mathrm{d}x.
\label{eq:sum_of_integrals}
\end{gather}
We first work on the first term. Using Lemma~\ref{lemma:gaussian convolution doesn't ruin lipshitzness}:
\begin{gather}
    \int_{l/L}^{(l+1) / L} | p_\sigma(x) - p_\sigma(l/L)|\mathrm{d}x \leq \lambda \int_{l/L}^{(l+1) / L} | x - l/L|\mathrm{d}x \\
    = \frac{\lambda}{2L^2}.
\end{gather}

Next, we  work on the second term.
\begin{gather}
    \int_{l/L}^{(l+1) / L}|p_\sigma(l/L) - {\hat p}_\sigma(x)| \mathrm{d}x = \int_{l/L}^{(l+1) / L}|p_\sigma(l/L) - {\hat p}_\sigma(l/L) + {\hat p}_\sigma(l/L) - {\hat p}_\sigma(x)| \mathrm{d}x
    \\
    \leq \int_{l/L}^{(l+1) / L}|p_\sigma(l/L) - {\hat p}_\sigma(l/L)|\mathrm{d}x + \int_{l/L}^{(l+1) / L} |{\hat p}_\sigma(l/L) - {\hat p}_\sigma(x)| \mathrm{d}x.
\end{gather}

According to Lemma~\ref{lemma:lipschitzness_of_approx_density}, we have that ${\hat p}_\sigma$ is $\hat \lambda = \frac{1}{\sigma^2\sqrt{2\pi e}}$ Lipschitz. Then, the second term becomes:
\begin{gather}
    \int_{l/L}^{(l+1) / L} |\hat p_{\sigma}(l/L) - \hat p_{\sigma}(x)| \mathrm{d}x \leq \hat \lambda \int_{l/L}^{(l+1) / L} |l/L - x| \mathrm{d}x 
    = \frac{\hat \lambda}{2L^2}.
\end{gather}

It remains to bound the following term
\begin{gather}
    \int_{l/L}^{(l+1) / L}|p_{\sigma}(l/L) - \hat p_{\sigma}(l/L)|\mathrm{d}x = \frac{|p_{\sigma}(l/L) - \hat p_{\sigma}(l/L)|}{L}
\end{gather}

We will be applying Hoeffding's Inequality, stated below:

\begin{theorem}[Hoeffding's Inequality]
    Let $Y_1, ..., Y_n$ be independent random variables in $[a, b]$ with mean $\mu$. Then,
    
    \begin{gather}
    \Pr\left(\left|\frac{1}{n} \sum_{i=1}^{n} Y_i - \mu\right| \geq t\right) \leq 2\exp\left(-2nt^2/ (b-a)^2\right).
    \end{gather}
\end{theorem}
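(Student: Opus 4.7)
I plan to prove Hoeffding's Inequality via the standard exponential moment (Chernoff) method. First I would reduce to the mean-zero case by setting $Z_i := Y_i - \mu$, so that each $Z_i$ has mean zero and takes values in an interval of width $b - a$. The goal then becomes bounding $\Pr\!\left(\sum_i Z_i \ge nt\right)$; the symmetric tail $\Pr\!\left(\sum_i Z_i \le -nt\right)$ follows by applying the same argument to $-Z_i$, and together they yield the factor of $2$ in the bound.

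The next step is to apply Markov's inequality to $e^{\lambda \sum_i Z_i}$ for an arbitrary $\lambda > 0$ and factorize using independence:
\begin{equation*}
\Pr\!\left(\sum_i Z_i \ge nt\right) \;\le\; e^{-\lambda nt}\, \prod_{i=1}^n \mathbb{E}[e^{\lambda Z_i}].
\end{equation*}
The crucial auxiliary estimate is Hoeffding's Lemma: if $Z$ is mean-zero and supported in an interval of width $b-a$, then $\mathbb{E}[e^{\lambda Z}] \le \exp(\lambda^2 (b-a)^2 / 8)$. I would establish this by exploiting convexity of $x \mapsto e^{\lambda x}$ to dominate it on the support by its secant line, taking expectations (which eliminates one term because $\mathbb{E}[Z] = 0$), and recognizing the result as $e^{L(h)}$ where $h := \lambda(b-a)$ and $L(h) = -p h + \log(1 - p + p e^{h})$ for an appropriate $p \in [0,1]$ depending on the endpoints. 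A short computation then shows $L(0) = L'(0) = 0$ and $\sup_h L''(h) \le 1/4$, so Taylor's theorem yields $L(h) \le h^2/8$.

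Combining the lemma across the $n$ independent factors gives $\mathbb{E}[e^{\lambda \sum_i Z_i}] \le \exp(n \lambda^2 (b-a)^2 / 8)$, so the Chernoff bound becomes $\exp(-\lambda n t + n \lambda^2 (b-a)^2 / 8)$. Optimizing the quadratic-in-$\lambda$ exponent at $\lambda^\star = 4 t / (b-a)^2$ produces $\exp(-2 n t^2 / (b-a)^2)$, and adding the symmetric tail proves the stated inequality. The main obstacle is Hoeffding's Lemma, and specifically the uniform bound $L''(h) \le 1/4$: I would verify it by noting that $L''(h)$ has the form $q(1-q)$ with $q := p e^h / (1 - p + p e^h) \in [0,1]$, which is maximized at $q = 1/2$ with value $1/4$.
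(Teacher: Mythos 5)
Your proposal is a correct and complete outline of the standard Chernoff-plus-Hoeffding's-Lemma proof: the reduction to mean-zero variables, the exponential Markov bound factored by independence, the convexity/secant-line derivation of $\mathbb{E}[e^{\lambda Z}] \le \exp(\lambda^2(b-a)^2/8)$ via $L(0)=L'(0)=0$ and $L''(h)=q(1-q)\le 1/4$, and the optimization at $\lambda^\star = 4t/(b-a)^2$ all check out, as does the factor of $2$ from the two tails. The paper itself states Hoeffding's inequality as a classical tool without proof (it is only invoked inside the proof of the kernel density estimation theorem), so there is nothing to compare against; your argument is the canonical one and is sound.
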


Recall that $\hat p_\sigma$ can be written as
\begin{gather}
    \hat p_{\sigma}(x) = \frac{1}{n}\sum_{i=1}^{n} \frac{\phi((X^{(i)} - x) / \sigma)}{\sigma} 
    = \frac{1}{n}\sum_{i=1}^{n} Y_i, 
\end{gather}
in terms of the random variables $Y_i := \frac{\phi((X^{(i)} - x) / \sigma)}{\sigma}.$
These random variables are supported in $\left[0, \frac{1}{\sqrt{2\pi \sigma^2}}\right]$. So, for any $x$, we have that:
    \begin{gather}
    \Pr\left(\left|\hat p_\sigma(x) - \E[\hat p_\sigma(x)] \right| \geq t\right) \leq 2\exp\left(-4\pi \sigma^2nt^2\right).
    \end{gather}
Taking $t = \sqrt{\frac{\log(2L / \delta)}{4\pi \sigma^2n}}$ and using the above inequality and the union bound, we have that,
 with probability at least $1-\delta$, for all $l \in \{0,1,\ldots,L-1\}$:
\begin{gather}
\left|\hat p_\sigma(l/L) - \E[\hat p_\sigma(l/L)] \right| \leq \sqrt{\frac{\log(2L / \delta)}{4\pi \sigma^2n}}. \label{eq:costas1}
\end{gather}

Let us now compute the expected value of $\hat p_\sigma(x)$. 

\begin{gather}
    \mathbb E[\hat p_{\sigma}(x)] = \mathbb E\left[ \frac{1}{n\sigma}\sum_{i=1}^{n} \phi\left( \frac{X^{(i)} - x}{\sigma}\right) \right] \\
    = \frac{1}{n\sigma} \sum_{i=1}^{n} \E \left[\phi\left( \frac{X^{(i)} - x}{\sigma}\right)\right ] \\
    = \frac{1}{\sigma} \int p(u) \phi\left( \frac{x-u}{\sigma}\right)\mathrm{d}u \equiv (p \circledast {\cal N}(0,\sigma^2))(x)=p_\sigma(x).\label{eq:costas2}
\end{gather}

Combining~\eqref{eq:costas1} and~\eqref{eq:costas2}, we get:

\begin{gather}
\left|\hat p_\sigma(l/L) - p_\sigma(x) \right| \leq \sqrt{\frac{\log(2L / \delta)}{4\pi \sigma^2n}}. \label{eq:costas3}
\end{gather}

Putting everything together we have:
$$\dtv(p_\sigma, {\hat p}_\sigma) \le \frac{\lambda}{2L} + \frac{1}{2L\sigma^2\sqrt{2\pi e}} + \sqrt{\frac{\log(2L / \delta)}{4\pi \sigma^2n}}.$$
Choosing $L=n \cdot \max\{\lambda,1\}$ we get that:
$$\dtv(p_\sigma, {\hat p}_\sigma) \lesssim {1 \over n} + {1 \over \sigma^2 n} + \sqrt{\frac{\log n + \log (1 \vee \lambda) + \log2/\delta}{\sigma^2n}}.$$

\subsection{Evolution of parameters under noise}

\begin{prevproof}{theorem}{theorem:smoothing brings closer}
    We will use the following facts:
    \begin{fact}[Direct corollary of the optimal coupling theorem] \label{fact:optimal coupling theorem}
        There exists a coupling $\gamma$ of $P$ and $Q$, which samples a pair of random variables $(X,Y) \sim \gamma$ such that $\Pr_\gamma[X \neq Y]=\dtv(P,Q).$
    \end{fact}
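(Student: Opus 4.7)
The plan is to construct an explicit optimal (maximal) coupling using the overlap between $P$ and $Q$. First, I would fix a common dominating measure $\mu$ (for instance $\mu = \tfrac{1}{2}(P+Q)$) and write the Radon--Nikodym derivatives $p = dP/d\mu$ and $q = dQ/d\mu$. Setting $\alpha := \dtv(P,Q)$, I would recall the density formula $\alpha = \tfrac{1}{2}\int |p-q|\,d\mu = \int (p-q)^+\,d\mu = 1 - \int \min(p,q)\,d\mu$, so that the nonnegative measurable function $m := \min(p,q)$ has total $\mu$-mass $1-\alpha$, and the residuals $r_P := (p-q)^+$ and $r_Q := (q-p)^+$ each have total $\mu$-mass $\alpha$ while being supported on disjoint sets (the positive and negative sets of the Hahn decomposition of $P-Q$).

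Next, I would construct $\gamma$ as a two-component mixture. With probability $1-\alpha$, draw $X = Y$ from the probability measure whose density with respect to $\mu$ is $m/(1-\alpha)$; with probability $\alpha$, draw $X$ and $Y$ independently from the probability measures with densities $r_P/\alpha$ and $r_Q/\alpha$ respectively. This can be realized on a standard probability space by first sampling a $\{0,1\}$-valued indicator with bias $\alpha$, then using standard measurable selection from the component measures.

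Then I would verify the two required properties. For the marginals: the marginal law of $X$ has density $(1-\alpha)\cdot m/(1-\alpha) + \alpha\cdot r_P/\alpha = m + r_P = p$, so $X\sim P$, and symmetrically $Y\sim Q$. For the disagreement event: in the first branch $X = Y$ by construction; in the second branch the laws of $X$ and $Y$ are supported on disjoint sets, so $X \neq Y$ almost surely. Combining, $\Pr_\gamma[X\neq Y] = \alpha = \dtv(P,Q)$, giving the claimed equality.

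I expect the main obstacle to be mostly bookkeeping rather than conceptual: I would handle the degenerate cases $\alpha = 0$ (take $\gamma$ supported on the diagonal via $X = Y \sim P$) and $\alpha = 1$ (take $\gamma = P\otimes Q$, where $P$ and $Q$ are mutually singular) separately so that the densities $m/(1-\alpha)$ and $r_P/\alpha$, $r_Q/\alpha$ are well-defined. The measurability of the construction follows from the Radon--Nikodym theorem and the existence of regular conditional distributions on a Polish space, so no subtlety beyond standard measure theory is needed. Optionality of this coupling (the lower bound $\Pr[X\neq Y]\ge \dtv(P,Q)$ for every coupling) is not required by the statement, but it would follow immediately from the identity $P(A) - Q(A) \le \Pr[X\in A,\ Y\notin A] \le \Pr[X\neq Y]$ taken over measurable $A$.
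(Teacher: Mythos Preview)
Your construction is the standard maximal coupling and is correct; the marginals check and the disagreement computation both go through, and your handling of the degenerate cases $\alpha \in \{0,1\}$ is appropriate. Note, however, that the paper does not actually supply its own proof of this fact: it is stated as a ``direct corollary of the optimal coupling theorem'' and used as a black box in the proof of Theorem~\ref{theorem:smoothing brings closer}. So your proposal is not so much a different route as a full derivation of a result the paper takes for granted; what you have written is precisely the classical proof that underlies the cited optimal coupling theorem.
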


    \begin{fact}\label{fact:TV of Gaussians}
        For any $x,y \in \mathbb{R}^d$: $\dtv({\cal N}(x,\sigma^2 {\rm I}),{\cal N}(y,\sigma^2 {\rm I})) \le \|x-y\|/2\sigma$
    \end{fact}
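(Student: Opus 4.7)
The plan is to apply Pinsker's inequality, which reduces the task to computing the Kullback--Leibler divergence between two Gaussians with identical covariance. Pinsker states that $\dtv(P,Q) \le \sqrt{\mathrm{KL}(P\|Q)/2}$ for any probability measures $P,Q$ on a common space. Instantiating with $P={\cal N}(x,\sigma^2 {\rm I})$ and $Q={\cal N}(y,\sigma^2 {\rm I})$, it suffices to establish the closed form $\mathrm{KL}(P\|Q) = \|x-y\|^2/(2\sigma^2)$, at which point substitution yields $\dtv(P,Q) \le \sqrt{\|x-y\|^2/(4\sigma^2)} = \|x-y\|/(2\sigma)$, which is exactly the claimed bound with the sharp constant.

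The KL computation is classical: write out the log-density ratio $\log(p(z)/q(z))$ for two Gaussians with equal covariance $\sigma^2 {\rm I}$; the log-determinant and trace terms cancel because the covariances coincide, leaving only the quadratic term $(\|z-y\|^2 - \|z-x\|^2)/(2\sigma^2)$. Taking expectation under $P$ by writing $z = x + \sigma W$ with $W \sim {\cal N}(0,{\rm I})$, the cross term vanishes and the identity $\E\|x-y+\sigma W\|^2 - \E\|\sigma W\|^2 = \|x-y\|^2$ delivers the stated KL value. I would present this computation inline, in a couple of lines, since it is standard enough not to merit a separate lemma.

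An alternative self-contained route is a reduction to one dimension: by the rotational/translational invariance of the isotropic Gaussian, $\dtv$ depends only on $d := \|x-y\|$, and projecting onto the axis spanned by $x-y$ (the orthogonal directions give the same marginals for $P$ and $Q$ and contribute nothing to the TV distance) reduces the problem to bounding $\dtv({\cal N}(0,\sigma^2),{\cal N}(d,\sigma^2))$. That one-dimensional TV distance equals $\mathrm{erf}(d/(2\sqrt{2}\sigma))$, which can be bounded by $d/(\sqrt{2\pi}\sigma) \le d/(2\sigma)$ using $\mathrm{erf}(u) \le (2/\sqrt{\pi})u$. I would favor the Pinsker route since it avoids any explicit special-function estimate and lands on the exact constant the downstream coupling argument in Theorem~\ref{theorem:smoothing brings closer} needs.

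There is no substantive obstacle: both ingredients (Pinsker's inequality with the sharp factor of $1/2$, and the Gaussian KL formula) are textbook. The only care required is tracking constants -- using Pinsker in the form $\dtv \le \sqrt{\mathrm{KL}/2}$ rather than $\dtv \le \sqrt{2\mathrm{KL}}$ -- so that the $1/(2\sigma)$ on the right-hand side comes out exactly, matching how the fact is invoked in the coupling argument for Theorem~\ref{theorem:smoothing brings closer}.
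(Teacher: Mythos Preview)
Your proposal is correct and follows essentially the same approach as the paper: compute the KL divergence between the two isotropic Gaussians (the paper quotes the general multivariate Gaussian KL formula and specializes, you derive it directly from the log-density ratio, arriving at the same $\|x-y\|^2/(2\sigma^2)$), then apply Pinsker's inequality with the sharp constant to obtain $\|x-y\|/(2\sigma)$. The alternative one-dimensional reduction you sketch is a nice aside but not what the paper uses.
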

    \begin{proof}
        The KL divergence between ${\cal N}(\mu_1,\Sigma_1)$ and ${\cal N}(\mu_2,\Sigma_2)$ is $${\rm KL}({\cal N}(\mu_1,\Sigma_1),{\cal N}(\mu_2,\Sigma_2))={1 \over 2}\left({\rm tr}(\Sigma_2^{-1}\Sigma_1) + (\mu_2-\mu_1)\Sigma_2^{-1}(\mu_2-\mu_1)-d + \log{|\Sigma_2| \over |\Sigma_1|}\right).$$
        Applying this general result to our case:
        $${\rm KL}({\cal N}(x,\sigma^2 {\rm I}),{\cal N}(y,\sigma^2 {\rm I}))={1 \over 2}\left(\|x-y\|^2 \over \sigma^2\right).$$
        We conclude by applying Pinsker's inequality.
    \end{proof}
     A corollary of Fact~\ref{fact:TV of Gaussians} and the optimal coupling theorem is the following:
     \begin{fact} \label{fact: (x,y)-anchored sampling of isotropic normal}
        Fix arbitrary $x,y \in \mathbb{R}^d$. There exists a coupling $\gamma_{x,y}$ of ${\cal N}(0,\sigma^2 {\rm I})$ and ${\cal N}(0,\sigma^2 {\rm I})$, which samples a pair of random variables $(Z,Z') \sim \gamma_{x,y}$ such that $\Pr_{\gamma_{x,y}}[x+Z \neq y+Z']=\|x-y\|/2\sigma.$
    \end{fact}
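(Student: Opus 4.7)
The plan is to chain the two previously established ingredients: the TV bound for shifted isotropic Gaussians (Fact~\ref{fact:TV of Gaussians}) and the optimal coupling theorem (Fact~\ref{fact:optimal coupling theorem}), then pass back to the centered Gaussians by a deterministic translation. Concretely, I will first construct the desired coupling on the pair of \emph{shifted} Gaussians $\mathcal{N}(x,\sigma^2 {\rm I})$ and $\mathcal{N}(y,\sigma^2 {\rm I})$, and then recenter to produce a coupling of two copies of $\mathcal{N}(0,\sigma^2 {\rm I})$.

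Step one: apply Fact~\ref{fact:TV of Gaussians} to the pair $P' := \mathcal{N}(x,\sigma^2 {\rm I})$ and $Q' := \mathcal{N}(y,\sigma^2 {\rm I})$ to conclude $\dtv(P',Q') \le \|x-y\|/(2\sigma)$. Step two: invoke the optimal coupling theorem (Fact~\ref{fact:optimal coupling theorem}) on $(P',Q')$ to obtain a coupling $\tilde\gamma_{x,y}$ that samples $(U,V)$ with $U \sim P'$, $V \sim Q'$, and $\Pr[U \neq V] = \dtv(P',Q')$. Step three: define the pushforward coupling $\gamma_{x,y}$ as the joint law of $(Z,Z') := (U-x,\ V-y)$. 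Since translation is deterministic and invertible, $Z$ and $Z'$ each have marginal $\mathcal{N}(0,\sigma^2 {\rm I})$, so $\gamma_{x,y}$ is a valid coupling of two centered isotropic Gaussians. Moreover, the events $\{U \neq V\}$ and $\{x+Z \neq y+Z'\}$ coincide identically, giving
\[
\Pr_{\gamma_{x,y}}[x+Z \neq y+Z'] = \Pr[U \neq V] = \dtv(P',Q') \le \frac{\|x-y\|}{2\sigma},
\]
which is the claimed bound (the ``$=$'' in the statement should be read as ``$\le$'', inherited from Pinsker's inequality in Fact~\ref{fact:TV of Gaussians}).

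There is no real obstacle: the argument is a one-line reduction, and the only conceptual point is the translation-invariance of couplings, i.e., that applying the affine maps $u \mapsto u - x$ and $v \mapsto v - y$ to the marginals yields a valid coupling of the shifted distributions whose disagreement event is unchanged. Everything else is bookkeeping.
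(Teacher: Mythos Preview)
Your proof is correct and matches the paper's own justification: the paper introduces this fact with the single line ``A corollary of Fact~\ref{fact:TV of Gaussians} and the optimal coupling theorem is the following,'' which is precisely the chain you execute (with the explicit recentering step spelled out). Your observation that the ``$=$'' in the statement should be ``$\le$'' is also correct, since Fact~\ref{fact:TV of Gaussians} only delivers an upper bound via Pinsker; this is harmless downstream because the theorem that uses this fact only needs the inequality.
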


    Now let us denote by ${\tilde P} = P\circledast{\cal N}(0,\sigma^2 {\rm I})$ and ${\tilde Q} = Q\circledast{\cal N}(0,\sigma^2 {\rm I})$. To establish our claim in the theorem statement, it suffices to exhibit a coupling $\tilde \gamma$ of $\tilde P$ and $\tilde Q$ which samples a pair of random variables $({\tilde X},{\tilde Y}) \sim {\tilde \gamma}$ such that: $\Pr_{\tilde \gamma}[{\tilde X} \neq {\tilde Y}] \le \dtv(P,Q)\cdot {D \over 2\sigma}.$ We define coupling $\tilde{\gamma}$ as follows:

    1) sample $(X,Y) \sim \gamma$ (as specified in Fact~\ref{fact:optimal coupling theorem}), 2) sample $(Z,Z') \sim \gamma_{X,Y}$ (as specified in Fact~\ref{fact: (x,y)-anchored sampling of isotropic normal}) and 3) output $({\tilde X}, {\tilde Y}):=(X+Z,Y+Z')$.


    Let us argue the following:
    \begin{lemma}
        The afore-described sampling procedure $\tilde \gamma$ is a valid coupling of ${\tilde P}$ and ${\tilde Q}$.
    \end{lemma}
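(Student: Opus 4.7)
The plan is to verify that the two one-dimensional marginals of the joint sampling procedure $\tilde\gamma$ agree with $\tilde P$ and $\tilde Q$, respectively. I will only argue the $\tilde P$-marginal, since the $\tilde Q$-marginal follows by an identical computation.

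First I would unpack the construction. A sample from $\tilde\gamma$ is obtained in two stages: (i) draw $(X,Y)\sim\gamma$, a coupling of $P$ and $Q$; (ii) conditional on $(X,Y)=(x,y)$, draw $(Z,Z')\sim\gamma_{x,y}$, a coupling of ${\cal N}(0,\sigma^2{\rm I})$ with itself. Crucially, for every fixed $(x,y)$ the first marginal of $\gamma_{x,y}$ is ${\cal N}(0,\sigma^2{\rm I})$, \emph{not depending on} $(x,y)$. This observation is the pivot of the argument: the conditional law of $Z$ given $(X,Y)=(x,y)$ is the same distribution ${\cal N}(0,\sigma^2{\rm I})$ for every $(x,y)$, so $Z$ is independent of $(X,Y)$, and in particular independent of $X$.

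Next I would compute the law of $\tilde X = X+Z$ directly, bypassing the independence claim if one prefers a more hands-on derivation. For any measurable $A\subseteq\mathbb{R}^d$,
\[
\Pr_{\tilde\gamma}[\tilde X\in A]
=\int \Pr_{\gamma_{x,y}}[x+Z\in A]\,d\gamma(x,y)
=\int\!\!\int \mathbf{1}[x+z\in A]\,\phi_{\sigma}(z)\,dz\,dP(x),
\]
where the second equality uses that the first marginal of $\gamma_{x,y}$ is ${\cal N}(0,\sigma^2{\rm I})$ (hence the inner integral depends on $x$ only) and that the $X$-marginal of $\gamma$ is $P$ (since $\gamma$ is a coupling of $P$ and $Q$). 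The last expression is exactly $(P\circledast{\cal N}(0,\sigma^2{\rm I}))(A)=\tilde P(A)$, so $\tilde X\sim\tilde P$.

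By the symmetric argument applied to $\tilde Y=Y+Z'$, using that the second marginal of $\gamma_{x,y}$ is ${\cal N}(0,\sigma^2{\rm I})$ and the $Y$-marginal of $\gamma$ is $Q$, we get $\tilde Y\sim\tilde Q$. This establishes that $\tilde\gamma$ is a valid coupling of $\tilde P$ and $\tilde Q$. The only conceptual subtlety — and the one place where care is needed — is the uniformity-in-$(x,y)$ of the marginals of $\gamma_{x,y}$, which is what makes the iterated integration cleanly factor; beyond that the proof is bookkeeping.
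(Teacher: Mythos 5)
Your proof is correct and follows essentially the same route as the paper's: both hinge on the observation that the first marginal of $\gamma_{x,y}$ is ${\cal N}(0,\sigma^2{\rm I})$ for every $(x,y)$, hence $Z$ is independent of $X$ and $\tilde X = X+Z \sim \tilde P$. Your explicit iterated-integral computation is just a more hands-on rendering of the paper's one-line independence argument.
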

    \begin{proof}
        We need to establish that the marginals of $\tilde \gamma$ are $\tilde P$ and $\tilde Q$. We will only show that for $({\tilde X}, {\tilde Y}) \sim \tilde \gamma$ according to the afore-described sampling procedure, the marginal distribution of $\tilde X$ is $\tilde P$, as the proof for $\tilde Y$ is identical. Since $\gamma$ is a coupling of $P$ and $Q$, for $(X,Y) \sim \gamma$, the marginal distribution of $X$ is $P$. By Fact~\ref{fact: (x,y)-anchored sampling of isotropic normal}, conditioning on any value of $X$ and $Y$, the marginal distribution of $Z$ is ${\cal N}(0,\sigma^2 \rm I)$. Thus, $\tilde X=X+Z$, where $X \sim P$ and independently $Z \sim {\cal N}(0,\sigma^2 {\rm I})$, and thus the distribution of $\tilde X$ is $\tilde P$.
    \end{proof}
    
    \begin{lemma}
        Under the afore-described coupling $\tilde \gamma$: $\Pr_{\tilde \gamma}[{\tilde X} \neq {\tilde Y}] \le \dtv(P,Q)\cdot {D \over 2\sigma}.$
    \end{lemma}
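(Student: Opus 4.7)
The plan is to decompose the event $\{\tilde X \neq \tilde Y\}$ by conditioning on the outer pair $(X,Y)\sim\gamma$ and using the tower rule together with the two facts already at hand. Specifically, I would write
\begin{gather*}
\Pr_{\tilde\gamma}[\tilde X \neq \tilde Y] \;=\; \E_{(X,Y)\sim\gamma}\!\left[\Pr_{(Z,Z')\sim\gamma_{X,Y}}[X+Z \neq Y+Z' \mid X,Y]\right].
\end{gather*}
By Fact~\ref{fact: (x,y)-anchored sampling of isotropic normal}, the inner conditional probability equals $\|X-Y\|/(2\sigma)$, so the right-hand side becomes $\E_{(X,Y)\sim\gamma}[\|X-Y\|/(2\sigma)]$.

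Next I would split the expectation according to whether $X=Y$ or $X\neq Y$ under the optimal coupling $\gamma$. On the event $\{X=Y\}$ the integrand vanishes. On the event $\{X\neq Y\}$ we use the diameter hypothesis: since both $P$ and $Q$ are supported in a set of diameter $D$, the coupling $\gamma$ places mass only on pairs $(X,Y)$ with $\|X-Y\|\le D$. Therefore
\begin{gather*}
\E_{(X,Y)\sim\gamma}\!\left[\tfrac{\|X-Y\|}{2\sigma}\right] \;\le\; \tfrac{D}{2\sigma}\,\Pr_\gamma[X\neq Y].
\end{gather*}

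Finally, invoking Fact~\ref{fact:optimal coupling theorem}, which gives $\Pr_\gamma[X\neq Y] = \dtv(P,Q)$, yields the desired bound $\Pr_{\tilde\gamma}[\tilde X \neq \tilde Y] \le \dtv(P,Q)\cdot D/(2\sigma)$. There is no real obstacle here — the work has already been done in setting up the coupling and stating Facts~\ref{fact:optimal coupling theorem} and~\ref{fact: (x,y)-anchored sampling of isotropic normal}; the lemma reduces to applying the tower rule, bounding $\|X-Y\|$ by $D$ on $\{X\neq Y\}$, and reading off $\Pr_\gamma[X\neq Y]$ from the optimal coupling. The only mild subtlety is confirming that the diameter bound is applied correctly: $P$ and $Q$ share the same ambient support set of diameter $D$, so any pair drawn from $\gamma$ has $\|X-Y\|\le D$ deterministically, which is precisely what the argument needs.
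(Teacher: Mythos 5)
Your proposal is correct and follows essentially the same route as the paper: condition on $(X,Y)\sim\gamma$, use the anchored Gaussian coupling to get the conditional mismatch probability $\|X-Y\|/(2\sigma)$, bound it by $D/(2\sigma)$ on the event $\{X\neq Y\}$, and read off $\Pr_\gamma[X\neq Y]=\dtv(P,Q)$ from the optimal coupling. Your explicit use of the tower rule is just a slightly more formal phrasing of the paper's "for $\tilde X\neq\tilde Y$ to happen, it must be that $X\neq Y$ happens and, conditioning on this, that $X+Z\neq Y+Z'$ happens."
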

    \begin{proof}
        Notice that, when $X=Y$, by Fact~\ref{fact: (x,y)-anchored sampling of isotropic normal}, $Z=Z'$ with probability $1$, and therefore ${\tilde X}={\tilde Y}$. So for event ${\tilde X} \neq {\tilde Y}$ to happen, it must be that $X \neq Y$ happens and, conditioning on this event, that $X+Z \neq Y+Z'$ happens.  By Fact~\ref{fact:optimal coupling theorem}, $\Pr_{\gamma}[X \neq Y] = \dtv(P,Q).$ By Fact~\ref{fact: (x,y)-anchored sampling of isotropic normal}, for any realization of $(X,Y)$, $\Pr_{\gamma_{X,Y}}[X+Z \neq Y+Z'] = {\|X-Y\| \over 2 \sigma} \le {D \over 2 \sigma}$, where we used that $P$ and $Q$ are supported on a set with diameter $D$. Putting the above together, the claim follows.
    \end{proof}
\end{prevproof}

\subsection{Auxiliary Lemmas}
\begin{lemma}[Lipschitzness of the empirical density]
    For a collection of points $X^{(1)},\ldots,X^{(n)}$ consider the function ${\hat p}_{\sigma}(x) = \frac{1}{n\sigma}\sum_{i=1}^{n} \phi\left( \frac{X^{(i)} - x}{\sigma}\right),$ where $\phi(u)={1 \over \sqrt{2 \pi}} e^{-u^2/2}$ is the Gaussian kernel. Then $p_\sigma$ is $\left(\frac{1}{\sigma^2\sqrt{2\pi e}}\right)$-Lipschitz.
    \label{lemma:lipschitzness_of_approx_density}
\end{lemma}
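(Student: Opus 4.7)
The plan is to bound the derivative of $\hat p_\sigma$ pointwise and then conclude Lipschitzness. Since $\hat p_\sigma(x) = \frac{1}{n\sigma}\sum_{i=1}^n \phi\!\left(\frac{X^{(i)}-x}{\sigma}\right)$ is a finite sum of smooth functions, I can differentiate termwise and use the identity $\phi'(u) = -u\,\phi(u)$ for the standard Gaussian kernel.

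First, apply the chain rule. Differentiating each summand in $x$ brings out a factor of $-1/\sigma$, and combined with $\phi'(u)=-u\phi(u)$, one obtains
\begin{equation*}
\hat p_\sigma'(x) \;=\; \frac{1}{n\sigma^2}\sum_{i=1}^{n}\frac{X^{(i)}-x}{\sigma}\;\phi\!\left(\frac{X^{(i)}-x}{\sigma}\right).
\end{equation*}
Writing $u_i = (X^{(i)}-x)/\sigma$, the triangle inequality gives $|\hat p_\sigma'(x)| \le \frac{1}{n\sigma^2}\sum_i |u_i\phi(u_i)| \le \frac{1}{\sigma^2}\sup_{u\in\mathbb{R}}|u\phi(u)|$.

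Second, I maximize $g(u) := |u|\phi(u) = \frac{|u|}{\sqrt{2\pi}}\,e^{-u^2/2}$ over $u\in\mathbb{R}$. Setting $g'(u)=0$ for $u>0$ yields $(1-u^2)e^{-u^2/2}=0$, so the maximum is attained at $u=\pm 1$, giving $\sup_u g(u) = \frac{1}{\sqrt{2\pi e}}$. Substituting back, $|\hat p_\sigma'(x)| \le \frac{1}{\sigma^2\sqrt{2\pi e}}$ uniformly in $x$.

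Third, the uniform derivative bound implies the claimed Lipschitz constant: for any $x,y\in\mathbb{R}$, the mean value theorem (or fundamental theorem of calculus) yields $|\hat p_\sigma(x)-\hat p_\sigma(y)| \le \frac{1}{\sigma^2\sqrt{2\pi e}}\,|x-y|$. There is no real obstacle in this argument; the only mildly nontrivial step is identifying that the maximum of $|u|e^{-u^2/2}$ occurs at $|u|=1$ and equals $e^{-1/2}$, which accounts for the factor $\sqrt{e}$ in the denominator of the Lipschitz constant.
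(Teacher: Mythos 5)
Your proof is correct and follows essentially the same route as the paper's: differentiate the kernel sum termwise, reduce to bounding $\sup_u |u|e^{-u^2/2} = e^{-1/2}$ attained at $|u|=1$, and conclude via the uniform derivative bound. Your version is in fact slightly more careful than the paper's, since you take absolute values throughout and invoke the mean value theorem explicitly at the end.
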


\begin{proof}
    Let us compute the  derivative of ${\hat p}_\sigma$:
    \begin{gather}
        {\hat p}_\sigma'(x) = \frac{1}{n\sigma}\sum_{i=1}^{n} {d \over dx}\phi\left( \frac{x - X^{(i)}}{\sigma}\right) \\
        = \frac{1}{\sqrt{2 \pi}n\sigma} \sum_{i=1}^{n} \exp\left(-( X^{(i)}-x)^2/(2\sigma^2)\right) {X^{(i)} - x \over \sigma^2} \\
        \le \frac{1}{\sqrt{2 \pi} \sigma^2} \max_u \exp(-u^2/2) u \\
        \le \frac{1}{\sigma^2 \sqrt{2\pi e}}.
    \end{gather}
\end{proof}

\begin{lemma}[Lipschitzness of a density convolved with a Gaussian] \label{lemma:gaussian convolution doesn't ruin lipshitzness}
    Let $p$ be a density that is $\lambda$-Lipschitz. Let $p_\sigma = p \circledast \mathcal N(0, \sigma^2I)$. Then, $p_\sigma$ is also $\lambda$-Lipschitz.
\end{lemma}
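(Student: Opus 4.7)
The plan is to unfold the convolution and push the Lipschitz bound through the integral. Write
\[
p_\sigma(x) \;=\; \int p(x-y)\, \phi_\sigma(y)\, dy,
\]
where $\phi_\sigma$ denotes the density of $\mathcal N(0,\sigma^2 I)$. For any two points $x, x'$, subtract the corresponding integral representations and bring the difference inside:
\[
|p_\sigma(x) - p_\sigma(x')| \;=\; \Bigl|\int \bigl(p(x-y) - p(x'-y)\bigr)\,\phi_\sigma(y)\, dy\Bigr|.
\]

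Next I would apply the triangle inequality for integrals, then invoke the hypothesis that $p$ is $\lambda$-Lipschitz pointwise, giving $|p(x-y) - p(x'-y)| \le \lambda \|x - x'\|$ uniformly in $y$. Since $\phi_\sigma$ is a probability density and integrates to $1$, the right-hand side collapses to $\lambda \|x - x'\|$, which is exactly the claim.

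There is essentially no obstacle: the convolution translates $p$, and translation preserves the Lipschitz constant; averaging (against any probability kernel) can only shrink, never amplify, the pointwise modulus of continuity. The only thing to be careful about is that $\phi_\sigma$ is nonnegative with unit mass so that the bound $\int \lambda \|x - x'\| \phi_\sigma(y)\, dy = \lambda \|x-x'\|$ is tight rather than merely valid. The same argument goes through with $\mathcal N(0,\sigma^2 I)$ replaced by any probability measure, which is worth noting since the conclusion does not actually depend on the Gaussian form.
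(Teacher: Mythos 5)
Your argument is correct and is essentially the same as the paper's: both unfold the convolution, apply the triangle inequality for integrals, use the Lipschitz bound uniformly in the integration variable, and conclude because $\phi_\sigma$ integrates to one. Your added remark that the Gaussian form is irrelevant and any probability kernel works is a valid (if unneeded) observation.
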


\begin{proof}
    Let us denote with $\phi_\sigma(\cdot)$ the Gaussian density with variance $\sigma^2$.
    We have that:
    \begin{gather}
        p_\sigma(x) - p_\sigma(y) = \int_{} (p(x-\tau) - p(y -\tau)) \phi_\sigma(\tau) d\tau  \Rightarrow \\
        | p_\sigma(x) - p_\sigma (y)| \leq \int_{} |p(x-\tau) - p(y-\tau)| \phi_\sigma(\tau) d\tau \\
        \leq \lambda |x-y| \cdot \int_{} \phi_\sigma(\tau) d\tau \\
        = \lambda |x-y|.
    \end{gather}
\end{proof}

\section{Additional Results}

\subsection{CIFAR-10 controlled corruptions}
\label{app_sec:additional_cifar10_results}
Figures \ref{fig:blur_cifar10},\ref{fig:motion_blur_cifar10},\ref{fig:jpeg_cifar10} show gaussian blur, motion blur, and JPEG corrupted CIFAR-10 images respectively at different levels of severity. Appendix Table \ref{table:controlled-jpeg} shows results for JPEG compressed data at different levels of compression. We also tested our method for motion blurred data with high severity, visualized in the last row of Appendix Figure \ref{fig:motion_blur_cifar10}), obtaining a best FID of 5.85 (compared to 8.79 of training on only the clean data).

\begin{figure}[htbp]
    \centering
    \input{figures_latex/appendix/cifar10_corruptions/blur_modular}\hfill
    \input{figures_latex/appendix/cifar10_corruptions/jpeg_modular}

\end{figure}

\begin{figure}[htbp]
    \centering
    \begin{subfigure}{0.16\textwidth}
        \centering
        \includegraphics[width=\textwidth]{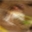}
        \label{fig:row1image1}
    \end{subfigure}
    \begin{subfigure}{0.16\textwidth}
        \centering
        \includegraphics[width=\textwidth]{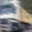}
        \label{fig:row1image2}
    \end{subfigure}
    \begin{subfigure}{0.16\textwidth}
        \centering
        \includegraphics[width=\textwidth]{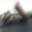}
        \label{fig:row1image3}
    \end{subfigure}
    \\ \vspace{-1em}
    \begin{subfigure}{0.16\textwidth}
        \centering
        \includegraphics[width=\textwidth]{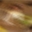}
        \label{fig:row2image1}
    \end{subfigure}
    \begin{subfigure}{0.16\textwidth}
        \centering
        \includegraphics[width=\textwidth]{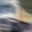}
        \label{fig:row2image2}
    \end{subfigure}
    \begin{subfigure}{0.16\textwidth}
        \centering
        \includegraphics[width=\textwidth]{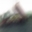}
        \label{fig:row2image3}
    \end{subfigure}
    \\ \vspace{-1em}
    \begin{subfigure}{0.16\textwidth}
        \centering
        \includegraphics[width=\textwidth]{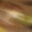}
        \label{fig:row3image1}
    \end{subfigure}
    \begin{subfigure}{0.16\textwidth}
        \centering
        \includegraphics[width=\textwidth]{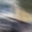}
        \label{fig:row3image2}
    \end{subfigure}
    \begin{subfigure}{0.16\textwidth}
        \centering
        \includegraphics[width=\textwidth]{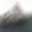}
        \label{fig:row3image3}
    \end{subfigure}
    \caption{CIFAR-10 images corrupted with motion blur at increasing levels of corruption.}
    \label{fig:motion_blur_cifar10}
\end{figure}

\begin{table}[ht]
\centering
\caption{Results for learning from JPEG compressed data on CIFAR-10.}
\resizebox{\textwidth}{!}{
\begin{tabular}{l l c c c c c}
\toprule
\textbf{Method} & \textbf{Dataset} & \textbf{Clean (\%)} & \textbf{Corrupted (\%)} & \textbf{JPEG Compression (Q)} & $\bm{\bar \sigma_{t_n}^{\min}}$ & \textbf{FID} \\
\midrule
\textbf{Only Clean} & Cifar-10 & 10 & 0 & -- & -- & 8.79 \\
\hline
\multirow{6}{*}{\textbf{Ambient Omni}} & \multirow{6}{*}{Cifar-10} & \multirow{6}{*}{10} & \multirow{6}{*}{90} & $15\%$ & $1.60$ & $6.67$ \\
& & & & $18\%$ & $1.40$ & $6.43$ \\
& & & & $25\%$ & $1.27$ & $6.34$ \\
& & & & $50\%$ & $1.03$ & $5.94$ \\
& & & & $75\%$ & $0.81$ & $5.57$ \\
& & & & $90\%$ & $0.63$ & $4.72$ \\
\bottomrule
\end{tabular}
}
\label{table:controlled-jpeg}
\end{table}

\subsection{FFHQ-64x64 controlled corruptions}

In Appendix \ref{tab:pixel_diffusion_ffhq_blur} we show additional results for learning from blurred data on the FFHQ dataset. Similarly to the main paper, we observe that our Ambient-o algorithm leads to improvements over just using the high-quality data that are inversely proportional to the corruption level.

\begin{table}[htp]
\centering
\caption{Results for learning from blurred data, FFHQ.}
\resizebox{\textwidth}{!}{
\begin{tabular}{l l c c c c c}
\toprule
\textbf{Method} & \textbf{Dataset} & \textbf{Clean (\%)} & \textbf{Corrupted (\%)} & \textbf{Parameters Values ($\sigma_B$)} & $\bm{\bar \sigma_{t_n}^{\min}}$ & \textbf{FID} \\
\midrule
\multirow{1}{*}{\textbf{Only Clean}} & FFHQ & \multirow{1}{*}{10} & \multirow{1}{*}{0} & - & - & $5.12$ \\
\midrule
\multirow{3}{*}{\textbf{Ambient Omni}} 
& \multirow{3}{*}{FFHQ} & 10 & 90 & 0.8 & $2.89$ & $4.95$ \\
&  & 10 & 90 & 0.6 & $2.12$ & $4.65$ \\
&  & 10 & 90 & 0.4 & $0.63$ & $3.32$ \\
\bottomrule
\end{tabular}
}
\label{tab:pixel_diffusion_ffhq_blur}
\end{table}

\subsection{ImageNet results}

In the main paper, we used FID as a way to measure the quality of generated images. However, FID is computed with respect to the test dataset that might also have samples of poor quality. Further, during FID computation, quality and diversity are entangled. To disentangle the two, we generate images using the EDM-2 baseline and our Ambient-o model and we use CLIP to evaluate the quality of the generated image (through the CLIP-IQA metric implemented in the PIQ package \cite{kastryulin2022piq, piq}). We present results and win-rates in Table \ref{tab:xxl_models_clip_based_comparison}. As shown, Ambient-o achieves a better per-image quality compared to the baseline despite using exactly the same model, hyperparameters, and optimization algorithm. The difference comes solely from better use of the available data.

\begin{table}[htp]
\centering
\caption{Additional comparison between EDM-2 XXL and our Ambient-o model using the CLIP IQA metric for image quality assesment. Ambient-o leads to improved scores despite using the exact same architecture, data and hyperparameters. For this experiment, we use the models with guidance optimized for DINO FD since they are the ones producing the higher quality images.}
\label{tab:xxl_models_clip_based_comparison}
\begin{tabular}{@{}lcc@{}}
\toprule
\textbf{Metric} & \textbf{EDM-2~\citep{Karras2024edm2} XXL} & \textbf{Ambient-o \smallemojilogo \ XXL crops} \\ \midrule
Average CLIP IQA score & 0.69 & \textbf{0.71} \\
Median CLIP IQA score  & 0.79 & \textbf{0.80} \\
\midrule
Win-rate & $47.98\%$ & $\bm{52.02\%}$ \\
\bottomrule
\end{tabular}
\end{table}

\section{Ambient diffusion implementation details and loss ablations}
Similar to the EDM-2~\citep{Karras2024edm2} paper, we use a pre-condition weight to balance the importance of different diffusion times. Specifically, we modulate the EDM2 weight $\lambda(\sigma)$ by a factor:
\begin{equation}
    \lambda_{\text{amb}}(\sigma, \sigma_{\text{min}}) = \sigma^4 / (\sigma^2 - \sigma_{\text{min}}^2)^2
\end{equation}
for our ambient loss based on a similar analysis to~\citep{Karras2024edm2}. We further use a buffer zone around the annotation time of each sample to ensure that the loss doesn't have singularities due to divisions by 0. We ablate the precondition term and the buffer size in Appendix Table \ref{tab:ablation_study_ambient_weight_buffer}.

\begin{table}[htp]
    \centering
    \caption{Ablation study of ambient weight and stability buffer on Cifar-10 with 10\% clean data and 90\% corrupted data with blur of $0.6$.}
    \label{tab:ablation_study_ambient_weight_buffer}
    \begin{tabular}{lc}
        \toprule
        \textbf{Method} & \textbf{FID} ↓ \\
        \midrule
        \textit{No ambient preconditioning weight and no buffer:} & \\
        $\lambda_{\text{amb}}(\sigma, \sigma_{\text{min}})=1$ \& $\sigma > \sigma_{\text{min}}$ & 5.49 \\
        \cdashline{1-2}
        \textit{Adding ambient preconditioning weight:} & \\
        \quad + Weight $\lambda_{\text{amb}}(\sigma, \sigma_{\text{min}}) = \sigma^4 / (\sigma^2 - \sigma_{\text{min}}^2)^2$ & 5.36 \\
        \cdashline{1-2}
        \textit{Adding stability buffer/clipping:} & \\
        \qquad + Clip $\lambda_{\text{amb}}(\sigma, \sigma_{\text{min}})$ at 2.0 & 5.35 \\
        \qquad + Clip $\lambda_{\text{amb}}(\sigma, \sigma_{\text{min}})$ at 4.0 & 5.69 \\
        \qquad + Buffer $\lambda_{\text{amb}}(\sigma, \sigma_{\text{min}})$ at 2.0 i.e. $\sigma > \sqrt{2}\sigma_{\text{min}}$ & 5.40 \\
        \qquad + Buffer $\lambda_{\text{amb}}(\sigma, \sigma_{\text{min}})$ at 4.0 i.e. $\sigma > (2/\sqrt{3})\sigma_{\text{min}}$ & \textbf{5.34} \\
        \bottomrule
    \end{tabular}
\end{table}

For our ablations, we focus on the setting of training with $10\%$ clean data and $90\%$ corrupted data with Gaussian blur of $\sigma_B=0.6$. Using no ambient pre-conditioning and no buffer, we obtain an FID of 5.56. In the same setting, adding the ambient pre-conditioning weight $\lambda_{\text{amb}}(\sigma, \sigma_{\text{min}})$ improves FID by 0.13 points. Next, we ablate two strategies to mitigate the impact of the singularity of $\lambda_{\text{amb}}(\sigma, \sigma_{\text{min}})$ at $\sigma = \sigma_{\text{min}}$. The first strategy clips the ambient pre-conditioning weight at a specified maximum value $\lambda_{\text{amb}}^{\text{MAX}}$, but still trains for $\sigma$ arbitrarily close to $\sigma_{\text{min}}$. The second strategy also specifies a maximum value, but imposes a buffer
\begin{equation}
    \sigma > \sqrt{1 + \frac{1}{\lambda_{\text{amb}}^{\text{MAX}}-1}}\sigma_{\text{min}}
\end{equation}
that restricts training to noise levels $\sigma$ such that $\lambda_{\text{amb}}(\sigma, \sigma_{\text{min}}) \leq \lambda_{\text{amb}}^{\text{MAX}}$. Clipping the ambient weight to $\lambda_{\text{amb}}^{\text{MAX}}=2.0$ minimally improves FID to 5.35, but clipping to 4.0 significantly worsens it to 5.69. Adding a buffer at $\lambda_{\text{amb}}^{\text{MAX}}=2.0$ slightly worsens FID to 5.40, but slackening the buffer to 4.0 minimally improves FID to 5.34. We opt for the buffering strategy in favor of the clipping strategy since performance appears convex in the buffer parameter, and because it obtains the best FID.

\section{Classifier annotation ablations}
\label{sec:classifier_training_ablations}
{\bf Balanced vs unbalanced data:} We ablate the impact of classifier training data on the setting of CIFAR-10 with 10\% clean data and 90\% corrupted data with gaussian blur with $\sigma_B=0.6$. When annotating with a classifier trained on the same unbalanced dataset we train the diffusion model on we obtained a best FID of 6.04, compared to the 5.34 obtained if we train on a balanced dataset.

{\bf Training iterations:} We ablate the impact of classifier training iterations on the setting of CIFAR-10 with 10\% clean data and 90\% corrupted data with JPEG compression at compression rate of 18\%, training the classifier with a balanced dataset. We report minute variations in the best FID, obtaining 6.50, 6.58, and 6.49 when training the classifier for 5e6, 10e6, and 15e6 images worth of training respectively.

\begin{table}[htp]
\centering
\caption{Comparison with baselines for training with data corrupted by Gaussian Blur at different levels. The dataset used in this experiment is CIFAR-10.}
\resizebox{\textwidth}{!}{
\begin{tabular}{l c c c c c}
\toprule
\textbf{Method} & \textbf{Clean (\%)} & \textbf{Corrupted (\%)} & \textbf{Parameters Values ($\sigma_B$)} & $\bm{\bar \sigma_{t_n}^{\min}}$ & \textbf{FID} \\
\midrule
\multirow{1}{*}{\textbf{Only Clean}} & \multirow{1}{*}{10} & \multirow{1}{*}{0} & - & - & $8.79$ \\
\midrule
\multirow{3}{*}{\textbf{No annotations}} 
& \multirow{3}{*}{10} & \multirow{3}{*}{90} & 1.0 & \multirow{4}{*}{0} & 45.32 \\
&  &  & 0.8 & & 28.26 \\
&  &  & 0.4 &  & 2.47 \\
\midrule
\multirow{3}{*}{\textbf{Single annotation}} 
& \multirow{3}{*}{10} & \multirow{3}{*}{90} & 1.0 & $2.32$ & 6.95 \\
&  &  & 0.8 & $1.89$ & 6.66 \\
&  &  & 0.4 & $0.00$ & 2.47 \\
\midrule
\multirow{3}{*}{\textbf{Classifier annotations}} 
& 10     & 90 & 1.0 & $2.84$ & 6.16 \\
& 10 & 90 & 0.8 & $1.93$ & 6.00 \\
& 10 & 90 & 0.4 & $0.22$ & 2.44 \\
\bottomrule
    \end{tabular}}
\label{table:single_annotation_comparisons}
\end{table}

\section{Training Details}
\subsection{Formation of the high-quality and low-quality sets.}
\label{sec:sets_formation}

In the theoretical problem setting we assumed the existence of a good set $S_G$ from the clean distribution and a bad set $S_B$ from the corrupted distribution. In practice, we do not actually possess these sets initially, but we can construct them so long as we have access to a measure of "quality". Given a function on images which tells us wether its good enough to generate or not e.g. CLIP-IQA quality \cite{clipiqa} greater than some threshold, we can define our good set $S_G$ as the good enough images and $S_B$ as the complement. From this point on we can apply the methodology of ambient-o as developed, either employing classifier annotations as in our pixel diffusion experiments, or fixed annotations as in our large scale ImageNet and text-to-image experiments.

\subsection{Datasets}
\label{sec:datasets}
\paragraph{CIFAR-10.} CIFAR-10 \cite{cifar10} consists of 60,000 32x32 images of ten classes (airplane, automobile, bird, cat, deer, dog, frog, horse, ship, and truck).
\paragraph{FFHQ.} FFHQ \cite{ffhq} consists of 70,000 512x512 images of faces from Flickr. We used the dataset at 64x64 resolution for our experiments.
\paragraph{AFHQ.} AFHQ \cite{afhq} consists of 5,653 images of cats, 5,239 images of dogs and 5,000 images of wildlife, for a total of 15,892 images. 
\paragraph{ImageNet.} ImageNet \cite{imagenet} consists of 1,281,167 images of variable resolution from 1000 classes.
\paragraph{Conceptual Captions.} Conceptual Captions \cite{cc12m} consists of 12M (image url, caption) pairs. %
\paragraph{Segment Anything.} Segment Anything \cite{sa1b} consists of 11.1M high-resolution images annotated with segmentation masks. Since the original dataset did not have real captions, we use the same LLaVA generated captions created by the MicroDiffusion \cite{Sehwag2024MicroDiT} paper.
\paragraph{JourneyDB.} JourneyDB consists of 4.4M synthetic image-caption pairs from Midjourney \cite{jdb}. 
\paragraph{DiffusionDB.} DiffusionDB consists of 14M synthetic image-caption pairs, mostly generated from Stable Diffusion models \cite{diffdb}. We use the same 10.7M quality-filtered subset created by the MicroDiffusion paper \cite{Sehwag2024MicroDiT}.

\subsection{Diffusion model training}

\paragraph{CIFAR-10.} We use the EDM \cite{karras2022elucidating} codebase as a reference to train class-conditional diffusion models on CIFAR-10. The architecture is a Diffusion U-Net \cite{ddim} with \textasciitilde55M paramemeters. We use the Adam optimizer \cite{adam} with learning rate $0.001$, batch size 512, and no weight decay. While the original EDM paper trained for $200 \times 10^6$ images worth of training, when training with corrupted data we saw best results around $20 \times 10^6$ images. On a single 8xV100 node we achieved a throughput of 0.8s per 1k images, for an average of 4.4h per training run.

\paragraph{FFHQ.} Same as for CIFAR-10, except learning was set to $2e-4$, we trained for a maximum of $100 \times 10^6$ images worth of training, and saw best results around $30 \times 10^6$ images worth.

\paragraph{AFHQ.} Same as FFHQ.

\paragraph{ImageNet.} We use the EDM2 \cite{Karras2024edm2} codebase as a reference to train class-conditional diffusion models on ImageNet. The architecture is a Diffusion U-Net \cite{ddim} with \textasciitilde125M paramemeters. We use the Adam optimizer \cite{adam} with reference learning rate $0.012$, batch size 2048, and no weight decay. Same as the original codebase, we trained for \textasciitilde2B worth of images. On 32 H200 GPUs, XS models took \textasciitilde3 days to train, while XXL models took \textasciitilde7 days.

\paragraph{MicroDiffusion.} We use the MicroDiffusion codebase \cite{Sehwag2024MicroDiT} as a reference to train text-to-image models on an academic budget. We follow their recipe exactly, changing only the standard denoising diffusion loss to the ambient diffusion loss. The architecture is a Diffusion Transformer \cite{diffusiontransformer} utilizing Mixture-of-Experiments (MoE) feedforward layers \cite{moe, moeog}, with \textasciitilde1.1B paramemeters. We use the AdamW optimizer \cite{adam} with reference learning rates $2.4e-4/8e-5/8e-5/8e-5$ for each of the four phases and batch size 2048 for all phases. On 8 H200 GPUs, training takes \textasciitilde2 days to train.

\subsection{Classifier training}

Classifier training is done using the same optimization recipe (optimizer, learning rate, batch size, etc.) as diffusion model training, except we change the architecture to an encoder-only "Half-Unet", simply by removing the decoder half of the original UNet architecture. The training of the classifier is substantially shorter compared to the diffusion training since classification is task is easier than generation.

\section{Additional Figures}

\begin{figure}[!htp]
    \centering
    \includegraphics[width=\linewidth]{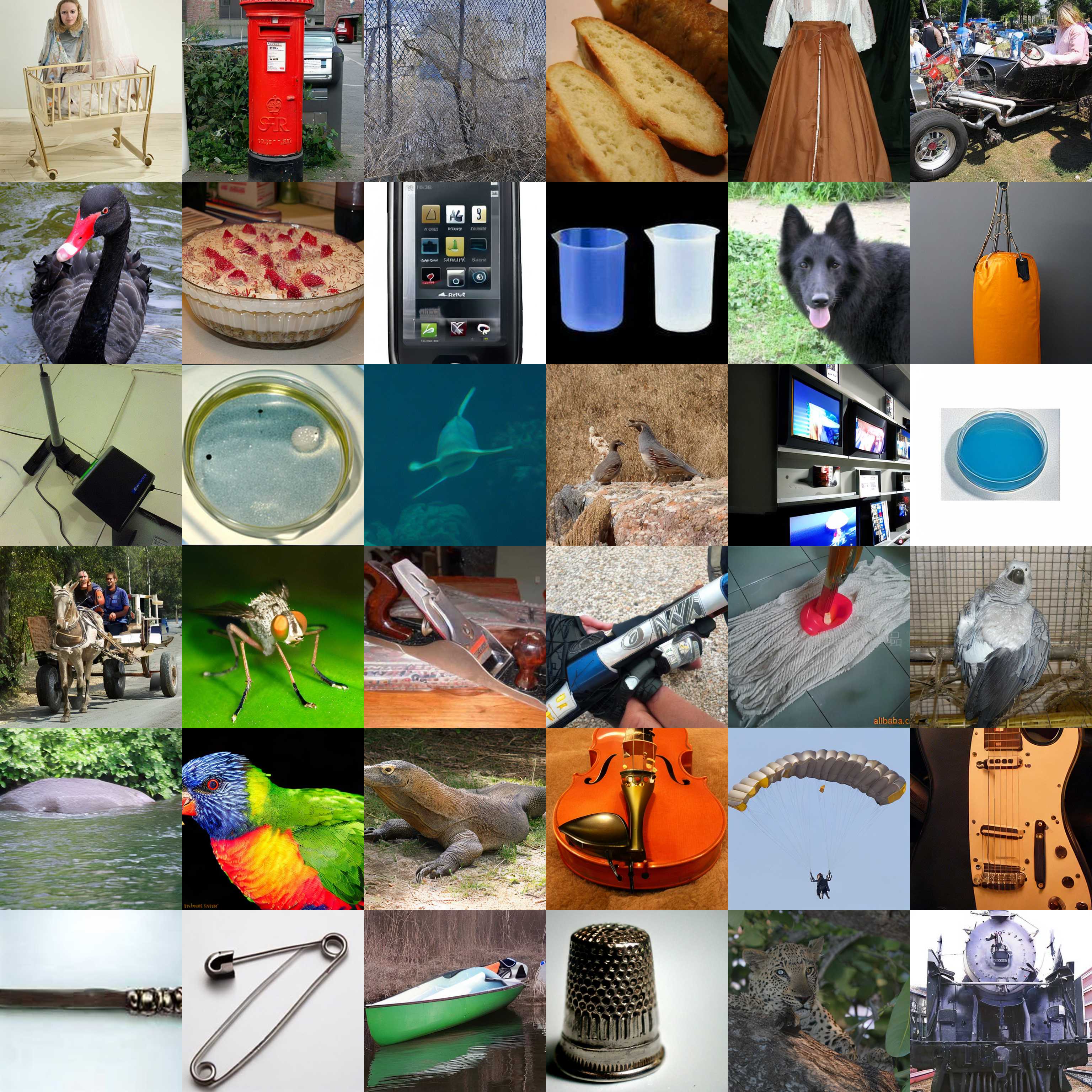}
    \caption{Uncurated generations from our Ambient-o \smallemojilogo \ XXL model trained on ImageNet.}
    \label{fig:xxl_model_generations}
\end{figure}

\begin{figure}
    \centering
    \includegraphics[width=\linewidth]{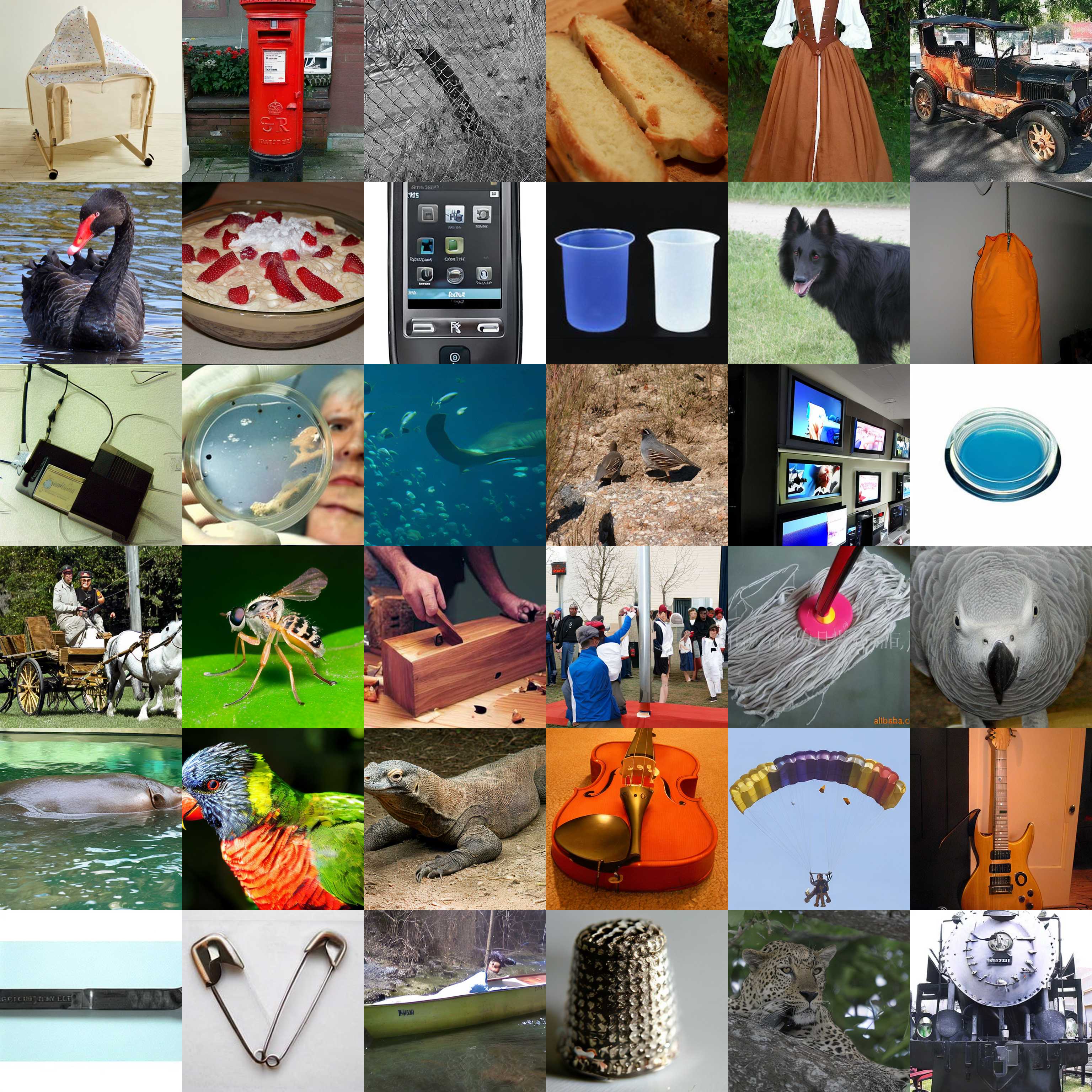}
    \caption{Uncurated generations from our Ambient-o+crops \smallemojilogo \ XXL model trained on ImageNet.}
    \label{fig:xxl_model_generations_plus_crops}
\end{figure}

\begin{figure}[t]
    \centering
    \includegraphics[width=\linewidth]{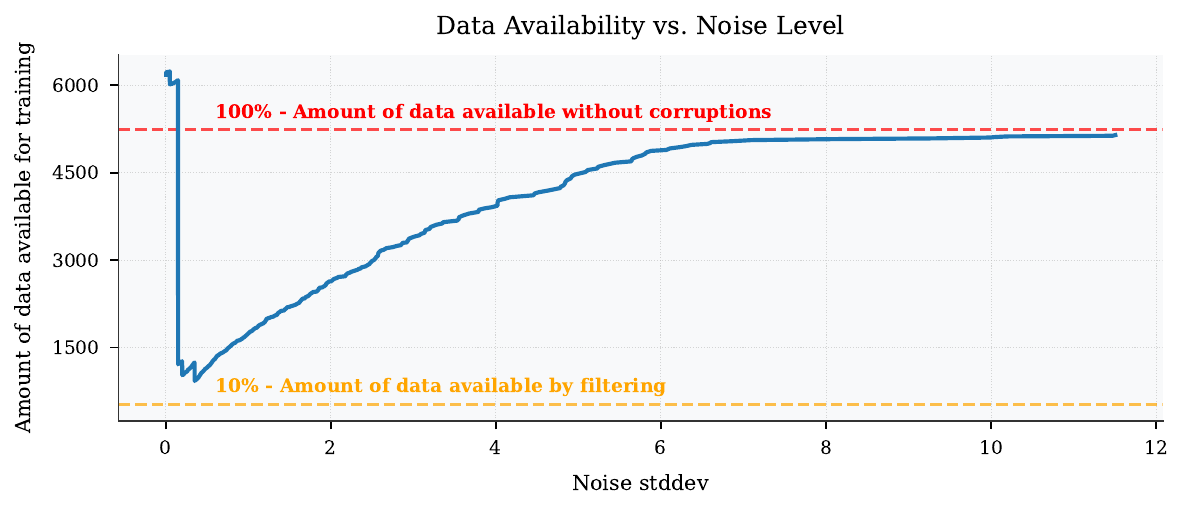}
    \caption{Amount of samples available at each noise level when training a generative model for dogs in the following setting: (1) we have 10\% of the dogs dataset uncorrupted, (2) we have the other 90\% of the dogs dataset corrupted with gaussian blur with $\sigma_B=0.6$, and (3) we have 100\% of the clean dataset of cats. At low noise levels, we can train on both the high quality dogs and a lot of the cats, resulting in $>100\%$ of samples available relative to the original dogs dataset size. As the noise level starts to increase, we stop being able to use to the out-of-distribution cat samples, but start gaining some blurry dog samples. As the noise level approaches the maximum all the blurry dogs become available for training, such that the amount of data available approaches 100\%. }%
    \label{fig:data-per-noise-level}
\end{figure}

\begin{figure}[t]
    \centering
    \includegraphics[width=.87\linewidth]{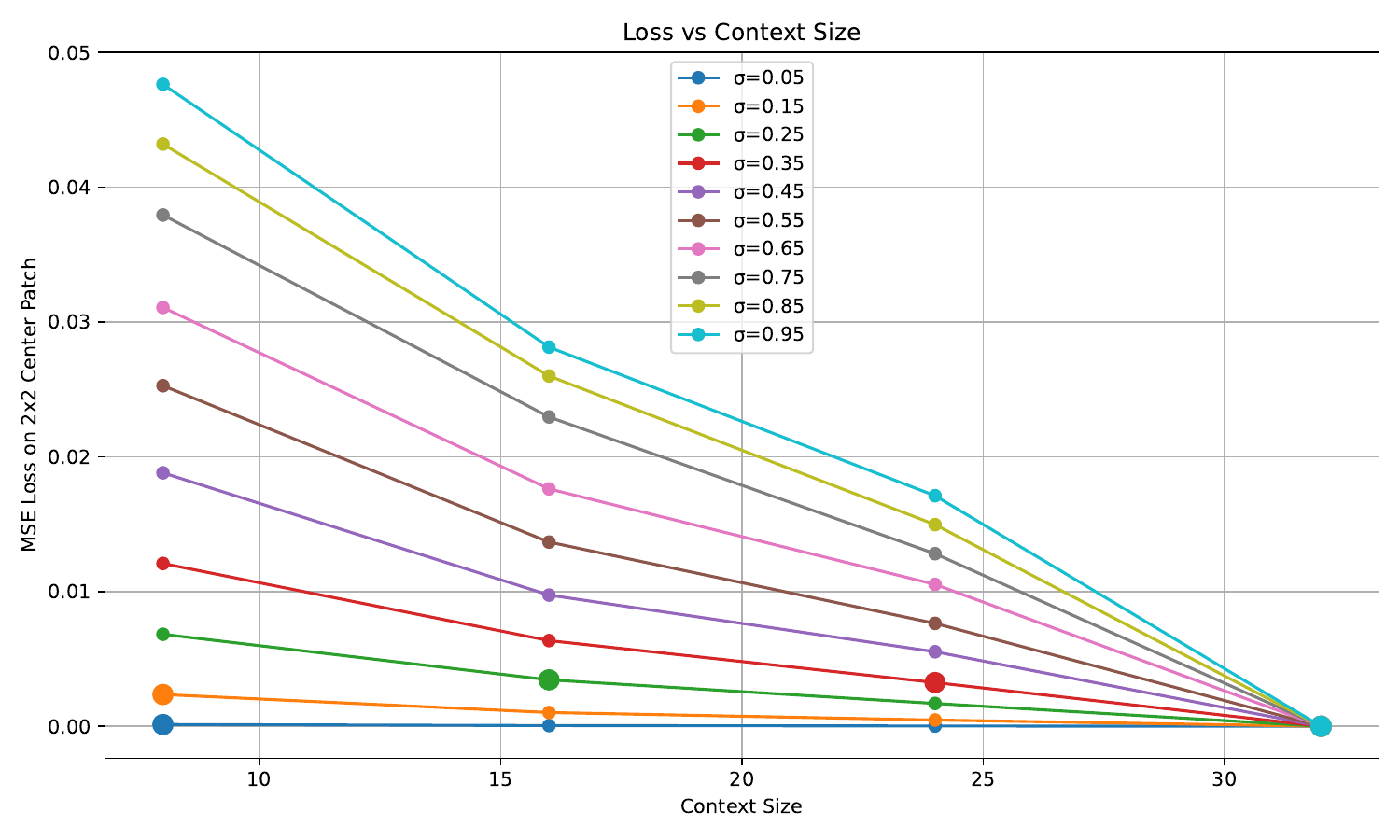}
    \caption{ImageNet-512x512: denoising loss of an optimally trained model, measured at $2\times2$ center patch, as we increase the context size given to the model (horizontal axis) and the noise level (different curves). As expected, for higher noise, more context is needed for optimal denoising. The large dot on each curve marks the point where the loss nearly plateaus.}
    \label{fig:patch-size-imagenet}
\end{figure}

\begin{figure}[b]
    \centering
    \includegraphics[width=\linewidth]{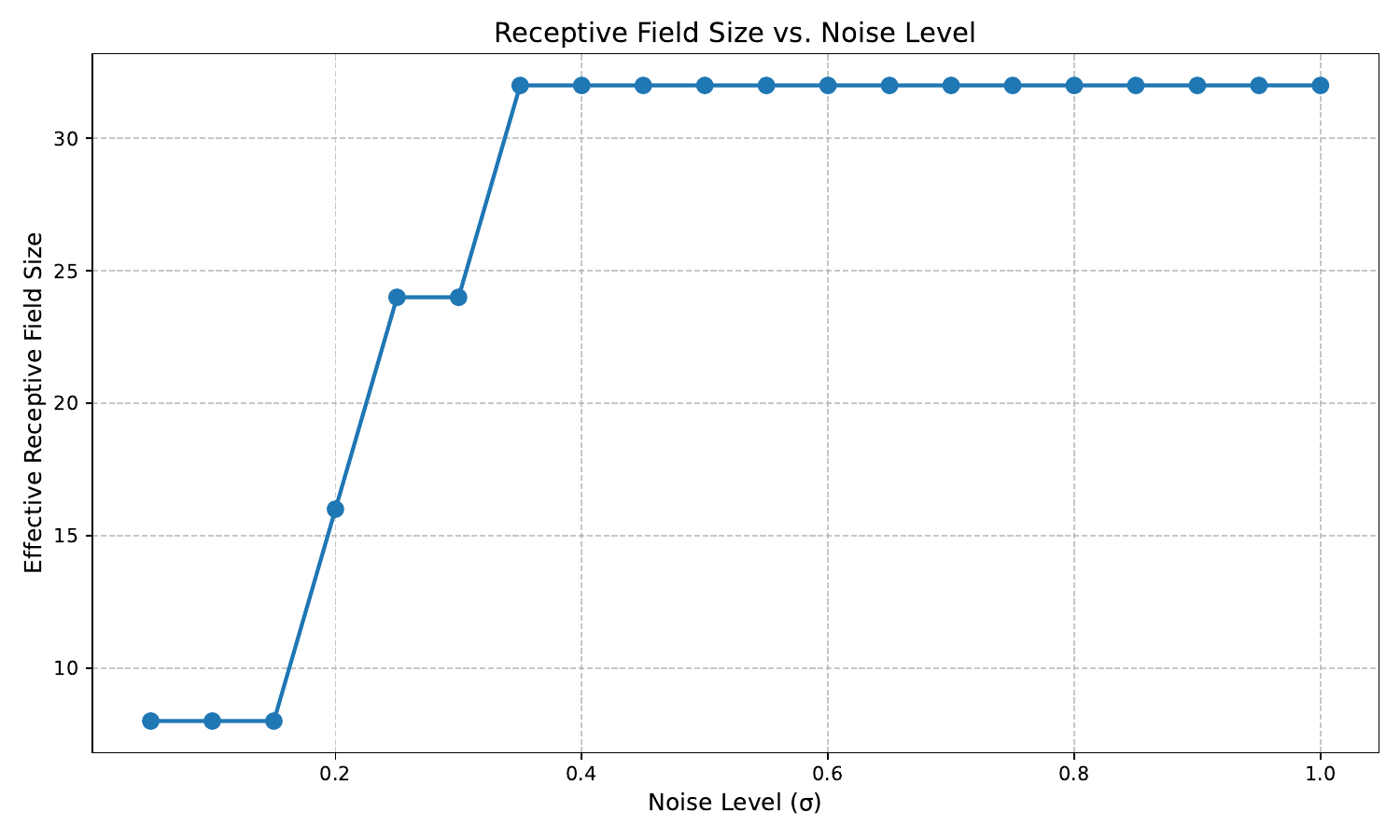}
    \caption{ImageNet-512x512: context size needed to be within $\epsilon=1e-3$ of the optimal loss for different noise levels. As expected, for higher noise, more context is needed for optimal denoising.}
    \label{fig:noise-to-context-imagenet}
\end{figure}

\begin{figure}[b]
    \centering
    \includegraphics[width=.87\linewidth]{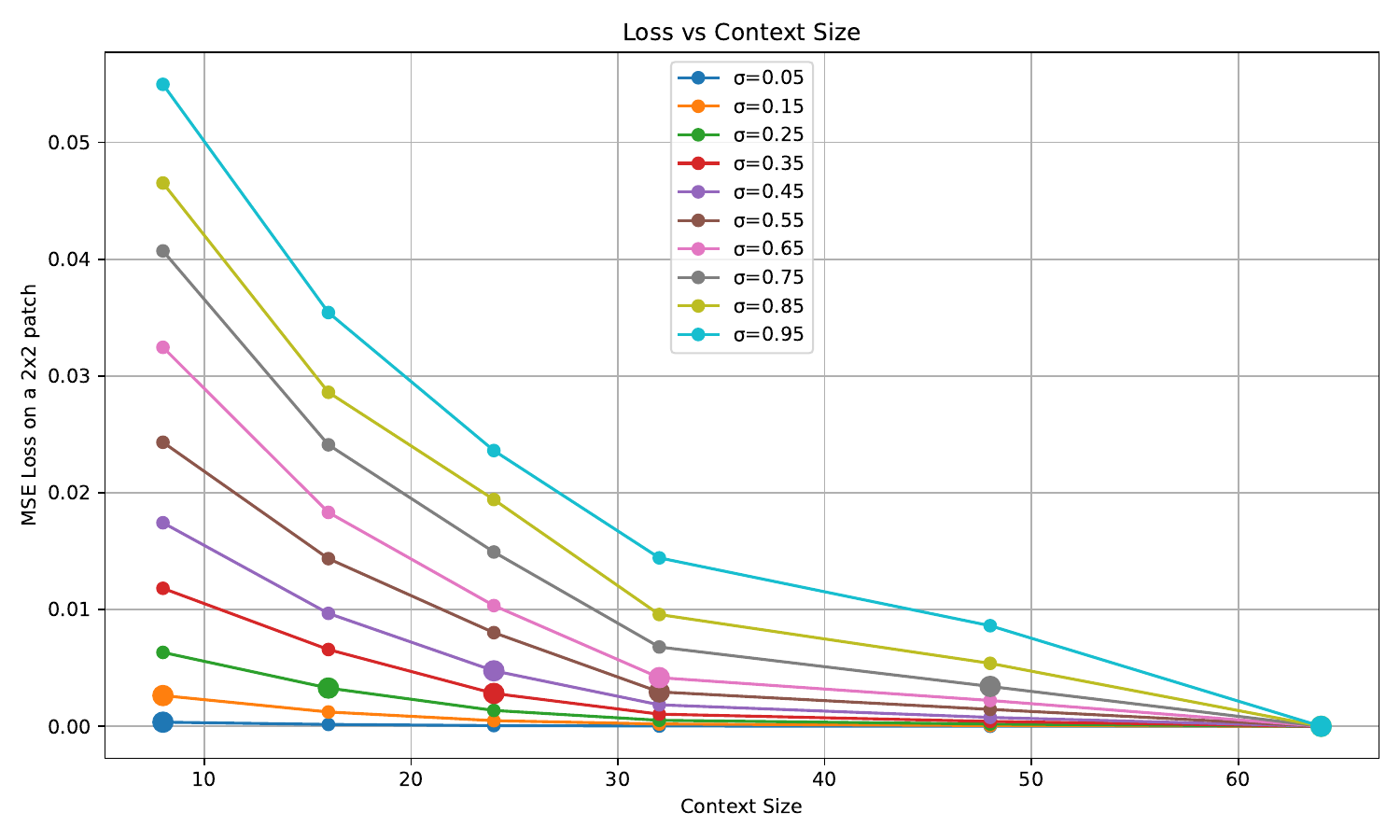}
    \caption{FFHQ: denoising loss of an optimally trained model, measured at $2\times2$ center patch, as we increase the context size given to the model (horizontal axis) and the noise level (different curves). As expected, for higher noise, more context is needed for optimal denoising. The large dot on each curve marks the point where the loss nearly plateaus.}
    \label{fig:patch-size-ffhq}
\end{figure}

\begin{figure}[b]
    \centering
    \includegraphics[width=\linewidth]{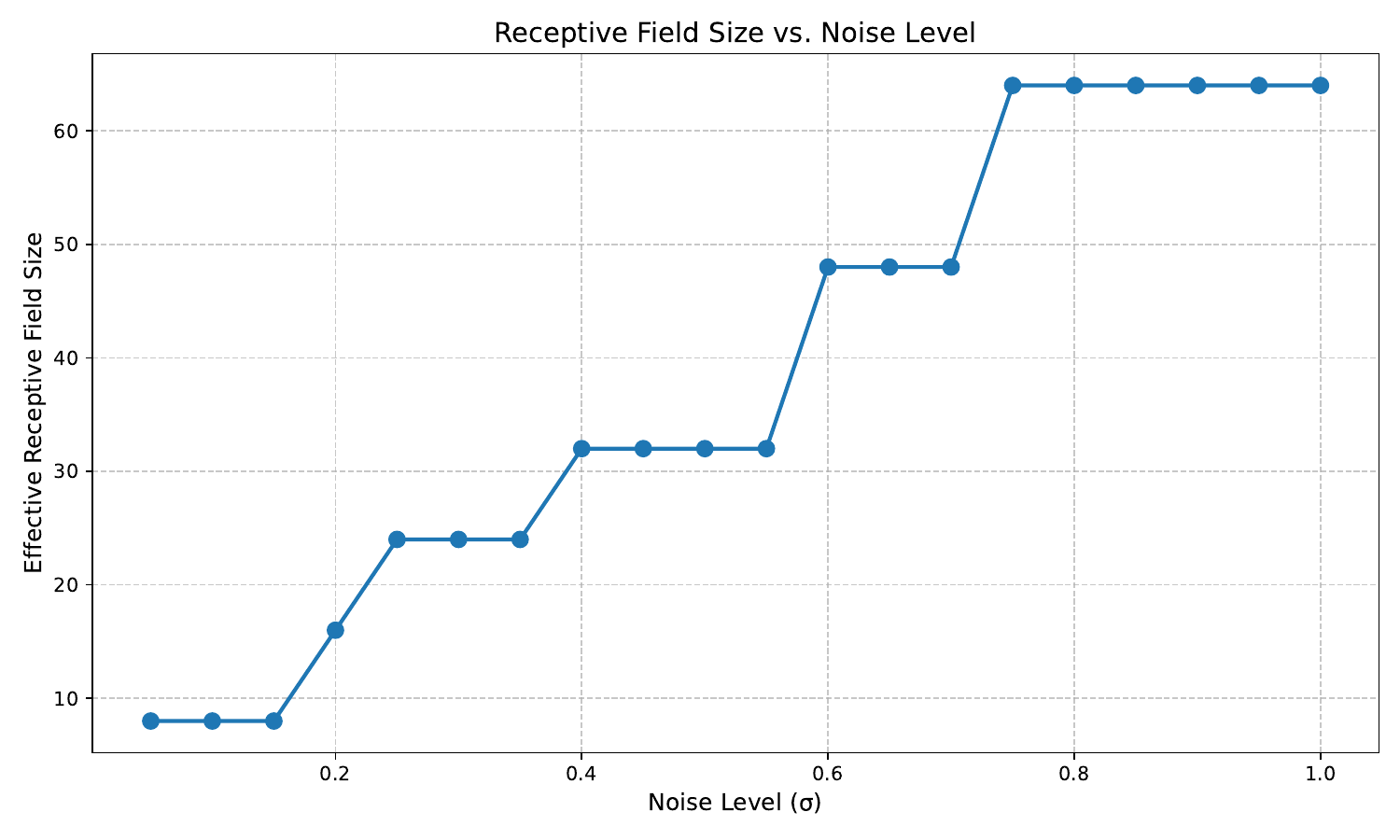}
    \caption{FFHQ: context size needed to be within $\epsilon=1e-3$ of the optimal loss for different noise levels. As expected, for higher noise, more context is needed for optimal denoising.}
    \label{fig:noise-to-context-ffhq}
\end{figure}

\begin{figure}[hbp]
    \centering
    \begin{subfigure}[b]{0.45\linewidth}
        \centering
        \includegraphics[width=\linewidth]{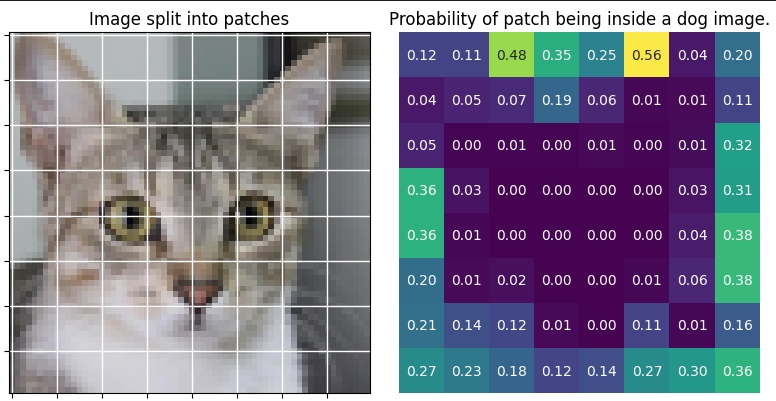}
        \caption{Cat image and classification probabilities over patches.}
        \label{fig:bad_cat}
    \end{subfigure}
    \hfill
    \begin{subfigure}[b]{0.45\linewidth}
        \centering
        \includegraphics[width=\linewidth]{figures/ood/good_cat.jpg}
        \caption{Cat image and classification probabilities over patches.}
        \label{fig:good_cat}
    \end{subfigure}
    \caption{Two examples of cats from the AFHQ dataset. We partition each cat into non overlapping patches and we compute the probabilities of the patch belonging to an image of a dog using a cats vs dogs classifier trained on patches. The cat on the right has a lot more patches that could belong to a dog image according to the classifier, possibly due to the color or the texture of the fur.}
    \label{fig:good_and_bad_cat}
\end{figure}

\begin{figure}[hbp]
    \centering
    \begin{subfigure}[b]{0.3\linewidth}
        \centering
        \includegraphics[width=\linewidth]{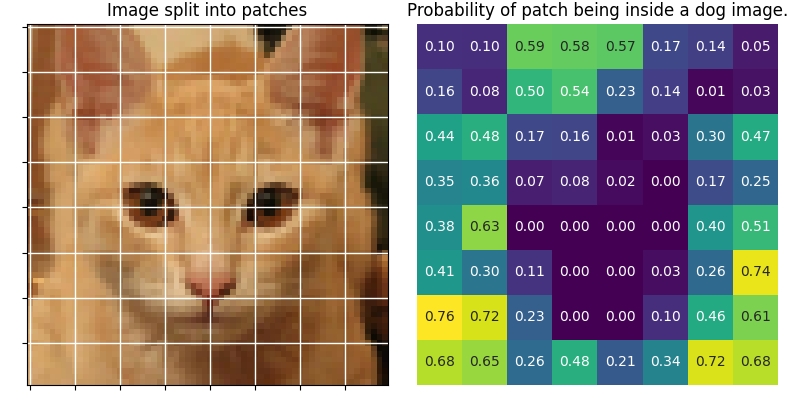}
        \caption{Cat annotated by a cats vs. dogs classifier that operates with crops of size $8$.}
        \label{fig:ood_cat8}
    \end{subfigure}
    \hfill
    \begin{subfigure}[b]{0.3\linewidth}
        \centering
        \includegraphics[width=\linewidth]{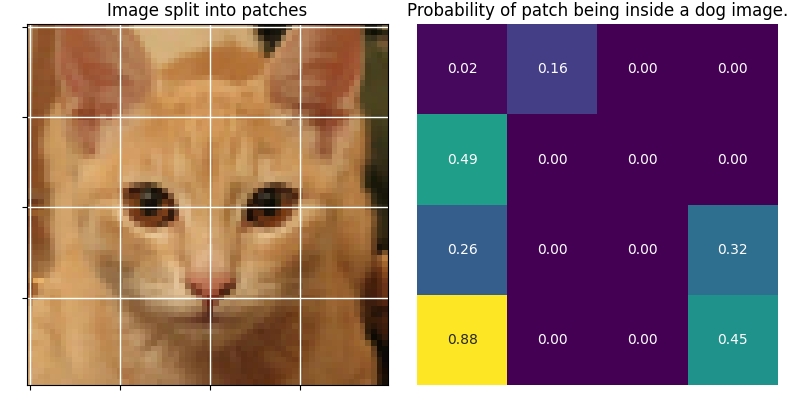}
        \caption{Cat annotated by a cats vs. dogs classifier that operates with crops of size $16$.}
        \label{fig:ood_cat16}
    \end{subfigure}
    \hfill
    \begin{subfigure}[b]{0.3\linewidth}
        \centering
        \includegraphics[width=\linewidth]{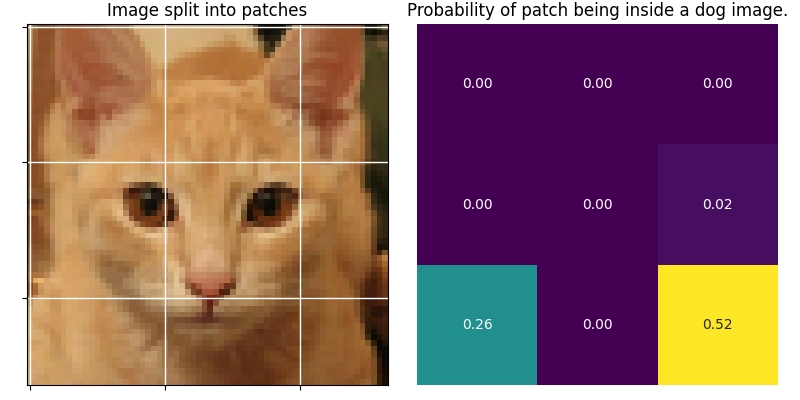}
        \caption{Cat annotated by a cats vs. dogs classifier that operates with crops of size $24$.}
        \label{fig:ood_cat24}
    \end{subfigure}
    \caption{Patch-based annotations of a cat image from AFHQ using cats vs. dogs classifiers trained on different patch sizes.}
    \label{fig:ood_batches_comparison}
\end{figure}

\begin{figure}[htp]
    \centering
    \includegraphics[width=\linewidth]{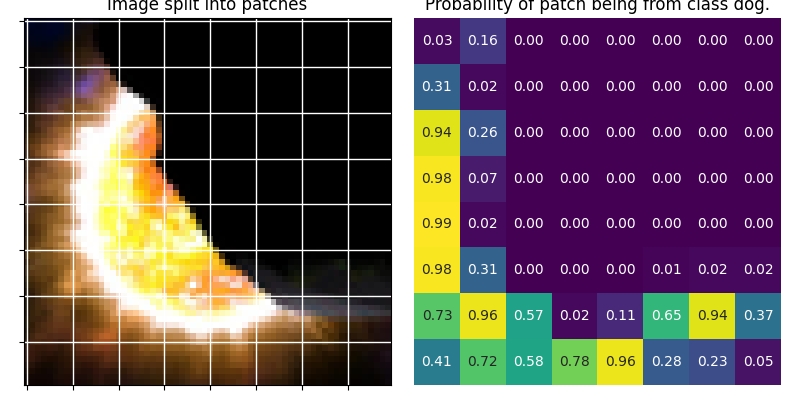}
    \caption{Patch level probabilities for dogness in a synthetic image (procedural program). The cat has more useful patches than this non-realistic procedural program.}
    \label{fig:good_synthetic_main}
\end{figure}

\begin{figure}[hbp]
    \centering
    \begin{subfigure}[b]{0.45\linewidth}
        \centering
        \includegraphics[width=\linewidth]{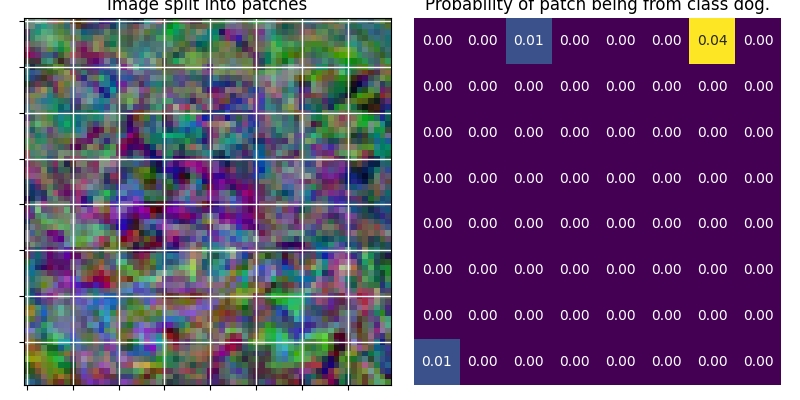}
        \caption{Synthetic image and classification probabilities over patches.}
        \label{fig:bad_synthetic}
    \end{subfigure}
    \hfill
    \begin{subfigure}[b]{0.45\linewidth}
        \centering
        \includegraphics[width=\linewidth]{figures/ood/synthetic_dog.jpg}
        \caption{Synthetic image and classification probabilities over patches.}
        \label{fig:good_synthetic}
    \end{subfigure}
    \caption{Two examples of procedurally generated images. We partition each image into non overlapping patches and we compute the probabilities of the patch belonging to an image of a dog using a synthetic image vs dogs classifier trained on patches. The image on the right has a lot more patches that could belong to a dog image according to the classifier, possibly due to the color or the texture.}
    \label{fig:good_and_bad_synthetic}
\end{figure}

\begin{figure}[hbp]
    \centering
    \begin{subfigure}[b]{0.45\linewidth}
        \centering
        \includegraphics[width=\linewidth]{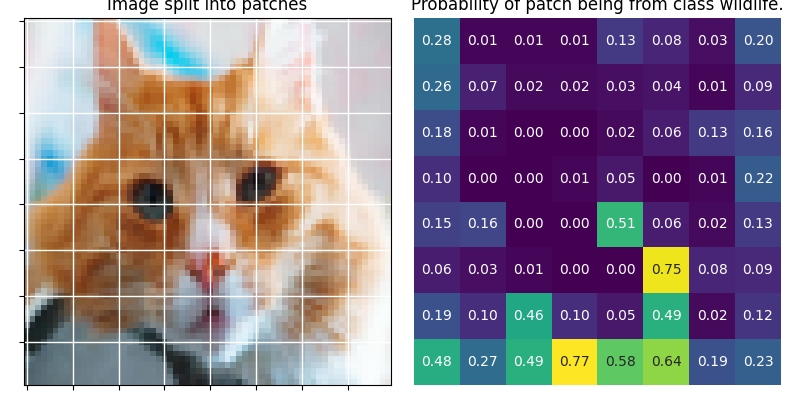}
        \caption{Cat image and classification probabilities over patches.}
        \label{fig:bad_wildlife}
    \end{subfigure}
    \hfill
    \begin{subfigure}[b]{0.45\linewidth}
        \centering
        \includegraphics[width=\linewidth]{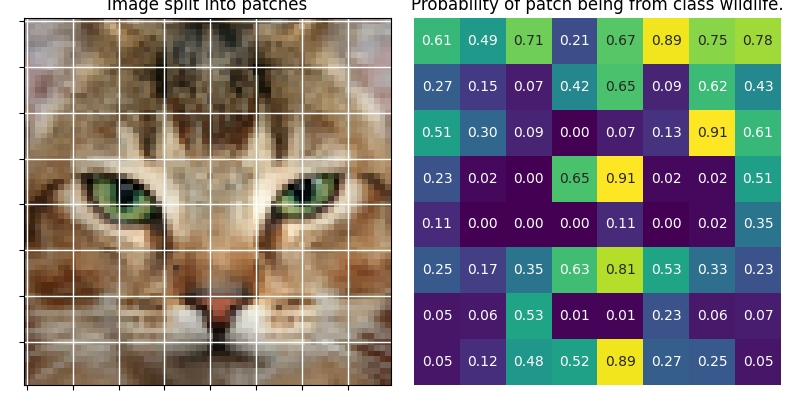}
        \caption{Cat image and classification probabilities over patches.}
        \label{fig:good_wildlife}
    \end{subfigure}
    \caption{Two examples of cat images. We partition each image into nonoverlapping patches and we compute the probabilities of the patch belonging to an image of wildlife using a cats vs wildlife classifier trained on patches. The image on the right has a lot more patches that could belong to a wildlife image according to the classifier, possibly due to the color or the texture.}
    \label{fig:good_and_bad_wildlife}
\end{figure}

\begin{figure}[hbp]
    \centering
    \begin{subfigure}[b]{0.45\linewidth}
        \centering
        \includegraphics[width=\linewidth]{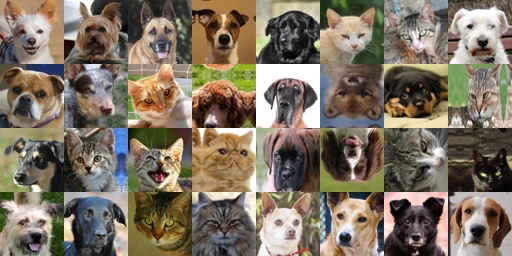}
        \caption{Example batch.}
        \label{fig:ood_example_batch}
    \end{subfigure}
    \hfill
    \begin{subfigure}[b]{0.45\linewidth}
        \centering
        \includegraphics[width=\linewidth]{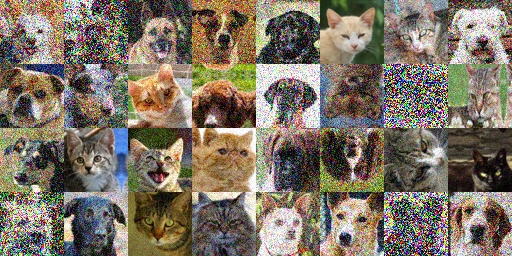}
        \caption{Noisy batch.}
        \label{fig:ood_noisy_batch}
    \end{subfigure}
    \caption{Example batch.}
\end{figure}

\begin{figure}[hbp]
    \centering
    \begin{subfigure}[b]{0.45\linewidth}
        \centering
        \includegraphics[width=\linewidth]{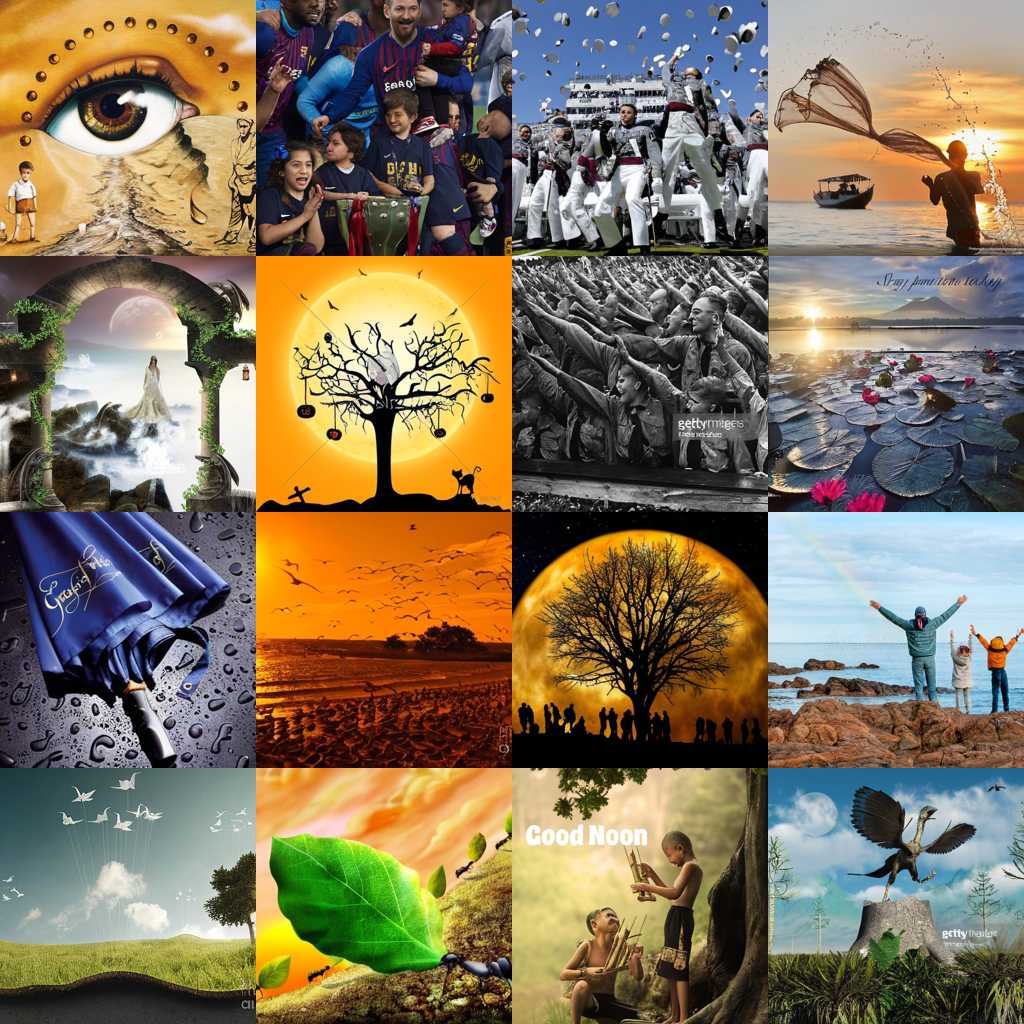}
        \caption{Highest quality images from CC12M according to CLIP.}
        \label{fig:top_cc12m}
    \end{subfigure}
    \hfill
    \begin{subfigure}[b]{0.45\linewidth}
        \centering
        \includegraphics[width=\linewidth]{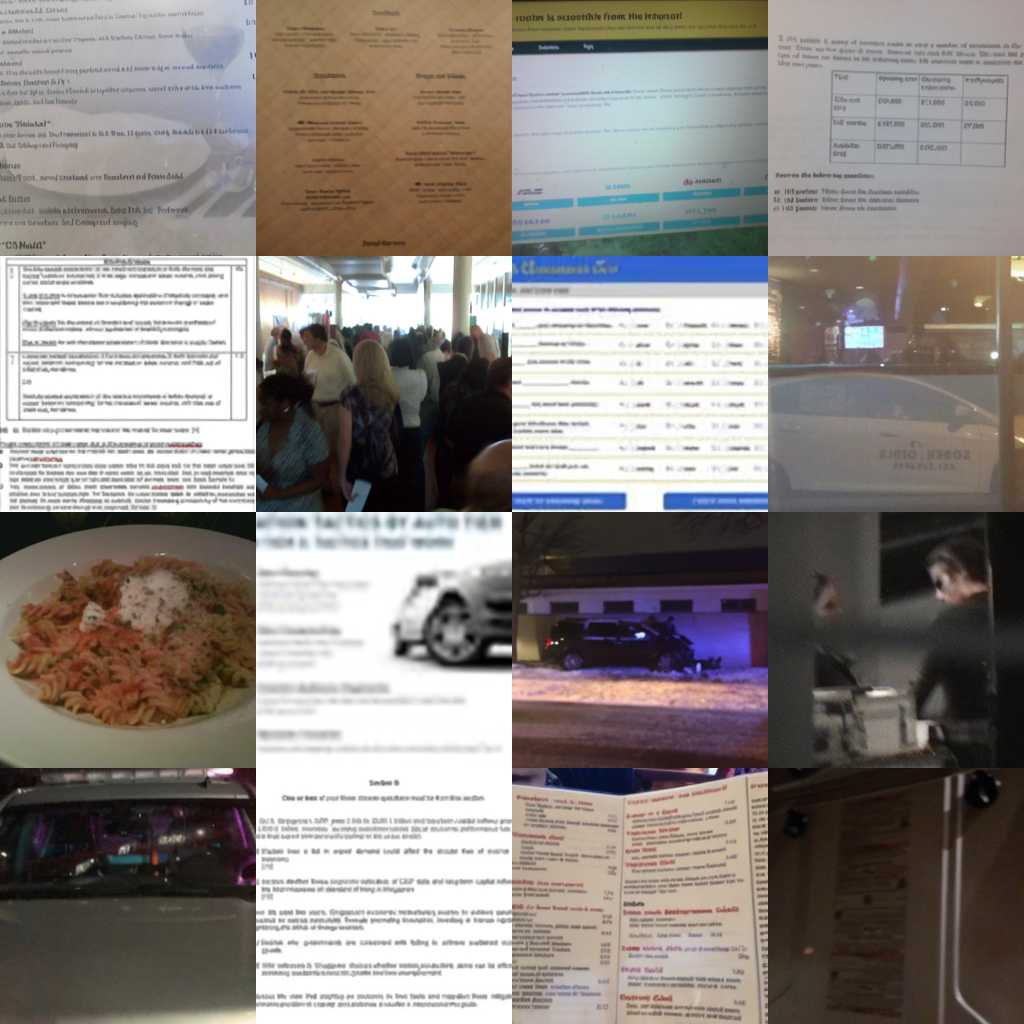}
        \caption{Lowest quality images from CC12M according to CLIP.}
        \label{fig:bottom_cc12m}
    \end{subfigure}
    \caption{CLIP annotations for quality of images from CC12M.}
    \label{fig:top_and_bottom_cc12m}
\end{figure}

\begin{figure}[hbp]
    \centering
    \begin{subfigure}[b]{0.45\linewidth}
        \centering
            \includegraphics[width=\linewidth]{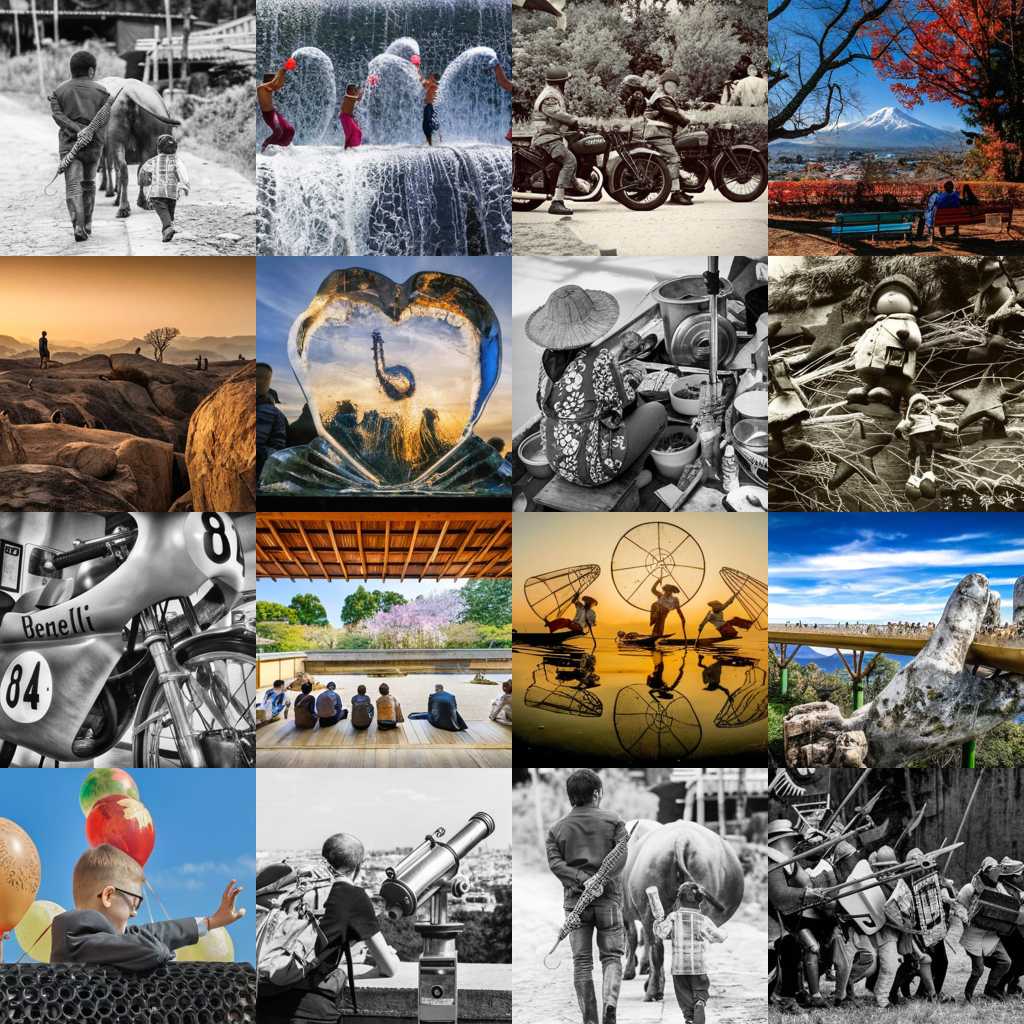}
        \caption{Highest quality images from SA1B according to CLIP.}
        \label{fig:top_sa1b}
    \end{subfigure}
    \hfill
    \begin{subfigure}[b]{0.45\linewidth}
        \centering
        \includegraphics[width=\linewidth]{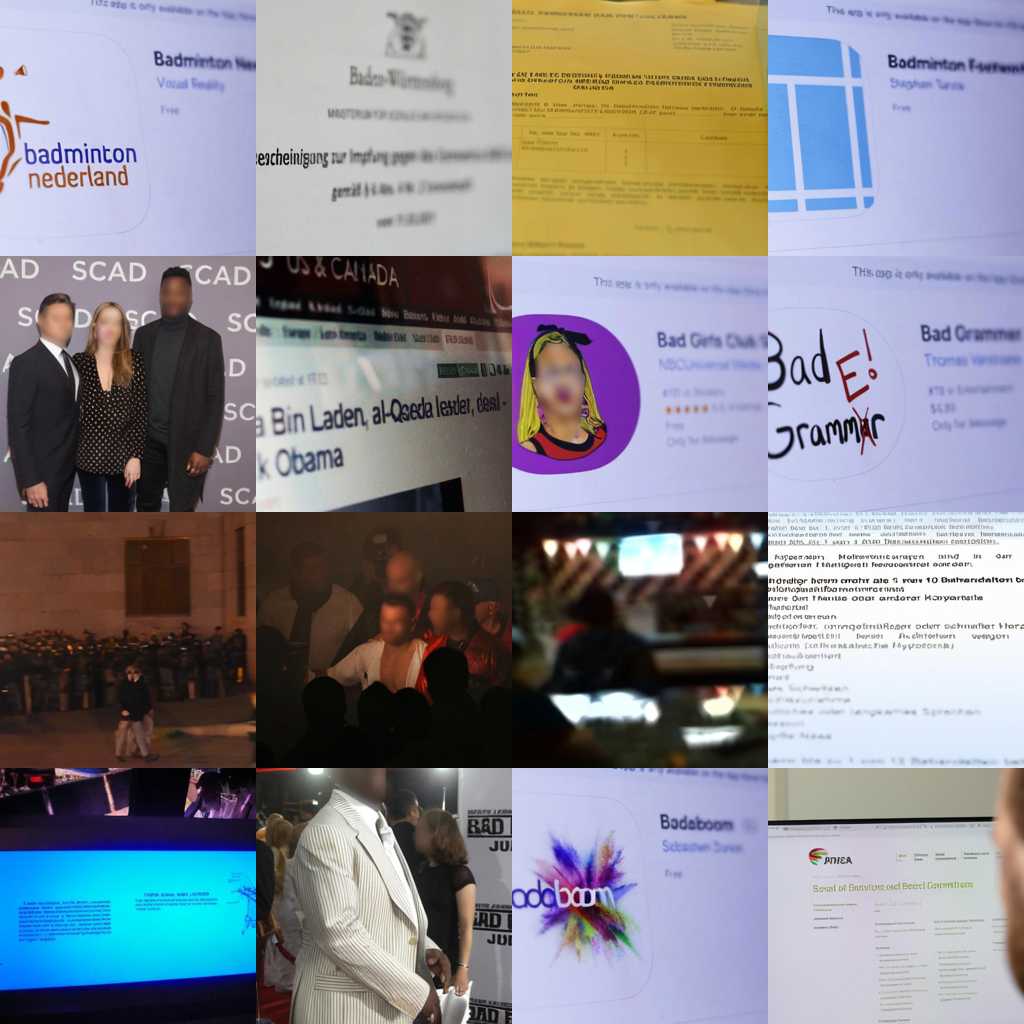}
        \caption{Lowest quality images from SA1B according to CLIP.}
        \label{fig:bottom_sa1b}
    \end{subfigure}
    \caption{CLIP annotations for quality of images from SA1B.}
    \label{fig:top_and_bottom_sa1b}
\end{figure}

\begin{figure}[hbp]
    \centering
    \begin{subfigure}[b]{0.45\linewidth}
        \centering
        \includegraphics[width=\linewidth]{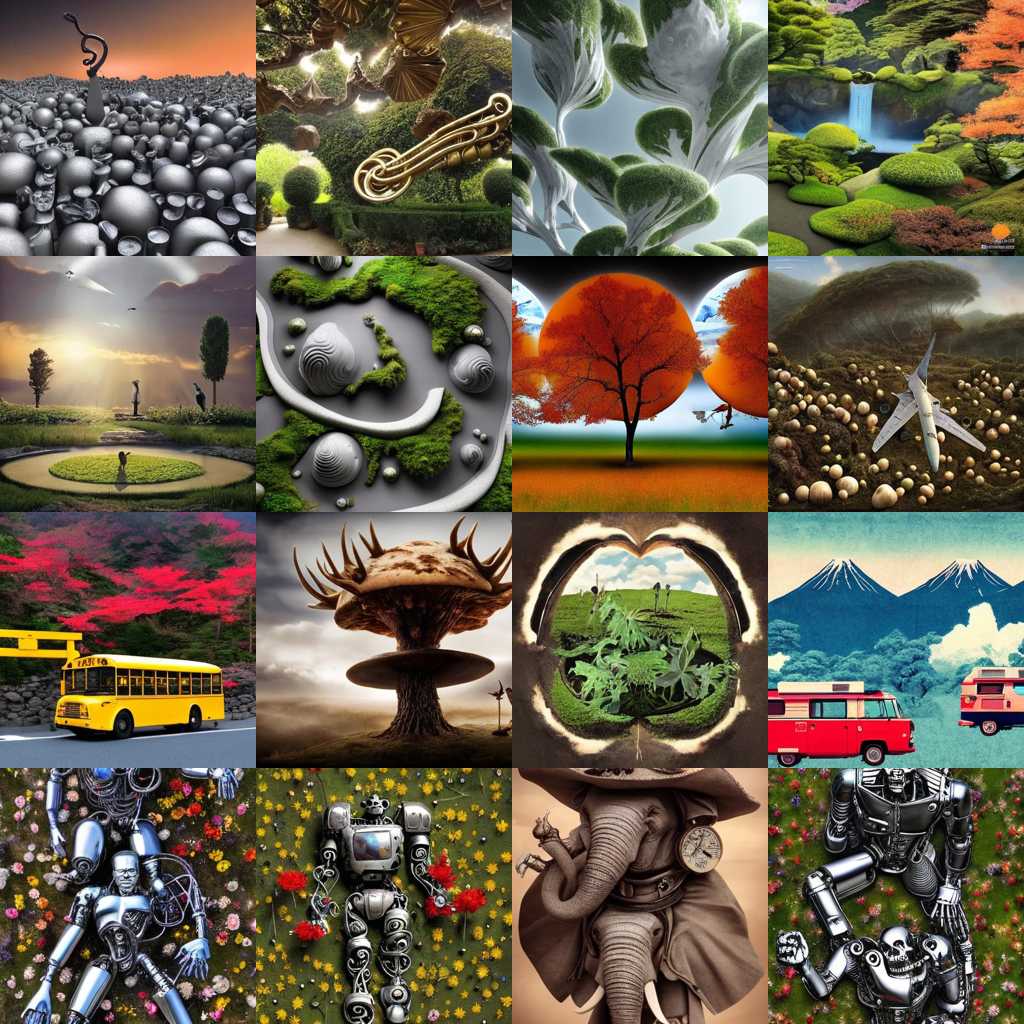}
        \caption{Highest quality images from DiffDB according to CLIP.}
        \label{fig:top_diffdb}
    \end{subfigure}
    \hfill
    \begin{subfigure}[b]{0.45\linewidth}
        \centering
        \includegraphics[width=\linewidth]{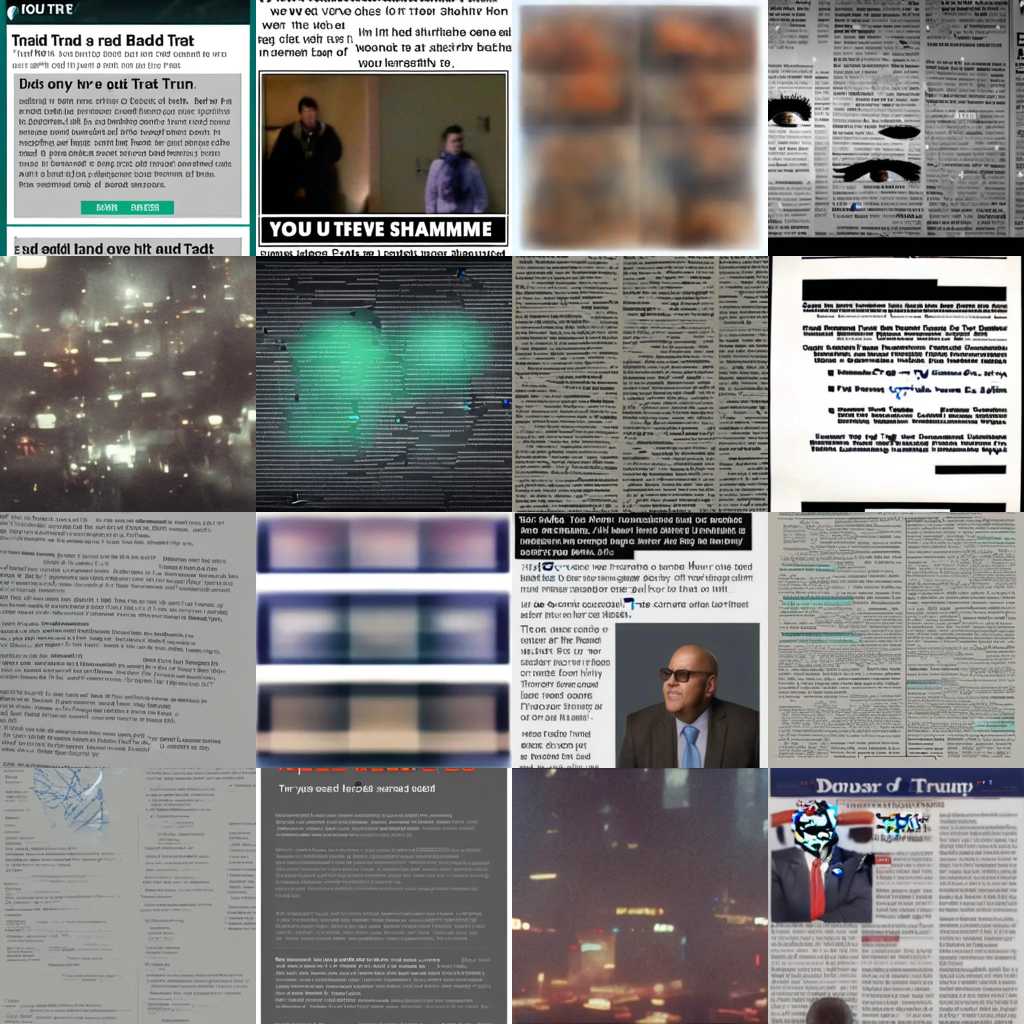}
        \caption{Lowest quality images from DiffDB according to CLIP.}
        \label{fig:bottom_diffdb}
    \end{subfigure}
    \caption{CLIP annotations for quality of images from DiffDB.}
    \label{fig:top_and_bottom_diffdb}
\end{figure}

\begin{figure}[hbp]
    \centering
    \begin{subfigure}[b]{0.45\linewidth}
        \centering
        \includegraphics[width=\linewidth]{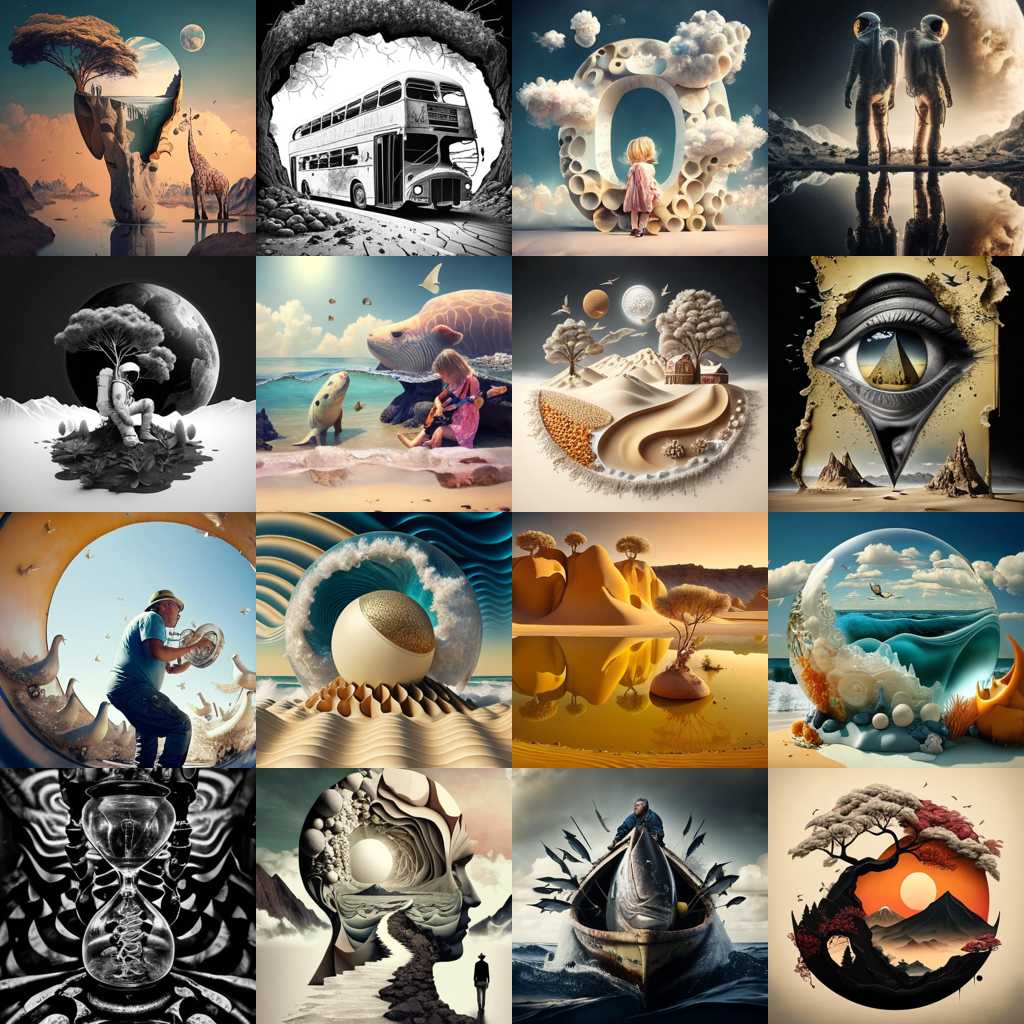}
        \caption{Highest quality images from JDB according to CLIP.}
        \label{fig:top_jdb}
    \end{subfigure}
    \hfill
    \begin{subfigure}[b]{0.45\linewidth}
        \centering
        \includegraphics[width=\linewidth]{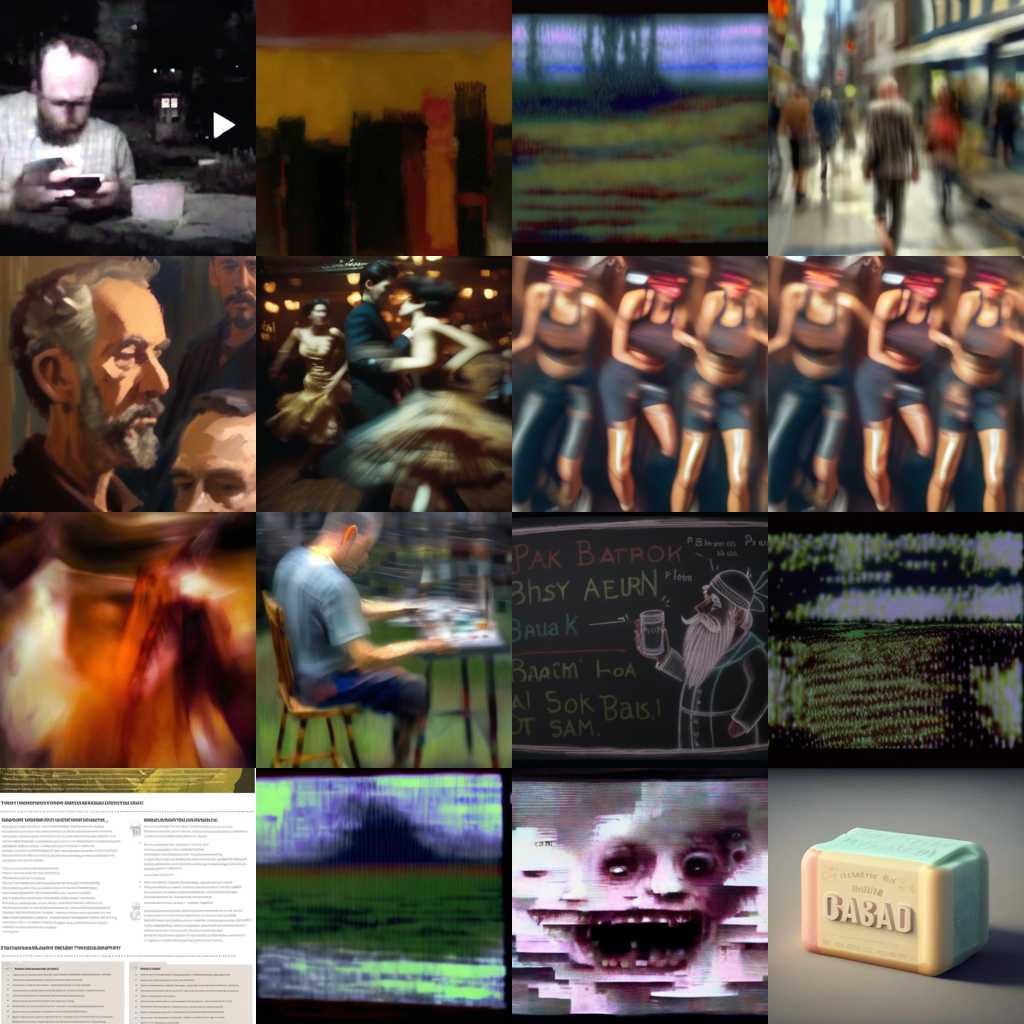}
        \caption{Lowest quality images from JDB according to CLIP.}
        \label{fig:bottom_jdb}
    \end{subfigure}
    \caption{CLIP annotations for quality of images from JDB.}
    \label{fig:top_and_bottom_jdb}
\end{figure}

\begin{figure}
    \centering
    \includegraphics[width=.9\linewidth]{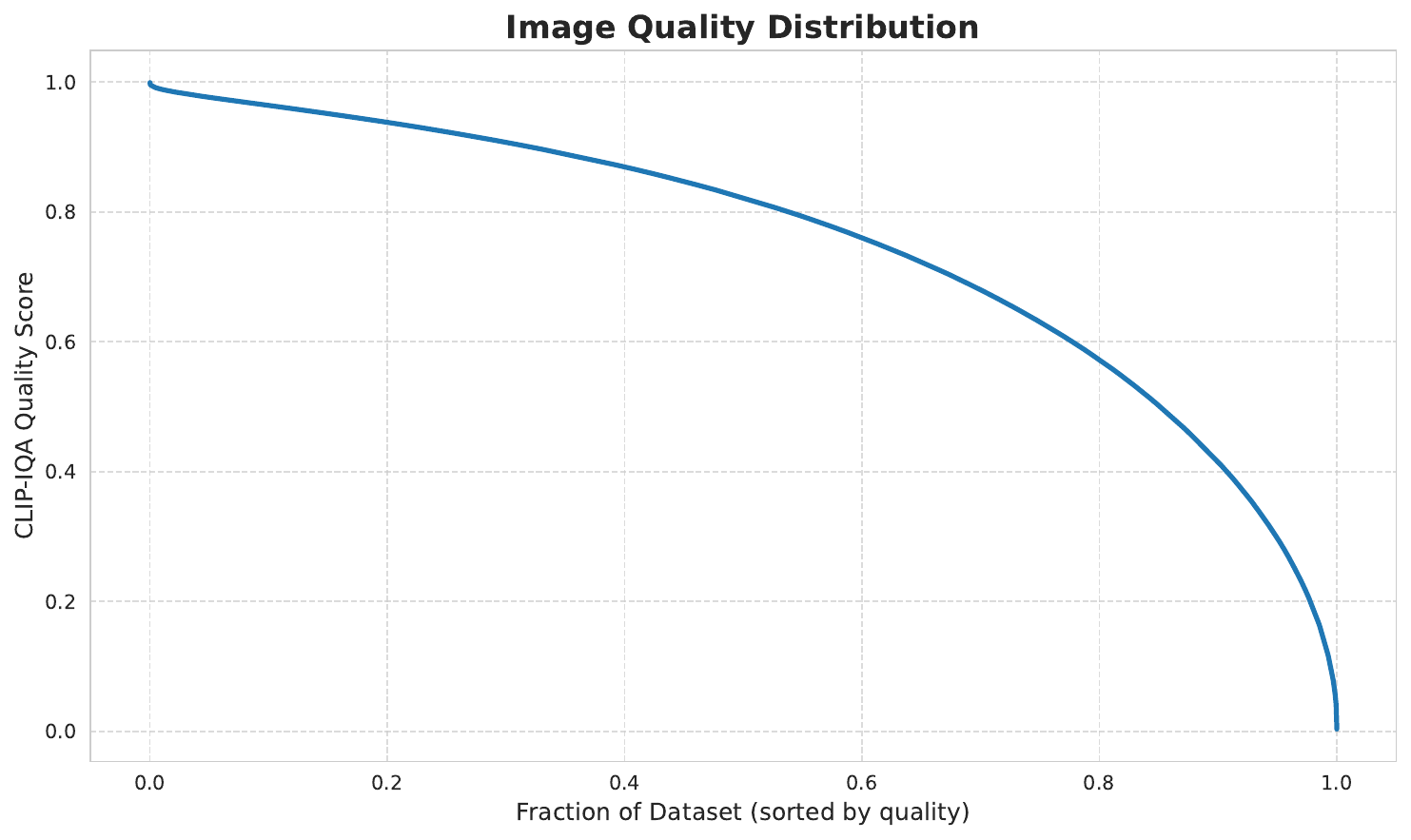}
    \caption{Distribution of image qualities according to CLIP for ImageNet-512.}
    \label{fig:clip_qualities_distribution_imagenet}
\end{figure}

\end{document}